\DeclareFontFamily{U}{mathx}{\hyphenchar\font45}
\DeclareFontShape{U}{mathx}{m}{n}{
      <5> <6> <7> <8> <9> <10>
      <10.95> <12> <14.4> <17.28> <20.74> <24.88>
      mathx10
      }{}
\DeclareSymbolFont{mathx}{U}{mathx}{m}{n}
\DeclareMathSymbol{\bigtimes}{1}{mathx}{"91}
\definecolor{DarkRed}{rgb}{0.5,0.1,0.1}
\definecolor{DarkBlue}{rgb}{0.1,0.1,0.5}
\definecolor{ForestGreen}{rgb}{0.1333,0.5451,0.1333}
\definecolor{Red}{rgb}{0.9,0,0}
\crefname{property}{property}{Property}
\crefname{equation}{eq}{Eq}
\def\BState{\State\hskip-\ALG@thistlm}
\newtheorem{theorem}{Theorem}
\newtheorem{lemma}{Lemma}[section]
\newtheorem{proposition}[lemma]{Proposition}
\newtheorem{claim}[lemma]{Claim}
\newtheorem{fact}[lemma]{Fact}
\newtheorem{definition}[lemma]{Definition}
\newtheorem{problem}{Problem}
\newtheorem*{claim*}{Claim}
\newtheorem*{proposition*}{Proposition}
\newtheorem*{lemma*}{Lemma}
\newtheorem*{problem*}{Problem}
\crefname{lemma}{Lemma}{Lemmas}
\crefname{claim}{Claim}{Claims}
\newtheorem{mdresult}{Result}
\newenvironment{result}{\begin{mdframed}[backgroundcolor=lightgray!40,topline=false,rightline=false,leftline=false,bottomline=false,innertopmargin=2pt]\begin{mdresult}}{\end{mdresult}\end{mdframed}}
\newtheorem{remark}[lemma]{Remark}
\newtheoremstyle{restate}{}{}{\itshape}{}{\bfseries}{~(restated).}{.5em}{\thmnote{#3}}
\theoremstyle{restate}
\newtheorem*{restate}{}
\patchcmd{\ALG@step}{\addtocounter{ALG@line}{1}}{\refstepcounter{ALG@line}}{}{}
\newcommand{\ALG@lineautorefname}{Line}
\renewcommand{\qed}{\nobreak \ifvmode \relax \else
      \ifdim\lastskip<1.5em \hskip-\lastskip
      \hskip1.5em plus0em minus0.5em \fi \nobreak
      \vrule height0.75em width0.5em depth0.25em\fi}
\newcommand{\tvd}[2]{\ensuremath{\norm{#1 - #2}_{\mathrm{tvd}}}}
\newcommand{\Ot}{\ensuremath{\widetilde{O}}}
\newcommand{\eps}{\ensuremath{\varepsilon}}
\newcommand{\Bracket}[1]{\Big[#1\Big]}
\newcommand{\bracket}[1]{\left[#1\right]}
\newcommand{\paren}[1]{\ensuremath{\left(#1\right)}\xspace}
\newcommand{\card}[1]{\left\lvert{#1}\right\rvert}
\newcommand{\norm}[1]{\ensuremath{\|#1\|}}
\newcommand{\expect}[1]{\Exp\bracket{#1}}
\newcommand{\var}[1]{\textnormal{Var}\bracket{#1}}
\newcommand{\cov}[1]{\textnormal{Cov}\bracket{#1}}
\newcommand{\set}[1]{\ensuremath{\left\{ #1 \right\}}}
\newcommand{\poly}{\mbox{\rm poly}}
\newcommand{\polylog}{\mbox{\rm  polylog}}
\newcommand{\alg}{\ensuremath{\mathcal{A}}\xspace}
\DeclareMathOperator*{\Exp}{\ensuremath{{\mathbb{E}}}}
\DeclareMathOperator*{\Prob}{\ensuremath{\textnormal{Pr}}}
\renewcommand{\Pr}{\Prob}
\newcommand{\Ex}{\Exp}
\newcommand{\etal}{{\it et al.\,}}
\newenvironment{tbox}{\begin{tcolorbox}[
		enlarge top by=5pt,
		enlarge bottom by=5pt,
		 breakable,
		 boxsep=0pt,
                  left=4pt,
                  right=4pt,
                  top=10pt,
                  arc=0pt,
                  boxrule=1pt,toprule=1pt,
                  colback=white
                  ]
	}
{\end{tcolorbox}}
\newcommand{\event}{\ensuremath{\mathcal{E}}}
\newcommand{\rv}[1]{\ensuremath{{\mathsf{#1}}}\xspace}
\newcommand{\rA}{\rv{A}}
\newcommand{\rB}{\rv{B}}
\newcommand{\rC}{\rv{C}}
\newcommand{\rD}{\rv{D}}
\newcommand{\supp}[1]{\ensuremath{\textnormal{\text{supp}}(#1)}}
\newcommand{\distribution}[1]{\ensuremath{\textnormal{dist}(#1)}\xspace}
\newcommand{\distributions}[2]{\ensuremath{\textnormal{dist}_{#2}(#1)}\xspace}
\newcommand{\kl}[2]{\ensuremath{\mathbb{D}(#1~||~#2)}}
\newcommand{\II}{\ensuremath{\mathbb{I}}}
\newcommand{\HH}{\ensuremath{\mathbb{H}}}
\newcommand{\mi}[2]{\ensuremath{\def\mione{#1}\def\mitwo{#2}\mireal}}
\newcommand{\mireal}[1][]{
  \ifx\relax#1\relax%
    \II(\mione \,; \mitwo)%
  \else%
    \II(\mione \,; \mitwo\mid #1)%
  \fi
}
\newcommand{\en}[1]{\ensuremath{\HH(#1)}}
\newcommand{\itfacts}[1]{\Cref{fact:it-facts}-(\ref{part:#1})\xspace}
\newcommand{\bv}{\mathbf{v}}
\newcommand{\bu}{\mathbf{u}}
\newcommand{\bOne}{\mathbf{1}}
\newcommand{\odd}{B}
\newcommand{\even}{A}
\newcommand{\E}{\mathop{\mathbb{E}}}
\newcommand{\prot}{\pi}
\newcommand{\Prot}{\Pi}
\newcommand{\FMT}{\ensuremath{\textnormal{\textbf{FMT}}}\xspace}
\def\plug{\mathsf{plug}}
\def\nest{\mathsf{nest}}
\def\n{\ell}
\def\Y{\mathsf{Y}}
\def\N{\mathsf{N}}
\def\distnest{\mathcal{D}}
\def\alice{\mathsf{A}}
\def\bob{\mathsf{B}}
\def\ev{\text{\normalfont even}}
\def\od{\text{\normalfont odd}}
\def\pub{\mathsf{pub}}
\def\priv{\mathsf{priv}}
\def\apriv{\alice\text{-}\priv}
\def\bpriv{\bob\text{-}\priv}
\def\layer{\mathcal{L}}
\def\trans{\mathsf{tr}}
\def\out{\mathsf{out}}
\def\expand{\mathsf{Exp}}
\def\Z{\mathsf{Z}}
\title{Multi-Pass Graph Streaming Lower Bounds for Cycle Counting, MAX-CUT, Matching Size, and Other Problems}
\author{Sepehr Assadi\footnote{Department of Computer Science, Rutgers University. Email: \texttt{sepehr.assadi@rutgers.edu}.} \and 
Gillat Kol\footnote{Department of Computer Science, Princeton University. Email: \texttt{gillat.kol@gmail.com}.} \and 
Raghuvansh R. Saxena\footnote{Department of Computer Science, Princeton University. Email: \texttt{rrsaxena@princeton.edu}.} \and 
Huacheng Yu\footnote{Department of Computer Science, Princeton University. Email: \texttt{yuhch123@gmail.com}.}}
\date{}
\begin{document}
\maketitle

\pagenumbering{roman}


\begin{abstract}

Consider the following \emph{gap cycle counting} problem in the streaming model: The edges of a $2$-regular $n$-vertex graph $G$ are arriving one-by-one in a stream and we are promised that $G$
is a disjoint union of either $k$-cycles or $2k$-cycles for some small $k$;  the goal is to distinguish between these two cases using a limited memory. Verbin and Yu [SODA 2011] introduced this problem and showed that any 
single-pass streaming algorithm solving it requires $n^{1-\Omega(\nicefrac{1}{k})}$ space. This result and the proof technique
behind it---the \emph{Boolean Hidden Hypermatching} communication problem---has since been  used extensively for proving streaming lower bounds for various problems, including  
approximating MAX-CUT, matching size, property testing, matrix rank and Schatten norms, streaming unique games and CSPs, and many others. 

\medskip

Despite its significance and broad range of applications, the lower bound technique of Verbin and Yu comes with a key weakness that is also inherited by {all} subsequent results: the Boolean Hidden Hypermatching problem 
is hard only if there is exactly one round of communication and, in fact, can be solved with logarithmic communication in two rounds. Therefore, all streaming lower bounds derived from this problem only hold for \emph{single-pass}  algorithms. 
Our goal in this paper is to remedy this state-of-affairs. 

\medskip

We prove the first \emph{multi-pass} lower bound for the gap cycle counting problem: Any $p$-pass streaming algorithm that can distinguish between disjoint union of $k$-cycles vs $2k$-cycles---or \emph{even} $k$-cycles vs one Hamiltonian cycle---requires $n^{1-\nicefrac{1}{k^{\Omega(1/p)}}}$ space. This  makes progress on multiple open questions in this line of research dating back to the work of Verbin and Yu.

\medskip

As a corollary of this result and by simple (or even no) modification of prior 
reductions, we can extend many of previous lower bounds to multi-pass algorithms. For instance, we can now prove that any streaming algorithm that $(1+\eps)$-approximates the value of MAX-CUT, maximum matching size, or rank of 
an $n$-by-$n$ matrix, requires either $n^{\Omega(1)}$ space or $\Omega(\log{(\nicefrac{1}{\eps})})$ passes. For all these problems, prior work  left open the possibility of even an $O(\log{n})$ space algorithm  in only two passes.

\end{abstract}

\clearpage

\setcounter{tocdepth}{3}
\tableofcontents

\clearpage

\pagenumbering{arabic}
\setcounter{page}{1}


\newcommand{\cycle}{\ensuremath{\textnormal{\textbf{One-or-Many Cycles}}}\xspace}
\newcommand{\OMC}{\ensuremath{\textnormal{\textbf{OMC}}}\xspace}
\newcommand{\omc}{\ensuremath{\textnormal{{OMC}}}\xspace}

\newcommand{\BHH}{\ensuremath{\textnormal{\textbf{BHH}}}\xspace}

\newcommand{\Yes}{\ensuremath{\textnormal{\emph{Yes}}}\xspace}
\newcommand{\No}{\ensuremath{\textnormal{\emph{No}}}\xspace}

\section{Introduction}\label{sec:intro}

Graph streaming algorithms process graphs presented as a sequence of edges under the usual constraints of the streaming model, i.e., by making one or a few passes over the input and using a limited memory. 
There are two main areas of research on graph streams: $(i)$ the \emph{semi-streaming} algorithms that use $O(n \cdot \polylog{(n)})$ space for $n$-vertex graphs and target problems on ({dense}) graphs
such as \emph{finding} MST~\cite{FeigenbaumKMSZ05,McGregor14}, large matchings~\cite{McGregor05,GoelKK12,AssadiKLY16,AhnG15,Kapralov13,GamlathKMS19,AssadiBBMS19}, spanners and shortest 
paths~\cite{FeigenbaumKMSZ08,ElkinZ04,Baswana08,Elkin11,GuruswamiO13,BeckerKKL17,ChakrabartiG0V20}, sparsifiers and minimum cuts~\cite{AhnG09,AhnGM13,KelnerL11,KapralovLMMS14,MukhopadhyayN19}, maximal independent 
sets~\cite{AssadiCK19,CormodeDK19,GhaffariGKMR18}, graph coloring~\cite{AssadiCK19,BeraCG19}, and the like;  and $(ii)$ the \emph{$o(n)$-space streaming} algorithms that use $\polylog{(n)}$ space and aim to \emph{estimate} 
properties of (sparse) graphs such as max-cut value~\cite{KoganK15,KapralovKS15,KapralovKSV17,BhaskaraDV18,KapralovK19}, maximum matching size~\cite{AssadiKL17,EsfandiariHLMO15,ChitnisCEHMMV16,McGregorV16,McGregorV18,CormodeJMM17,KapralovKS14}, number of connected 
components~\cite{HuangP16}, subgraph counting~\cite{Bar-YossefKS02,BravermanOV13,CormodeJ17,McGregorVV16,BeraC17,BulteauFKP16,KallaugherMPV19}, property testing~\cite{HuangP16,MonemizadehMPS17,PengS18,CzumajFPS19}, and others (this is by no means 
a comprehensive list). We will solely focus on \emph{latter} algorithms in this paper\footnote{We shall remark that the challenges for these two classes of algorithms are somewhat different as the former ones
aim to compress the ``edge-space'' of the graph, while the latter ones focus on the ``vertex-space'' (see, e.g.~\cite{GuruswamiO13}).}. 

We study the following problem (or rather family of problems) in the graph streaming model: Given a $2$-regular graph $G=(V,E)$, decide whether $G$ consists of ``many short'' cycles or
all cycles of $G$ are ``rather long''. This problem can be seen as a robust version of 
cycle counting problems (similar-in-spirit to property testing, see, e.g.~\cite{BenderR02,ChenRSS20}). More importantly, this problem turns out to be an excellent {intermediate} problem for studying the {limitations} of streaming algorithms. 

\subsection{Background and Motivation Behind Our Work}\label{sec:background}

The cycle counting problem we study in this paper was first identified by Verbin and Yu~\cite{VerbinY11} for  proving streaming lower bounds for string  problems. In the \emph{gap cycle counting} problem of~\cite{VerbinY11}, 
we are given a graph $G=(V,E)$ and a parameter $k$ and are asked to determine if $G$ is a disjoint union of $k$-cycles or $2k$-cycles. Verbin and Yu proved that any single-pass streaming algorithm for this problem requires 
$n^{1-O(\nicefrac{1}{k})}$ space and used this to establish lower bounds for several other  problems. 

As most other streaming lower bounds, the proof of~\cite{VerbinY11} is through {communication complexity}. The authors first introduced the following \emph{Boolean Hidden Hypermatching} (\BHH) problem:  
In $\BHH_{n,t}$, Alice gets a vector $x \in \set{0,1}^n$ and Bob gets a perfect $t$-hypermatching $M$ on the $n$ coordinates of $x$. We are promised that the $(n/t)$-dimensional vector of {parity} of $x$ on 
hyperedges of $M$, i.e., $Mx = (\oplus_{i=1}^{t} x_{M_{1,i}},\oplus_{i=1}^{t} x_{M_{2,i}},\ldots,\oplus_{i=1}^{t} x_{M_{n/t,i}})$ is either $0^{n/t}$ or $1^{n/t}$; the goal is to distinguish between these two cases using {limited} communication. Building on the Fourier analytic approach of~\cite{GavinskyKKRW07}, Verbin and Yu gave an $\Omega(n^{1-1/t})$ lower bound on the communication complexity of this problem \emph{when only Alice can communicate}. 
The lower bound for cycle counting  then follows from a rather straightforward reduction from $\BHH$, which in turn implies the other lower bounds in~\cite{VerbinY11}.

The $\BHH$ problem has since been extensively used for proving \emph{space} lower bounds for streaming algorithms either through direct reductions~\cite{EsfandiariHLMO15,BuryS15,KoganK15,LiW16,HuangP16,GuruswamiVV17,BravermanCKLWY18}, as a building block for other variants~\cite{KapralovKS15,AssadiKL17,GuruswamiT19}, or as a source of inspirations and ideas~\cite{KapralovKSV17,KapralovK19,KallaugherKP18}.  For instance, 
\begin{enumerate}[label=$(\roman*)$]
\item An $n^{1-O(\eps)}$ space lower bound for $(1+\eps)$-approximation algorithms 
of MAX-CUT by Kogan and Krauthgamer~\cite{KoganK15} and Kapralov, Khanna, and Sudan~\cite{KapralovKS15}, which culminated in the {optimal} lower bound of $\Omega(n)$ for better-than-$2$ approximation by Kapralov and Krachun~\cite{KapralovK19} (see also~\cite{KapralovKSV17} who proved the first $\Omega(n)$ space lower bound for $(1+\Omega(1))$-approximation); 
\item An $n^{1-O(\eps)}$ space lower bound for $(1+\eps)$-approximation 
of maximum matching size by Esfandiari~\etal~\cite{EsfandiariHLMO15} and Bury and Schwiegelshohn~\cite{BuryS15} which was  extended to Schatten norms of $n$-by-$n$ matrices by Li and Woodruff~\cite{LiW16} and Braverman~\etal~\cite{BravermanCKLWY18,BravermmanKKS19}; 
\item An $n^{1-O(\eps)}$ space lower bounds for several property testing problems such as connectivity, cycle-freeness, and bipartiteness by Huang and Peng~\cite{HuangP16}.
\end{enumerate} 
We discuss further background on the $\BHH$ problem and list several of its other implications in~\Cref{app:background-bhh}. Indeed, owing to all these implications, 
$\BHH$ has found its way among the few {canonical} communication problems---alongside Index~\cite{Ablayev93,KremerNR95}, Set Disjointness~\cite{KalyanasundaramS92,Razborov92,Bar-YossefJKS02}, and Gap Hamming
Distance~\cite{IndykW03,ChakrabartiR11}---for proving streaming lower bounds. 

Yet, despite its significance and wide range of applications,  $\BHH$  comes with a {major weakness}: $\BHH$ is a highly {asymmetric} problem and thus its lower bound is \emph{inherently one-way}; Bob can simply send
any of his hyperedges in an $O(t \cdot \log{n})$-bit message which allows Alice to solve the problem. Consequently, \emph{all} aforementioned lower bounds obtained from $\BHH$ in this line of work (and its many variants and generalizations) only hold for \emph{single-pass} algorithms. As a result, we effectively have no knowledge of limitations of \emph{multi-pass} streaming algorithms for these problems, despite the significant attention given
to multi-pass algorithms lately (see, e.g.~\cite{EsfandiariHLMO15,BhaskaraDV18,CormodeJ17,McGregorVV16,BeraC17,BravermanCKLWY18,BravermmanKKS19}). This raises the following fundamental question: 
\begin{quote}
	\emph{Can we prove a \textbf{multi-round} analogue of the Boolean Hidden Hypermatching lower bound that allows for obtaining \textbf{multi-pass} graph streaming lower bounds?} 
\end{quote}
Indeed, this question and its closely related variants have already been raised several time in the literature~\cite{VerbinY11,CutOP,KoganK15,FischerGO17,GuruswamiT19} starting from the work of Verbin and Yu. 

\subsection{Our Contributions}\label{sec:results}

Our main contribution is a multi-round lower bound for the gap cycle counting problem, in fact, in an ``algorithmically simpler''  form, which we call the \cycle (\OMC) problem. 
We then show that by using this problem and simple (or even no) modification of prior 
reductions, we can extend many of previous lower bounds to multi-pass algorithms. 

\subsection*{The \cycle Communication Problem}


\begin{problem}[\cycle ($\OMC$)]
	Let $n,k \geq 1$ be  \underline{even} integers where $n$ divides $k$. In $\OMC_{n,k}$, we have a $2$-regular bipartite graph $G=(L,R,E)$ on $n$ vertices. 
	Edges of $G$ consist of two \emph{disjoint perfect matchings} $M_A$ and $M_B$, given to Alice and Bob, respectively. 
	We are promised that either $(i)$ $G$ consists of a single Hamiltonian cycle (\Yes case) or $(ii)$ $G$ is a collection of $(n/k)$ vertex-disjoint cycles of length $k$ (\No case).  
	The goal is to distinguish between these two cases.  
\end{problem}

$\OMC$  can be seen as the most {extreme} variant of cycle counting problems: in the No case, the graph consists of many short cycles, while in the Yes case, 
the entire graph is one long Hamiltonian cycle. This, at least intuitively, makes this problem ``easiest'' algorithmically (most suitable for reductions) and ``hardest'' for proving lower bounds (our lower bounds extend immediately to many other cycle counting problems including the $k$-vs-$2k$-cycle problem; see~\Cref{rem:arb_mat}).

The following is our main result in this paper. 

\begin{result}\label{res:main}
	For any even integer $k > 0$ and integer $r = o(\log{k})$, any communication protocol (deterministic or randomized) for $\OMC_{n,k}$ in which Alice and Bob send
	up to $r$ messages, i.e., an $r$-round protocol, requires $n^{1-\mathcal{O}(k^{-1/r})}$ communication. 
\end{result}

Our lower bound in~\Cref{res:main} demonstrates a tradeoff between the communication complexity of the $\OMC$ problem and the allowed number of rounds. In particular, 
an immediate corollary of~\Cref{res:main} is that either $\Omega(\log{k})$ rounds or $n^{\Omega(1)}$ communication is needed for solving $\OMC_{n,k}$. 


Let us now briefly compare our~\Cref{res:main} with~\cite{VerbinY11}. By setting $r=1$, we already recover the result of~\cite{VerbinY11} on $n^{1-O(\nicefrac{1}{k})}$ communication lower bound for 
cycle counting problem (up to the hidden-constant in the $O$-notation in the exponent), but this time for the algorithmically easier problem of distinguishing $k$-cycles from a  Hamiltonian cycle. Prior to our work, no lower bounds were known for this problem even for one-round protocols and in fact this was left as an open problem in~\cite[Conjecture 5.1]{VerbinY11}. But much more importantly,~\Cref{res:main} now gives a \emph{multi-round} lower bound 
for $\OMC$ (and other problems such as $k$-vs-$2k$-cycle), making progress on another open problem of~\cite[Conjecture 5.4]{VerbinY11}. We note that our tradeoff does not match  
the conjecture in~\cite{VerbinY11} and it remains a fascinating open question to determine the ``right'' tradeoff for this problem.

\subsection*{Streaming Lower Bounds from OMC} 

The $\OMC$ problem is able to capture the essence of many of previous streaming lower bounds proven via reductions from the $\BHH$ problem. In fact, as we shall see, these reductions often become even easier now that we are working with the $\OMC$ 
problem considering it has a more natural graph-theoretic definition compared to $\BHH$. But more importantly, we can now use the lower bound for $\OMC$ in~\Cref{res:main} to give \emph{multi-pass} streaming lower bounds.

Before listing our results, let us give a concrete example of a lower bound that we can now prove using the $\OMC$ problem to emphasize the simplicity of the reductions. 

\paragraph{Example: Property testing of connectivity.} Huang and Peng~\cite{HuangP16} gave a reduction from $\BHH$ to prove that any single-pass streaming algorithm for property testing of connectivity, namely, deciding whether 
a graph is connected or it requires at least $\eps \cdot n$ more edges to become connected, needs $n^{1-O(\eps)}$ space. We use a reduction from $\OMC$ to extend this lower bound to multiple passes. 

Let $k := \frac{1}{2\eps}$ and $G$ be a graph in the $\OMC_{n,k}$ problem. In the \Yes case, $G$ is a Hamiltonian cycle and is thus already connected. On the other hand,  in the \No case, $G$ consists of $n/k$ disjoint
cycles and thus to be connected requires $n/k-1 > \eps n$ edges. We can run any streaming algorithm on input graphs of $\OMC_{n,k}$ by Alice and Bob creating the stream $M_A$ appended by $M_B$ and 
passing along the memory content of the algorithm to each other. As such, using an algorithm for connectivity, the players can also solve $\OMC_{n,k}$. By~\Cref{res:main}, this implies that any $p$-pass streaming algorithm
for connectivity requires $n^{1-\eps^{\Theta(1/p)}}$ space. Interestingly, not only 
this reduction gives us a multi-pass lower bound, but also it is arguably simpler that the reduction from $\BHH$.

\medskip
 We prove the following lower bounds by reductions from~\OMC and our lower bound in~\Cref{res:main}.  

\begin{mdresult}\label{res:implications}
Let $g(\eps,p) := O(\eps^{c/p})$ for some large enough absolute constant $c > 0$. Then, 
any $p$-pass streaming algorithm for any of the following problems on $n$-vertex graphs requires $n^{1-g(\eps,p)}$ space (the references below list the previous single-pass lower bounds for the corresponding problem): 

\begin{enumerate}[label=$(\roman*)$]
	\item $(1 + \eps)$-approximation of MAX-CUT in sparse graphs (cf. \cite{KoganK15,KapralovKS15,KapralovKSV17,KapralovK19});
	\item $(1+\eps)$-approximation of maximum matching size in planar graphs (cf. \cite{EsfandiariHLMO15,BuryS15});
	\item $(1+\eps)$-approximation of the maximum acyclic subgraph in directed graphs (cf. \cite{GuruswamiVV17});
	\item $(1+\eps)$-approximation of the weight of a minimum spanning tree (cf. \cite{FeigenbaumKMSZ05,HuangP16});
	\item property testing of connectivity, bipartiteness, and cycle-freeness for  parameter $\eps$ (cf. \cite{HuangP16}).  
\end{enumerate}
Our lower bounds can be extended beyond graph streams. For instance, we can also prove lower bounds of $n^{1-g(\eps,p)}$ space for 
 $(1+\eps)$-approximation of rank and other Schatten norms of $n$-by-$n$ matrices (cf. \cite{BuryS15,LiW16,BravermanCKLWY18}), or 
 sorting-by-block-interchange on $n$-length strings (cf. \cite{VerbinY11}). 
\end{mdresult}

A simple corollary of these lower bounds is that for any of these problems, any streaming algorithm requires either $\Omega(\log{(\nicefrac{1}{\eps})})$ passes 
or $n^{\Omega(1)}$ space. Prior to our work, even an $O(\log{n})$ space algorithm in two passes was not ruled out for any of these problems. These results settle or make progress on multiple open questions in the literature 
regarding the multi-pass streaming complexity of gap cycle counting~\cite{VerbinY11}, MAX-CUT~\cite{KoganK15,CutOP}, and streaming CSPs~\cite{GuruswamiT19}. 
We postpone the exact details of our results and further backgrounds in this part to~\Cref{sec:implications}. 

\medskip
To conclude, we believe that~\Cref{res:main} and~\Cref{res:implications} identify $\OMC$ as a natural multi-round analogue of the $\BHH$ problem (as was asked in prior work~\cite{VerbinY11,KoganK15}), 
answering our motivating question. This is indeed the main {conceptual} contribution of our work. 

\subsection{Our Techniques}\label{sec:tech}

We briefly mention the techniques used in our paper here and postpone further details to the technical overview of our proof in~\Cref{sec:technical-overview}. 

The lower bound for $\BHH$ in~\cite{VerbinY11} and other variants in this line of research~\cite{KapralovKS15,GuruswamiT19,KapralovKSV17,KapralovK19,KallaugherKP18} 
all relied heavily on techniques from Fourier analysis on the Boolean hypercube. In contrast, our proofs in this paper solely relies on tools from 
information theory. 


The first main technical ingredient of our work is a novel \emph{round-elimination} argument for $\OMC$. Typical round-elimination arguments for similar problems such as pointer chasing on graphs~\cite{Yehudayoff16,DurisGS84,PapadimitriouS84,NisanW91,ChakrabartiCM08,PonzioRV99} ``track'' the information revealed about a particular \emph{path} inside the graph, ensuring that the player who is speaking next is unaware of which pointer to chase. On the other hand, we crucially need to track the information revealed about \emph{multiple} vertices at the same time (to account for the strong promise in the input instance). 
As such, our proof takes a different approach. We first show that after the first message of the protocol, there is a ``large'' \emph{minor} of the graph---obtained by contracting ``short'' paths---that still ``looks random'' to players. Eliminating a communication round at this point then boils down to \emph{embedding} a hard instance for $r$-round protocols \emph{inside this minor} of the hard instance of $(r+1)$-round protocols; this in particular involves embedding a ``lower dimensional'' instance on smaller number of vertices 
inside a ``higher dimensional'' one (as a graph minor and {not} a subgraph). 

Our second main technical ingredient is the proof of existence of this ``random looking'' graph minor after the first message of the protocol. At the heart of this part is an argument showing that after a single message, 
the \emph{joint} distribution of {all} the vertices reachable from a fixed set of $\approx (n/k)$ vertices by taking a {constant number of edges} remains almost identical to its original distribution. 
This is done by first proving a one-round ``low-advantage'' version of standard pointer chasing lower bounds: For each starting vertex $v$ in the graph, after one message, the distribution of the unique vertex which
is at distance $c$ of $v$ is $\approx n^{-\Omega(c)}$-close to its original (uniform) distribution. The proof is based on bounding the $\ell_2$ norm of the distribution of this unique vertex, and applying a \emph{direct product} type argument for $\Theta(c)$ different sub-problems, each corresponding to going ``one more edge away'' from the starting vertex. The final proof for the joint distribution of the targets of $\approx (n/k)$ vertices is done through a series of reductions from this single-vertex variant.

\subsection{Erratum}
The previous version of this paper had a mistake in one of the technical lemmas: all the main results of the paper are correct as stated, but Lemma 4.2 of Section 5 of the previous version has a subtle flaw (which is in fact not in the main part of the proof, but rather a false observation made in the final part to generalize the proof). Lemma 4.2 was not a significant result in and of itself, but was used as a building
block for our main result. In this version, we present a modified proof of our results. Most of the building blocks and the overall structure of the proofs are  the
same as before, except we no longer rely on the false observation in Lemma 4.2 and instead prove a somewhat weaker variant of this lemma; the remaining changes 
are there to allow for working with this weaker statement in the rest of the proof. See~\Cref{rem:error-FMT} for some more context. 






\section{Technical Overview}\label{sec:technical-overview}

We present a streamlined overview of our technical approach for proving~\Cref{res:main} in this section (we leave the details of our reductions in~\Cref{res:implications} to~\Cref{sec:implications}). 
We emphasize that this section oversimplifies many details and the discussions will be informal for the sake of intuition. 


Before getting to the discussion of our lower bound, it helps to  consider what are some natural ways for Alice and Bob to solve $\OMC_{n,k}$. At one extreme, there is a ``sampling'' approach: Alice can randomly sample $O(n^{1-1/k})$ edges from her input and send them to Bob; 
the (strong) promise of the problem ensures that in the \No-case, Bob will, with constant probability, see an entire $k$-cycle in the graph and thus can distinguish this from the \Yes-case. At the other extreme, 
there is a ``pointer chasing'' approach: the players can start from any vertex of the graph and simply ``chase'' a single (potential) $k$-cycle one edge per round and in (roughly) $k$ rounds solve the problem. And then there are different
interpolations between these two, for instance by chasing $O(n^{1-1/\sqrt{k}})$ random vertices in $\sqrt{k}$-rounds. Our lower bound has to address all these approaches \emph{simultaneously}. 

\subsection{The Pointer Chasing Aspect of \OMC} 
Let us start with the ``pointer chasing'' aspect of our lower bound. Suppose we put the following additional structure on the input to Alice and Bob: 
\begin{enumerate}[label=$(\roman*)$,leftmargin=16pt]
\item The input graph $G$ consists of $k$ layers $V_1,\ldots,V_k$
of size $m = n/k$ with $k$ perfect matchings $M_1,\ldots,M_k$ between them, where $M_i$ is between $V_{i}$ and $V_{(i+1\!\!\mod k)}$. 
\item For any $v \in V_1$, define $P(v) \in V_k$ as the \emph{unique} vertex reachable from $v$ in $G$ \emph{using only} $M_1,\ldots,M_{k-1}$. 
We promise that either: $\textbf{(a)}$ for all $v_i \in V_1$, $P(v_i)$ connects back to $v_i$ in the matching $M_k$, i.e., $M_k \circ P = I$ where $I$ is the identity matching; or $\textbf{(b)}$ for all $v_i \in V_1$, $P(v_i)$ connects to $v_{(i+1\mod m)}$ in $M_k$, 
i.e., $M_k \circ P = S_{+1}$, where $S_{+1}$ is the cyclic-shift-by-one matching.  
\item The input to the players are then alternating matchings from this graph, namely, Alice
receives even-indexed matchings $M_2,M_4,M_6,\ldots$, and Bob receives odd-indexed ones $M_1,M_3,M_5,\ldots$ 
\end{enumerate}
It is easy to verify that for even values of $k$, these graphs form valid inputs to $\OMC_{n,k}$ problem (with case $\textbf{(a)}$ corresponding to the \No-case and $\textbf{(b)}$ corresponding to the Hamiltonian cycle case). 

Under this setting, we can interpret $\OMC_{n,k}$ as some pointer chasing problem: for \emph{some} vertex $v \in V_1$, the players need to ``chase the pointers'' 
\[
M_1(v), \quad M_2 \circ M_1(v), \quad \ldots \quad , M_{k-1} \circ M_{k-2} \circ \cdots \circ M_1(v)
\]
to reach $P(v) \in V_k$; then check whether $P(v)$ connects back to $v$ in $M_k$ or not. 

There are however several differences between a typical pointer chasing problem (see, e.g.~\cite{Yehudayoff16,NisanW91,ChakrabartiCM08,GuruswamiO13,AssadiCK19b,GolowichS20} for many different variants) and our problem. 
Most important among these is that the strong promise in the input effectively means  there is \emph{no single particular} pointer that the players need to chase--all they need to do is to figure out $P(v)$ for 
some $v \in V$ \emph{after} communicating the messages (this is on top of apparent issues such as players being able to chase pointers from ``both ends'' and the like). We elaborate more on this below. 

For intuition, let us consider the following \emph{specialized} protocol $\prot$: the players first completely \emph{ignore} the matching $M_k$ and instead aim to ``learn'' the mapping $P =  M_{k-1} \circ M_{k-2}\circ \cdots\circ M_1$; only then, they will 
look at $M_k$ and check whether $P \circ M_k$ is $I$ or $S_{+}$, corresponding to cases $\textbf{(a)}$ or $\textbf{(b)}$ above. Under this view, we can think of the following two-phase problem: 
\begin{enumerate}
	\item Given $M_1,\ldots,M_{k-1}$ sampled independently and uniformly at random, the players run $\prot$ with transcript $\Prot$ which induces a distribution $\distribution{P \mid \Prot}$ for $P =  M_{k-1} \circ \cdots \circ M_1$;
	\item \emph{Only then}, given a matching $M_k$ sampled uniformly from just \emph{two} choices $\set{I \circ P^{-1}, S_{+1} \circ P^{-1}}$ implied by cases $\textbf{(a)}$ or $\textbf{(b)}$, they need to determine which case $M_k$ belongs to. 
\end{enumerate}
For such a protocol to fail to solve the problem better than random guessing, we should have that:
\begin{align*}
	\Ex_{\Prot}{\tvd{\distribution{I \circ P^{-1} \mid \Prot}}{\distribution{S_{+1} \circ P^{-1} \mid \Prot}}} = o(1), \tag{here and throughout, $\Prot$ is the transcript of the protocol}
\end{align*}
as the players are getting one sample, namely, $M_k$, from one of the two distributions and thus should not be able to distinguish them with one sample\footnote{Note that technically we should also condition on the input of the player outputting the final answer but for simplicity, we will ignore that in this discussion for now.}. As such, the task of proving a lower bound 
essentially boils down to showing that for a ``small'' round and communication protocol $\prot$, 
\begin{align}
	\Ex_{\Prot}\tvd{\distribution{P \mid \Prot}}{\distribution{P}} = o(1) \label{eq:over1},
\end{align}
namely, that the protocol cannot change the distribution of the \emph{entire} mapping $P$ by much; this should be contrasted with typical lower bounds for pointer chasing
that require that distribution of a \emph{single} pointer $P(v)$ for $v \in V_1$ does not differ considerably after the communication. 

This outline oversimplifies many details. Most importantly, it is not at all the case that the only protocols that solve the problem adhere to the special two-phase approach mentioned above. Indeed, the input of players are highly correlated in the problem
and this can reveal information to the players. Consequently, in the actual lower bound, we need to handle these correlation throughout the entire proof. In particular, we need stronger variant of \eqref{eq:over1} that shows the value of 
distribution $P \mid \Prot$ on \emph{two particular} points (rather than two marginally random points) in the support are close. 
We postpone these details to the actual proof and for now focus on the lower bound for~\eqref{eq:over1} which captures the crux of the argument. 


\subsection{The One Round Lower Bound}

We first prove a \emph{stronger} variant of~\eqref{eq:over1} for one round protocols. Suppose $M_1,\ldots,M_{k-1}$ are sampled uniformly at random  and $M_{\odd}$ denotes 
the input of Bob among these matchings. Then, we show that if Alice sends a \emph{single}  message $\Prot$ of size $C = o(m)$ (recall that $m = n/k$), we will have: 
\begin{align}
	\mi{\rv P}{\rv M_{\odd},\rv \Pi} \leq m^{\Theta(1)} \cdot \paren{\frac{C}{m}}^{\Theta(k)}. \label{eq:over2}
\end{align}
Let us interpret this bound: the mutual information between $P$ and $(M_{\odd},\Pi)$ is a measure of how much the distribution of $P$ is affected by the extra conditioning on $M_{\odd},\Pi$ (see~\Cref{prelim-fact:kl-info});
in particular, if we set $C = m^{1-\Theta(1/k)}$ in the above bound, we get that the RHS is $o(1)$ for large enough $k$ which in turn implies that distribution of $P$ conditioned on \emph{both} $M_{\odd},\Pi$, 
is very close to its original distribution, implying~\eqref{eq:over1} (in a stronger form). 
The proof of~\eqref{eq:over2} consists of two parts. 

\paragraph{Part One:} The main part of the proof is to consider a ``low advantage'' variant of pointer chasing: 
\begin{itemize}[label=$-$]
	\item \textbf{Low-advantage pointer chasing:} Given a \emph{fixed} vertex $v \in V_1$ and assuming we are only allowed one round of communication, what is the probability of guessing $P(v) \in V_k$ correctly? 
\end{itemize}
\noindent
Standard pointer chasing lower bounds such as~\cite{Yehudayoff16,NisanW91} imply that the answer for a $(<k)$-round protocol with communication cost $C$ is roughly $\nicefrac{1}{m}+\nicefrac{C}{km}$. We on the other hand prove a much stronger
bound but only for \emph{one}-round protocols which is roughly $\nicefrac{1}{m}+\paren{\nicefrac{C}{m}}^{\Theta(k)}$. 

The proof of this part is one of the main technical contributions of our work. For our purpose it would be easier to bound the $\ell_2$-norm of the vector $P(v)$, i.e., show that: 
\begin{align}
	\Ex_{M_{\odd},\Prot} \norm{\rv P(v) \mid M_{\odd},\Prot}^2_2 \leq \frac{1}{m} + \paren{\frac{C}{m}}^{\Theta(k)}, \label{eq:over3}
\end{align}
which can be used  to bound the advantage of the protocol over random guessing. 

It turns out the key to bounding the LHS above is to understand the ``power'' of message $\Prot$ in changing the distribution of collections of edges chosen from \emph{disjoint} matchings 
in the input of Alice. Let $S \subseteq \set{2,4,\ldots,k-2}$ be a set of indices of Alice's matchings and $\bv_S = (\hat{e}_{i_1},\ldots,\hat{e}_{i_S})$ denote a collection of ``potential'' edges in $M_{i_1},M_{i_2},\ldots,M_{i_S}$ (i.e., pairs of vertices which may or may not be an edge in each matching). In particular, we can bound the LHS of~\eqref{eq:over3} by
\begin{align}
	\frac{1}{m} + \frac{1}{m \cdot (m-1)^{k/2-1}} \cdot \Ex_{\Prot}\bracket{\sum_{S \subseteq \set{2,4,\ldots,k-2}}\sum_{\bv_S} \Pr\paren{ \hat e_{i_1} \in M_{i_1} \wedge \cdots \wedge \hat e_{i_{|S|}} \in M_{i_S} \mid \Prot}^2}, \label{eq:over4}
\end{align}
and then prove that for any $S \subseteq \set{2,4,\ldots,k-2}$: 
\begin{align}
	\Ex_{\Prot}\bracket{\sum_{\bv_S} \Pr\paren{ \hat e_{i_1} \in M_{i_1} \wedge \cdots \wedge \hat e_{i_{|S|}} \in M_{i_S} \mid \Prot}^2} \leq \paren{{Cm}}^{\card{S}/2}. \label{eq:over5}
\end{align}
Plugging in~\eqref{eq:over5} inside~\eqref{eq:over4} then prove the ``low advantage'' lower bound we want in~\eqref{eq:over3}. 

The statement (and the proof of)~\eqref{eq:over5} can be seen as some \emph{direct product} type result: When $\card{S} = 1$, we are simply bounding the (square of the) probability that a potential edge belongs to the 
graph of Alice conditioned on an ``average'' message; considering each matching $M_i$ is a random permutation of size $[m]$ and the message reveals only $C = o(m)$ bits about it, we  expect only a small number of edges to have a ``high'' probability 
of appearing in $M_i$ conditioned on $\Prot$ (see~\Cref{lem_info_perm}). Our bounds in~\eqref{eq:over5} then show that repeating this task for $\card{S}$ times, namely, increasing the probability of an entire $\card{S}$-tuple $\bv_S$ of edges, becomes 
roughly $\card{S}$ times less likely. 

\paragraph{Part Two:} Our lower bound in~\Cref{eq:over3} can also be interpreted as bounding: 
\begin{align}
	\mi{\rv P(v)}{\rv M_{\odd},\rv \Pi} \leq \paren{\frac{C}{m}}^{\Theta(k)}, \label{eq:over6}
\end{align}
for a fixed vertex $v \in V_1$, namely, a \emph{single-vertex} version of the bound we want in~\eqref{eq:over2}. We obtain the final bound using a series of reductions from this. 
In particular, by chain rule: 
\[
	\mi{\rv P}{\rv M_{\odd},\rv \Pi} = \sum_{v \in V_1} \mi{\rv P(v_i)}{\rv M_{\odd},\rv \Pi \mid \rv P(v_1),\ldots,\rv P(v_{i-1})}.
\]
For the \emph{first} $m/2$ terms in the sum above, we can show that the bounds in~\eqref{eq:over6} continue to hold \emph{even conditioned} on the new $\rv P$-values; this is simply because even conditioned on these values, 
at least half of each matching is ``untouched'' and thus we can apply the previous lower bound to the underlying subgraph with $(\geq m/2)$-size layers instead. 
This argument however cannot be extended to all values in the sum simply because the size of 
layers are becoming smaller and smaller through this conditioning. To handle this in our actual proof, we modify our graph by shrinking the first and the last layers by a constant factor and generalize all our arguments to this modified graph. We postpone the details of this part to the actual proof but note here that this essentially amounts to ``getting rid'' of the last $m/2$ terms.

\subsection{The Round Elimination Argument} 

The next key ingredient of our proof is a round elimination argument for ``shaving off'' the rounds in any $r$-round protocol one by one, until we end up with a $0$-round protocol that can still 
solve a non-trivial problem; a contradiction. 

A typical round elimination argument for pointer chasing shows that after the first message of the protocol, the distribution of \emph{next immediate} pointer to chase (namely, $M_2 \circ M_1(v)$ when chasing $M_{k} \circ \cdots \circ M_1(v)$) 
is still almost uniform as before. Thus, the players now need to solve the same problem with \emph{one less round} and \emph{one less matching}. Unfortunately, such an approach does not suffice for our purpose in proving~\eqref{eq:over1} 
in which chasing \emph{any} pointer solves the problem. 

Our round elimination argument thus takes a different route. We show that after the first message of the protocol, the {joint} distribution of \emph{all long paths} in the \emph{entire} graph is still almost uniform. 
Let us formalize this as follows for proving a lower bound for $r$-round protocols. Consider the recurrence $k_r = c \cdot k_{r-1}$ and $k_0 = 1$ for some sufficiently large constant $c > 1$, and a $k_{r}$-layered
graph $G_r$ as before. For any $i \in [k_{r-1}]$, define $P_i := M_{i \cdot c} \circ \ldots \circ M_{(i-1) \cdot c +1}$, namely, the composition of the matchings in \emph{blocks} of length $c$ each. We will use our bounds in~\eqref{eq:over2} 
to show that after the {first} message $\Prot_1$ of any $r$-round protocol $\prot$ with communication cost $C$, 
\begin{align}
	\Ex_{\Prot_1,M_{\odd}}{\tvd{\distribution{P_1,\ldots,P_{k_{r-1}} \mid \Prot_1,M_{\odd}}}{\distribution{P_1,\ldots,P_{k_{r-1}}}}} \leq  m^{\Theta(1)} \cdot \paren{\frac{C}{m}}^{\Theta(c)} \label{eq:over7}. 
\end{align}
 In words, after the first round, the {joint} distribution of compositions of matchings still look almost uniform \emph{to Bob}. Notice that the main difference in~\eqref{eq:over7}
compared to~\eqref{eq:over2} that the bounds are now applied to \emph{each} block of length $c$ in the graph, not the entire $k$ layers. 

Now let us see how we can use this to eliminate one round of the protocol. This is done through an {embedding argument} in which we embed an instance $G_{r-1}$ of the problem on $k_{r-1}$ layers 
inside a graph $G_r$ of $k_{r}$ layers {conditioned on the first message}, and run the protocol $\prot$ for $G_r$ {from the second round onwards} to obtain an $(r-1)$-round protocol $\theta$ for $G_{r-1}$. The embedding is as follows. 

\paragraph{Embedding $G_{r-1}$ inside $G_r \mid \Prot_1$.} Let $M_1,\ldots,M_{k_{r-1}}$ be the inputs to Alice and Bob in $G_{r-1}$. In the protocol $\theta$, the players 
first use {public randomness} to sample a message $\Prot_1$ from the distribution induced by $\prot$ on $G_r$. We would now like to \emph{sample} a graph $G_{r} \mid \Prot_1$ such that:
\begin{enumerate}[label=$(\roman*)$]
\item for every $i \in [k_{r-1}]$, we have $P_i = M_i$, i.e., the composition of the $i$-th block of $c$ consecutive matchings in $G_r$ looks the same as the matching $M_i$; 
\item Alice in $\theta$ gets the input of Bob in $\prot$ in $G_r$ and Bob in $\theta$ gets the input of Alice in $\prot$.
\end{enumerate}

{Assuming} we can do this, Alice and Bob in $\theta$ can continue running $\prot$ on $G_r \mid \Prot_1$ as they both know the first message of $\prot$ and by property $(ii)$ above have the proper inputs; moreover, by property $(i)$, 
$M_{k_{r-1}} \circ \ldots \circ M_{1} = P_{k_{r-1}} \circ \ldots \circ P_{1} = P$, i.e., the distribution of pointers they would like to chase in both $G_r$ and $G_{r-1}$ is the same. Thus if $\prot$ was able to change the distribution of $P$ in $G_r$ in $r$ 
rounds, then $\theta$ can also change the distribution of $P$ in $G_{r-1}$ in $(r-1)$ rounds. 

Of course, we cannot hope to straightaway perform the sampling above without any communication between the players as $(i)$ $\Prot_1$ correlates the distribution of $P_1,\ldots,P_{k_{r-1}}$ in $G_r \mid \Prot_1$ (even though they were independent originally), 
and $(ii)$ Alice and Bob in $\theta$ know only half of $P_1,\ldots,P_{k_{r-1}}$ each (dictated by their input in $G_{r-1}$). This is where we use~\eqref{eq:over7}. Intuitively, since the distribution of $P_1,\ldots,P_{k_{r-1}} \mid \Prot_1,M_{\odd}$ has not changed 
dramatically (and \emph{not at all} if we condition on $M_{\even}$ instead of $M_{\odd}$ since Alice is the sender of $\Prot_1$), we can design a sampling process based on a combination of public and private randomness that 
``simulates'' sampling
\[
G_r \sim G_r \mid \Prot_1, P_1 = M_1, \ldots, P_{k_{r-1}} = M_{k_{r-1}}
\]
by instead sampling from the product distribution 
\[
	G_r \sim \bigtimes_{i \in [k_{r-1}]} \text{matchings $M^{i}_1,\ldots, M^{i}_c$ in $i$-th block of $G_r$} \mid \Prot_1, P_{i} = M_{i},
\]
while obtaining the same answer as $\prot$ up to a negligible factor of $1/\poly(m)$ error. 

To conclude, if the communication cost of the protocol is only $C = {m^{1-\Theta(1/c)}}$, we can shave off all the $r$-rounds of the protocol, while shrinking the number of layers in input graphs by a factor of $c$ each time; by the choice of 
$k_r = c^r$, we will eventually end up with a $0$-round protocol on a non-empty graph which cannot change the distribution of corresponding $P$ at all. Tracing back the argument above then implies that the 
original $r$-round protocol should not be able to change the distribution of its own mapping $P$ by more than $O(r/\poly(m))$ as desired. Rewriting $c$ as $k^{1/r}$, we obtain an $m^{1-\Theta(1/k^{1/r})}$ lower bound on the communication cost of 
protocols for~\eqref{eq:over1}.

\section{Preliminaries and Problem Definition}\label{sec:prelim} 

\subsection{Notation}
\paragraph{Sequences.} Unless stated otherwise, we use the letters $S, T$ to denote sequences. The notation $S \Vert T$ will denote the concatenation of the sequences $S$ and $T$ and $\card{S}$ denotes the length of $S$. For a sequence $S$ such that $\card{S} > 0$, we use $S[:-1]$ to denote the sequence $S$ without its last element. For $i \in [\card{S}]$, we use $S_{\leq i}$ to denote the sequence with the first $i$ elements of $S$. Define $S_{\geq i}$ analogously. Finally, for some $s$ and $c > 0$, the notation $s^{(c)}$ denotes the $c$-length sequence with all elements equal to $s$. 

\paragraph{Random variables.} When there is room for confusion, we use sans-serif letters for random variables (e.g.~$\rA$) and normal letters for their realizations (e.g. $A$). For a random variable $\rA$, we use $\supp{\rA}$ to denote the support of $\rA$ and $\distribution{\rA}$ to denote the distribution of $\rA$. When it is clear from the context, we may abuse the notation and use $\rA$ and $\distribution{\rA}$ interchangeably. By norm $\norm{\rA}$ of a random variable with distribution $\mu$ and support $a_1,\ldots,a_m$, we mean the norm of the vector $(\mu(a_1),\ldots,\mu(a_m))$.

\paragraph{Information theory.} For random variables $\rA,\rB$, we use $\en{\rA}$ to denote the \emph{Shannon entropy} of $\rA$, $\en{\rA \mid \rB}$ to denote the \emph{conditional entropy}, and 
$\mi{\rA}{\rB}$ to denote the \emph{mutual information}. Similarly, for two distributions $\mu$ and $\nu$ on the same support, $\tvd{\mu}{\nu}$ denotes their \emph{total variation distance} and $\kl{\mu}{\nu}$ is 
their \emph{KL divergence}.~\Cref{app:info} contains the definitions and standard properties of these notions as well as some auxiliary lemmas that we prove in this paper. 

\subsection{Layered Graphs and Nested Block Graphs}

We start by defining a layered graph.

\begin{definition}
\label{def:layered}
Let $k > 0$ and $T = \{t_i\}_{i = 0}^k$ be a sequence of positive integers. A $T$-layered graph $G = (V, E)$ is such that $\card{V} = \sum_{i=0}^k t_i$ and the following holds:

If $V = V_0 \sqcup V_1 \sqcup \cdots \sqcup V_k$ is the lexicographically smallest partition of $V$ into sets of sizes $t_0, t_1, \cdots t_k$ respectively, then the set $E$ can be partitioned as $E = E_1 \sqcup \cdots \sqcup E_k$ where for all $i \in [k]$, $E_i$ is a matching of size $\min(t_{i-1}, t_i)$ between $V_{i-1}$ and $V_i$.
\end{definition}

Observe that the partition $E = E_1 \sqcup \cdots \sqcup E_k$ is unique. When we wish to emphasize the partitions, we write $G = ( \bigsqcup_{i=0}^k V_i, \bigsqcup_{i=1}^k E_i )$. We call the sets $V_i$ the `layers' of $G$, with $V_0$ being the first and $V_k$ being the last layer. Often when we talk about layered graphs, we will have a fixed set of layers in mind and consider different sets of edges between these layers. In particular, when we talk about a random layered graph, we mean a graph chosen uniformly at random amongst all layered graphs with those layers. We use $\layer_T$ to denote the uniform distribution over all $T$-layered graphs and omit the subscript when it is clear from context.

As the edges in a layered graph form a matching between adjacent layers, any vertex in any layer is connected to at most one vertex in any other layer. For $i < i' \in \{0\} \cup [k]$, we define the function $G_{i \to i'} : V_i \to V_{i'}$ to be such that for all $v \in V_i$,  $G_{i \to i'}(v)$ is either the unique vertex in layer $V_{i'}$ that $v$ connects to, or is $\bot$, if no such vertex exists. We simply say $G_{i'}$ if $i' = i+1$. We say that a layered graph is \emph{nice} if for all $i < i' \in \{0\} \cup [k]$, the number of vertices in $V_i$ such that $G_{i \to i'}(v) \neq \bot$ is the largest possible. Formally, 
\begin{definition}
\label{def:nice}
Let $k > 0$ and $T = \{t_i\}_{i = 0}^k$ be a sequence of positive integers. A $T$-layered graph $G = ( \bigsqcup_{i=0}^k V_i, \bigsqcup_{i=1}^k E_i )$ is said to be nice if, for all $i < i' \in \{0\} \cup [k]$, we have
\[
\card{ \{ v \in V_i \mid G_{i \to i'}(v) \neq \bot \} } = \min_{i \leq i'' \leq i'} t_{i''} .
\]
\end{definition}

\paragraph{Nested block graphs.} Next we define a subset of layered graphs that we use throughout this work. Let $c, s, k > 0$ and $T = \{ t_i \}_{i = 0}^k$ be a sequence of positive integers. Define the sequence $\plug(T, c, s)$ as:
\[
\plug(T, c, s) = t_0 \Vert s^{(c)} \Vert t_1 \Vert \cdots \Vert s^{(c)} \Vert t_k.
\]
Namely, $\plug(T, c, s)$ is the sequence obtained by plugging $c$ occurrences of $s$ between any two elements of $T$. Let $S = \{ s_i \}_{i=1}^k$ be an increasing\footnote{We adopt the convention that the sequence $S$ will be indexed from $1$ and increasing, while $T$ will be indexed from $0$ and may or may not be increasing.} sequence of positive integers. Define:
\[
\nest(c, S) = \begin{cases}
s_1^{(2)}, &\text{~if~} \card{S} = 1\\
\plug(\nest(c, S[:-1]), 2c - 1, s_{\card{S}}), &\text{~if~} \card{S} > 1
\end{cases} .
\]
We define $\n_{c, S} = \card{\nest(c, S)} - 1$ and note that $\n_{c, S} = (2c)^{\card{S} - 1}$. We drop the subscripts $c, S$ when they are clear from context. 

\begin{definition}[Nested block graphs]
\label{def:nbc}
Let $c > 0$ and $S$ be an increasing sequence of positive integers. A $(c, S)$-nested block graph is a $\nest(c, S)$-layered graph that is nice. 
\end{definition}

Analogous to layered graphs, we shall consider random nested block graphs over the same set of vertices. We reserve $\distnest_{c, S}$ to denote the uniform distribution over $(c, S)$-nested block graphs and omit the subscripts when they are clear from context. For a permutation $\Z$ on $[s_1]$ we use $\distnest_{c, S}^{\Z}$ to denote the distribution $\distnest_{c, S}$ over nested block graphs $G$ conditioned on $G_{0 \to \n} = \Z$. 

\paragraph{Almost nested block graphs.} 
Let $c, k > 0$ and $S = \{s_i\}_{i=1}^k$ be an increasing sequence of positive integers. Let $0 < s \leq s_1$ be given. We define the sequence $\nest_l(s, c, S)$ to equal to $\nest(c, S)$ except that the first element is $s$ instead of $s_1$. Similarly, the sequence $\nest_r(s, c, S)$ to equal to $\nest(c, S)$ except that the last element is $s$.  We omit the subscript $l, r$ when it is clear from context. We define an $(s, c, S)$-almost nested block graph to be a $\nest(s, c, S)$-layered graph that is nice. In other words, an $(s, c, S)$-almost nested block graph is the same as a $(c, S)$-nested block graph except that either the first or the last layer is of size $s$. We define the notation $\distnest_{s, c, S}$ and $\distnest_{s, c, S}^{\Z}$ for almost nested block graphs analogous to our notation for nested block graphs. We omit left and right in this notation as it shall be clear from context.

\subsection{The Final-Matching Testing Problem}
We work in the standard two-party communication model \cite{Yao79} (see~\cite{KushilevitzN97} for an overview of the standard definitions). For a sequence $T$ and $r > 0$, an $r$-round $T$-protocol $\prot$ is defined over a $T$-layered graph $G = ( \bigsqcup_{i=0}^{\card{T} - 1} V_i, \bigsqcup_{i=1}^{\card{T} - 1} E_i )$. Alice's input is the edges $E_i$ for even $i$ and Bob's input is the edges $E_i$ for odd $i$. Both players know the partition $\bigsqcup_{i=0}^{\card{T}} V_i$.  Using these inputs, Alice and Bob communicate for $r$ rounds where Alice sends a message to Bob in odd rounds and Bob sends a message to Alice in even rounds. If $r$ is even (resp. odd), then, after $r$ rounds, Alice (resp. Bob) computes an output based on its input and the transcript. We use $\norm{\prot}$ to denote the communication cost of the protocol $\prot$ defined as the worst-case number of bits communicated (not including the output) between Alice and Bob in $\prot$ over any input. We also use $\prot(G)$ or $\prot(\bigsqcup_{i=1}^{\card{T}} E_i)$ to denote the output of the protocol $\prot$ when the inputs are $\bigsqcup_{i=1}^{\card{T}} E_i$.

If $T = \nest(c, S)$ for some sequence $S$ and $c > 0$, we may say that $\prot$ is an $r$-round $(c,S)$-protocol. Analogously, define an $r$-round $(s, c, S)$-protocol with almost nested block graphs instead of nested block graphs.

Based on these, we define the following communication problem.

\begin{problem}[\textbf{(Distributional) Final-Matching Testing}] \label{prob:FMT}
	The $\FMT$ problem is a two-player communication problem parameterized by integers $m,c,r \geq 1$ and two fixed permutations $\Y \neq \N$ on $[m]$. Let $S$ be the sequence $S = m, m + \left\lfloor{\frac{m}{r}}\right\rfloor, m + \left\lfloor{\frac{2m}{r}}\right\rfloor, \cdots, 2m$. 
	
	We sample a $(c,S)$-nested block graph $G$ from the distribution $\frac12 \cdot \distnest_{c, S}^{\Y} + \frac12 \cdot \distnest_{c, S}^{\N}$, and the goal is to determine which is the case, i.e., whether the {\em final matching} $G_{0 \to \n}$ is $\Y$ or $\N$. 
\end{problem}


\subsection{Main Result}

Our main technical result is the following lower bound for the FMT problem. \Cref{res:main} then follows easily from this theorem as we show in~\Cref{sec:OMC}.  
\begin{theorem}
\label{thm:FMT}
Let $r, m, c > 0$ be such that $m$ is odd and sufficiently large. Define the sequence $S = m, m + \left\lfloor{\frac{m}{r}}\right\rfloor, m + \left\lfloor{\frac{2m}{r}}\right\rfloor, \cdots, 2m$.  For all permutations $\Y, \N$ on $[m]$ and all randomized $r$-round $(c,S)$-protocols $\Pi$ such that $\norm{\Pi} \leq 2^{-\mathcal{O}(r)} \cdot m^{1 - \frac{200}{c}}$, it holds that:
\[
\tvd{ \distributions{ \rv{\Pi}(G) }{ G \sim \distnest_{c, S}^{\Y} } }{ \distributions{ \rv{\Pi}(G) }{ G \sim \distnest_{c, S}^{\N} } } \leq \frac{1}{m^3} .
\]
\end{theorem}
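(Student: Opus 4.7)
The plan is to prove Theorem~\ref{thm:FMT} by induction on $r$ via a round-elimination argument. The base case $r=0$ is straightforward: a zero-round protocol's output depends only on one player's input, whose marginal distribution under $\distnest_{c,S}^{\Y}$ and $\distnest_{c,S}^{\N}$ differs negligibly since fixing a single $m$-by-$m$ final matching constrains each intermediate matching only weakly. For the inductive step, the key structural observation is that a $(c,S)$-nested block graph can be viewed as a $(c, S')$-nested block graph (where $S'$ is $S$ with its last entry dropped) in which each edge has been replaced by a $(2c-1)$-layer ``gadget'' whose matching composition realizes that edge. Given an $r$-round $(c,S)$-protocol $\Pi$, I would shave off Alice's first message $\Pi_1$ to build an $(r-1)$-round $(c,S')$-protocol $\theta$ on the quotient graph by the following reduction: given an input $G'$ to $\theta$, the players first sample $\Pi_1$ via public randomness, then use a mix of public and private randomness to extend to a full graph $G$ consistent with $\Pi_1$ whose gadget compositions equal the matchings of $G'$, swap roles (since appending $\Pi_1$ shifts round parity), and continue running $\Pi$ from round two. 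For this simulation to faithfully track $\Pi$ up to $1/\poly(m)$ error, the conditional joint distribution of the gadget compositions given $\Pi_1$ and the receiver's input must be close to its unconditional joint distribution.

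\paragraph{The main lemma and its main obstacle.}
Let $P_1, \ldots, P_t$ denote the compositions across all $t$ outer gadgets and let $M_{\mathsf{rcv}}$ denote the receiving player's input. The technical heart of the proof is the bound
\[
\Ex_{\Pi_1, M_{\mathsf{rcv}}} \tvd{\distribution{P_1, \ldots, P_t \mid \Pi_1, M_{\mathsf{rcv}}}}{\distribution{P_1, \ldots, P_t}} \;\leq\; m^{O(1)} \paren{\frac{C}{m}}^{\Omega(c)},
\]
where $C = \|\Pi\|$. I would reduce this first to a \emph{single-vertex $\ell_2$-norm bound} $\Ex \|\distribution{P_i(v) \mid \Pi_1, M_{\mathsf{rcv}}}\|_2^2 \leq 1/m + (C/m)^{\Omega(c)}$ for any fixed source $v$ in a gadget's first layer. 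Proving this single-vertex bound is the \emph{main obstacle} of the whole proof: expanding the $\ell_2$ norm yields a sum over subsets $S_0$ of Alice's matchings in the gadget and over tuples of $|S_0|$ potential edges, of the squared conditional probability that all those edges are simultaneously present. I would then establish the needed direct product bound: an $o(m)$-bit message can inflate the joint probability of $|S_0|$ disjoint edges by a factor of at most roughly $(Cm)^{|S_0|/2}$, which follows from $\ell_2$-norm bounds on how much information a short message can reveal about a uniformly random permutation.

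\paragraph{Extending to joint distribution and concluding.}
Finally I would lift the single-vertex bound to the full joint bound via the chain rule $\mi{(P_1, \ldots, P_t)}{\Pi_1, M_{\mathsf{rcv}}} = \sum_{v, i} \mi{P_i(v)}{\Pi_1, M_{\mathsf{rcv}} \mid \text{prior values}}$, summed over every source $v$ in the outer layer and every gadget index $i$. Because each successive term fixes more vertices, the single-vertex bound must continue to apply on a subgraph with shrinking matchings. This is precisely why $S$ grows arithmetically from $m$ to $2m$: the $\lfloor m/r \rfloor$ slack per layer guarantees that at least half of every gadget's matchings remain unconstrained throughout the chain rule, so the single-vertex bound continues to hold. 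Combining the lemma with the embedding and invoking the inductive hypothesis on $\theta$ (whose communication cost is still at most $\|\Pi\|$) yields the claimed $1/m^3$ TVD bound for $\Pi$; the $2^{-\mathcal{O}(r)}$ factor in the communication hypothesis absorbs the small additive losses accumulated across the $r$ levels of recursion.
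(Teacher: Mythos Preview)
Your proposal follows the paper's Technical Overview (Section~\ref{sec:technical-overview}) and correctly identifies all the key ingredients: the single-vertex $\ell_2$ bound, the direct-product estimate over disjoint matchings, the chain-rule lift to the whole permutation, and the arithmetic growth of $S$ to absorb conditioning. However, the paper's actual implementation (Section~\ref{sec:multiround}) organizes the induction \emph{dually} to what you describe. You peel the outermost nesting level: view the $(c,S)$-graph as a $(c,S[:-1])$-quotient with $(2c)^{r-1}$ short gadgets, prove that the \emph{joint} distribution of all gadget compositions is close to uniform after one message, and then embed the $(r-1)$-round instance in the quotient. The paper instead peels the innermost level: it decomposes the graph into only $2c$ large blocks, each an (almost) $(c,S_{\geq 2})$-instance, sitting inside an outer skeleton $\hat S=(s_1,s_2)$; it applies the \emph{inductive hypothesis} to each block via a hybrid argument (Lemma~\ref{lemma:hybrid}) to show that changing Bob's block compositions barely moves the transcript, and then applies Lemma~\ref{lemma:oneround} to a one-round protocol $\hat\Pi$ on the $\hat S$-skeleton. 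In short, you use the one-round bound on many small gadgets and induction on the quotient; the paper uses induction on a few big blocks and the one-round bound on the skeleton. The paper's route thus sidesteps your ``main lemma'' about the joint distribution of $(2c)^{r-1}$ compositions, needing only a hybrid over $c$ single-block changes.

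One correction: your base case $r=0$ and its stated justification are off. For $|S|=1$ the graph has a single matching which \emph{is} the final matching, so ``fixing the final matching constrains each intermediate matching only weakly'' does not apply. The paper's base case is $r=1$ (Lemma~\ref{lemma:oneround}), and that case already requires the full $\ell_2$ analysis you describe as the main obstacle.
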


The proof of \Cref{thm:FMT} is by induction on $r$. The base case ($r = 1$) is captured in \Cref{lemma:oneround} while the induction step is captured in \Cref{lemma:multiround}. Observe that \Cref{lemma:oneround} and \Cref{lemma:multiround} talk about almost nested block graphs while \Cref{thm:FMT} talks about nested block graphs. This generalization is needed to make the induction work.

\newcommand{\lemmaoneround}{
Let $S = \{s_i\}_{i = 1}^2$ be an increasing sequence and $c, C > 0$ be integers. For all $s \leq s_1$, injections $\Y, \N: [s] \to [s_1]$, and all randomized $1$-round $(s, c, S)$-protocols $\Pi$ such that $\norm{\Pi} \leq C$, we have
\[
\tvd{ \distributions{ \rv{\Pi}(G) }{ G \sim \distnest_{s, c, S}^{\Y} } }{ \distributions{ \rv{\Pi}(G) }{ G \sim \distnest_{s, c, S}^{\N} } } \leq 2s_2^2\cdot \left( \frac{40C}{s_2 - s} \right)^{c/6} := \Phi(s, c, s_2, C).
\]
}

\begin{lemma}\label{lemma:oneround}
	\lemmaoneround
\end{lemma}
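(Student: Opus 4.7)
In a one-round $(s, c, S)$-protocol, only Alice speaks, so Bob's view consists of his input $E_{\odd} = (E_1, E_3, \ldots, E_{2c-1})$ together with Alice's message $\Pi$. By the data-processing inequality, the lemma reduces to bounding $\tvd{\distribution{E_{\odd}, \Pi \mid \Y}}{\distribution{E_{\odd}, \Pi \mid \N}}$. A useful first observation is that the marginal distribution of $E_{\odd}$ is identical in the $\Y$ and $\N$ cases---it depends only on the ``nice graph'' structure on Bob's matchings, not on the final-matching constraint---so the task further reduces to bounding $\Ex_{E_{\odd}} \tvd{\distribution{\Pi \mid E_{\odd}, \Y}}{\distribution{\Pi \mid E_{\odd}, \N}}$. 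Once $E_{\odd}$ is fixed, the composition $P = E_{2c-1} \circ \cdots \circ E_1$ is determined, and the conditional distributions of Alice's input $(E_2, E_4, \ldots, E_{2c})$ in the two cases differ only through $E_{2c}$, which is forced to map $P(V_0) \subseteq V_{2c-1}$ onto $\Y(V_0)$ vs.\ $\N(V_0)$, respectively.

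Following the strategy of Part One in \Cref{sec:technical-overview}, the plan is to reduce this TVD bound to an $\ell_2$-norm bound on the conditional distribution of the ``pointer'' $P(v) = G_{0 \to 2c-1}(v) \in V_{2c-1}$ given Bob's view. Namely, for each $v \in V_0$, I aim to establish
\[
\Ex_{\Pi, E_{\odd}} \bigl\| \distribution{P(v) \mid \Pi, E_{\odd}} \bigr\|_2^2 \;\leq\; \frac{1}{s_2} + \left( \frac{O(C)}{s_2 - s} \right)^{\Omega(c)} .
\]
Expanding the $\ell_2$-norm as $\sum_u \Pr(P(v) = u \mid \cdot)^2$ and observing that the event $P(v) = u$ corresponds to a specific length-$(2c-1)$ path that uses exactly one edge from each of Alice's middle matchings $E_2, E_4, \ldots, E_{2c-2}$, reduces the problem to a direct-product style bound: for any subset $T \subseteq \{2, 4, \ldots, 2c-2\}$ of Alice's middle matchings and any tuple of candidate edges $\bv_T = (\hat{e}_i)_{i \in T}$,
\[
\Ex_{\Pi} \sum_{\bv_T} \Pr\!\bigl( \hat{e}_i \in E_i \; \forall\, i \in T \,\bigm|\, \Pi \bigr)^2 \;\leq\; \bigl( O(C) \cdot s_2 \bigr)^{|T|/2} .
\]
The base case $|T| = 1$ follows from a standard information-theoretic bound on how much a $C$-bit message can concentrate a single uniform random matching; the inductive step uses independence of distinct matchings together with an entropy-splitting argument on $\Pi$. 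I expect this direct-product lemma to be the main technical obstacle of the proof.

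Finally, the single-vertex $\ell_2$-bound is extended to the joint distribution of $(P(v))_{v \in V_0}$ by an iterative conditioning argument: after fixing the pointer values of $i$ vertices of $V_0$, at most $i$ vertices of each middle layer are ``used up,'' and the single-vertex bound still applies to the residual sub-problem with $s_2 - i$ free vertices per layer (in the worst case $i = s$, accounting for the $s_2 - s$ denominator appearing in $\Phi$). Converting the resulting joint $\ell_2$-bound into a TVD bound, via Cauchy--Schwarz and the observation that the $\Y$ vs.\ $\N$ difference in Alice's input is localized at only $s$ pinned edges of $E_{2c}$, yields the target $\Phi(s, c, s_2, C) = 2 s_2^2 \cdot (40 C / (s_2 - s))^{c/6}$: the prefactor $s_2^2$ accounts for the sum over candidate endpoints of the pinned edges, and the exponent $c/6$ absorbs the slack from applying the direct-product bound to roughly $c/6$ carefully chosen middle matchings. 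Once the direct-product bound is in place, these remaining steps are largely bookkeeping.
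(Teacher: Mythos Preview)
Your proposal correctly identifies the direct-product bound (your displayed inequality $\Ex_\Pi \sum_{\bv_T} \Pr(\cdots\mid\Pi)^2 \leq (O(C)\cdot s_2)^{|T|/2}$) as the technical core; this is precisely the paper's \Cref{lem_delta_S}. There is, however, a genuine gap in how you connect this to the TVD bound, and your reduction is organized differently from the paper's.

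The gap concerns your choice of pointer $P(v)=G_{0\to 2c-1}(v)$. Alice's message $\Pi$ is a function of her \emph{entire} input $E_2,E_4,\ldots,E_{2c}$, and the $\Y$/$\N$ constraint is a constraint on $E_{2c}\circ P$. An $\ell_2$ bound on $P$ alone does not control the TVD of $\Pi$: even if $P$ were perfectly uniform given $(\Pi,E_{\odd})$, the message $\Pi$ can still reveal information about $E_{2c}$ that distinguishes $\Y$ from $\N$. (Relatedly, the sentence ``once $E_{\odd}$ is fixed, $P=E_{2c-1}\circ\cdots\circ E_1$ is determined'' is simply false, since $P$ depends on Alice's matchings $E_2,\ldots,E_{2c-2}$.) The paper sidesteps this by taking the pointer to be $G_{0\to 2c}(v_0)$, so that the events to be compared are literally $\{G_{0\to 2c}(v_0)=\Y(v_0)\}$ versus $\{G_{0\to 2c}(v_0)=\N(v_0)\}$ and the $\ell_2$-to-TVD step is immediate via Bayes and Cauchy--Schwarz. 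Correspondingly, the direct product in \Cref{lem_delta_S} ranges over \emph{all} of Alice's matchings $S\subseteq\{1,\ldots,c\}$ (including the one for $E_{2c}$), and the exact $\ell_2$ formula (\Cref{lem_sec_moment}) is an inclusion--exclusion over these subsets rather than an argument about ``$c/6$ carefully chosen'' matchings.

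The paper's reduction is also structured differently from your ``single vertex $\to$ joint via iterative conditioning'' plan. It first applies a triangle inequality over a chain of at most $2s$ injections to reduce to $\Y,\N$ differing at a single vertex $v_0$, then conditions on the full paths of the remaining $s-1$ vertices of $V_0$. This collapses the problem to a single-vertex $(1,c,(m,m))$ instance with $m=s_2-s+1$, to which \Cref{lem_one_round_marginal_ell_2} applies directly; no joint bound is ever needed. The prefactor $2s_2^2$ then arises as one factor of $s_2$ from the Bayes normalization and one factor of $2s\leq 2s_2$ from the triangle inequality, and the exponent $c/6$ is simply the square root of the $c/3$ exponent in \Cref{lem_one_round_marginal_ell_2}.
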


\newcommand{\lemmamultiround}{
Fix an increasing sequence $S$ and $c, C, \epsilon_0 > 0$. Assume that $\card{S} \geq 2$ and for all $s \leq s_2$, injections $\Y', \N' : [s] \to [s_2]$, and all randomized $(\card{S} - 2)$-round $(s, c, S_{\geq 2})$-protocols $\Pi'$ such that $\norm{\Pi'} \leq C$, we have:
\[
\tvd{ \distributions{ \rv{\Pi'}(G') }{ G' \sim \distnest_{s, c, S_{\geq 2}}^{\Y'} } }{ \distributions{ \rv{\Pi'}(G') }{ G' \sim \distnest_{s, c, S_{\geq 2}}^{\N'} } } \leq \epsilon_0 .
\]
Then, for all $s \leq s_1$, injections $\Y, \N : [s] \to [s_1]$, and all randomized $(\card{S} - 1)$-round $(s, c, S)$-protocols $\Pi$ such that $\norm{\Pi} \leq C$, we have:
\[
\tvd{ \distributions{ \rv{\Pi}(G) }{ G \sim \distnest_{s, c, S}^{\Y} } }{ \distributions{ \rv{\Pi}(G) }{ G \sim \distnest_{s, c, S}^{\N} } } \leq 2c\epsilon_0 + \Phi(s, c, s_2, C) ,
\]
where $\Phi(\cdot)$ is as defined in \Cref{lemma:oneround}.  
}
\begin{lemma}\label{lemma:multiround}
	\lemmamultiround
\end{lemma}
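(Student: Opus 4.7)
The plan is a round-elimination argument that reduces the $(|S| - 1)$-round protocol $\Pi$ on a $(s, c, S)$-almost-nested block graph to $2c$ invocations of the inductive hypothesis, plus the one-round cost given by \Cref{lemma:oneround}. The structural fact driving the reduction is that $\nest(s, c, S)$ can be viewed as a $(s, c, \{s_1, s_2\})$-almost-nested ``outer frame'' in which each pair of adjacent outer layers has a copy of $\nest(c, S_{\geq 2})$ plugged in between them. Concretely, a $(s, c, S)$-almost-nested graph decomposes into $2c$ ``super-blocks'': the $2c - 2$ middle super-blocks are pure $(c, S_{\geq 2})$-nested graphs, while the two endpoint super-blocks are $(c, S_{\geq 2})$-almost-nested with one side shrunken from $s_2$ down to $s$ or $s_1$. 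Writing $M_i$ for the composition of super-block $i$, one has $G_{0 \to \n} = M_{2c} \circ \cdots \circ M_1$, so the Yes/No condition is equivalent to asking whether this composition equals $\Y$ or $\N$.

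Given this decomposition, the proof proceeds in three steps. First, I would bound the TVD contribution of Alice's first message $\Pi_1$ using \Cref{lemma:oneround} applied at the contracted level, in which each super-block is collapsed to its super-matching $M_i$; the contracted view forms a $(s, c, \{s_1, s_2\})$-almost-nested graph on which the first message behaves as a one-round protocol, so the TVD loss is at most $\Phi(s, c, s_2, C)$. Second, I would define $2c + 1$ hybrid distributions $H_0, \ldots, H_{2c}$ with $H_0 = \distnest_{s,c,S}^{\Y}$ and $H_{2c} = \distnest_{s,c,S}^{\N}$, where consecutive hybrids $H_{i-1}, H_i$ differ only in the conditional distribution of the $i$-th super-block given the others, obtained by ``rotating'' the $\Y$-vs-$\N$ discrepancy through the super-blocks one at a time. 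Third, for each $i \in [2c]$, I would construct an $(|S| - 2)$-round $(s', c, S_{\geq 2})$-protocol $\Pi^{\ast}_i$ (with appropriate $s' \leq s_2$ and two fixed injections $\Y', \N' : [s'] \to [s_2]$) simulating $\Pi$ on the $H_{i-1}$-vs-$H_i$ distinguishing task. Following the embedding strategy from \Cref{sec:technical-overview}, the players of $\Pi^{\ast}_i$ use public randomness to sample $\Pi_1$ and the other $2c - 1$ super-blocks from an approximating product conditional distribution, embed their input as super-block $i$, and then execute rounds $2, \ldots, |S| - 1$ of $\Pi$ with Alice and Bob swapped (since round $1$ of $\Pi$ has been absorbed). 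The inductive hypothesis yields advantage $\leq \epsilon_0$ for each $\Pi^{\ast}_i$, and summing over the $2c$ hybrid steps together with the first-round $\Phi$ contribution via the triangle inequality gives the claimed bound $2c \epsilon_0 + \Phi(s, c, s_2, C)$.

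The main obstacle lies in faithfully implementing the simulation in the third step. The first message $\Pi_1$ couples all super-blocks together through Alice's joint input, and the conditional distribution of the other super-blocks given $\Pi_1$ is not a product distribution, yet the players of $\Pi^{\ast}_i$ cannot sample from the true conditional without additional communication. The workaround is to sample from an approximating product distribution and to invoke \Cref{lemma:oneround} at the contracted super-matching level to bound the resulting TVD error (which gets absorbed into the $\Phi$ term). Designing the hybrids $H_i$ so that each $H_{i-1} \to H_i$ transition reduces cleanly to distinguishing two prescribed injections $\Y', \N' : [s'] \to [s_2]$ on super-block $i$, of exactly the shape required by the inductive hypothesis, is the principal combinatorial subtlety. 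The Alice/Bob role swap in the reduction (handled by reversing the layer indexing on the embedded super-block) and the almost-nested structure of the two endpoint super-blocks are additional bookkeeping details that must be tracked but introduce no fundamentally new ideas.
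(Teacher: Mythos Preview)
Your outline follows the informal round-elimination sketch of \Cref{sec:technical-overview} (absorb round~1, swap roles, embed), whereas the paper's actual proof in \Cref{sec:multiround} takes a structurally different route. That difference matters because your step~1 has a concrete gap: you say $\Pi_1$ ``behaves as a one-round protocol'' on the contracted $(s,c,\{s_1,s_2\})$-graph and invoke \Cref{lemma:oneround}, but $\Pi_1$ is a function of Alice's \emph{full} input in the big graph (all even-indexed edges across all $2c$ super-blocks), not of the even super-matchings of the contracted graph. Indeed each super-matching $M_i$ is the composition of both players' edges within block~$i$, so Alice does not even know any single $M_i$, and $\Pi_1$ is not a one-round $(s,c,\{s_1,s_2\})$-protocol message in the sense \Cref{lemma:oneround} requires. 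What the overview actually needs at that point is \cref{eq:over7}---a bound on the joint distribution of the block compositions given $(\Pi_1,M_{\odd})$---which would have to be extracted from the \emph{proof} of \Cref{lemma:oneround} (via \Cref{lem_one_round_marginal_ell_2}), not from its statement. Separately, your step~2 hybrids are underspecified: since $\Y,\N$ constrain the full composition $M_{2c}\circ\cdots\circ M_1$, ``rotating the discrepancy through super-blocks one at a time'' does not obviously define intermediate distributions or a clean pair $(\Y',\N')$ on a single block at each step.

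The paper sidesteps both issues by reversing the order. It first observes via the rectangle property (\Cref{lemma:hatpi:bob}) that $L_{\bpriv}$ is conditionally independent of $G_{0\to\n}$ given $(\Pi^{\trans},G_{\bob},L_{\pub})$, so it suffices to control the triple $(\hat G_{\bob},\Pi^{\trans},L_{\pub})$. It then uses the inductive hypothesis to replace Bob's $c$ odd super-matchings by fixed canonical values $\hat H^\star_{\bob}$ (\Cref{lemma:hybrid} and \Cref{lemma:hybrid1}), paying $c\epsilon_0$ on each of the $\Y$ and $\N$ sides; so the hybrid runs over only Bob's $c$ layers, and the reduced protocol $\Pi'$ simulates $\Pi^{\trans}$, which already has $|S|-2$ rounds because the \emph{last} round of $\Pi$ (Alice's message) is folded into the output---hence no role swap. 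Only after this fixing does \Cref{lemma:oneround} enter: with $\hat H_{\bob}$ pinned, Alice can privately sample the entire big graph $H$ from $\hat G_{\alice}$ alone and compute the whole transcript $\Pi^{\trans}(H)$ by herself (\Cref{lemma:hatpi:identical}); the packaged map $\hat\Pi$ is now a genuine one-round $(s,c,\{s_1,s_2\})$-protocol, and \Cref{lemma:oneround} delivers the $\Phi$ term.
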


We now prove~\Cref{thm:FMT}.

\begin{proof}[Proof of~\Cref{thm:FMT} (assuming~\Cref{lemma:oneround,lemma:multiround})] 
Let $C = \norm{\Pi}$ and observe that $\card{S} = r+1$. Using \Cref{lemma:oneround} once and \Cref{lemma:multiround} $r-1$ times, we get that:
\[
\tvd{ \distributions{ \rv{\Pi}(G) }{ G \sim \distnest_{c, S}^{\Y} } }{ \distributions{ \rv{\Pi}(G) }{ G \sim \distnest_{c, S}^{\N} } } \leq (2c)^{r - 1} \cdot \sum_{i = 2}^{r} \Phi(s_{i-1}, c, s_i, C) .
\]
Plugging in the definition of $\Phi(\cdot)$ from \Cref{lemma:oneround}, we get:
\[
\tvd{ \distributions{ \rv{\Pi}(G) }{ G \sim \distnest_{c, S}^{\Y} } }{ \distributions{ \rv{\Pi}(G) }{ G \sim \distnest_{c, S}^{\N} } } \leq (2c)^{r - 1} \cdot \sum_{i = 2}^{r} 2s_i^2\cdot \left( \frac{40C}{s_i - s_{i-1}} \right)^{c/6} .
\]
We simplify using the definition of $S$:
\[
\tvd{ \distributions{ \rv{\Pi}(G) }{ G \sim \distnest_{c, S}^{\Y} } }{ \distributions{ \rv{\Pi}(G) }{ G \sim \distnest_{c, S}^{\N} } } \leq (2c)^{r - 1} \cdot 8rm^2 \cdot  \left( \frac{80 r C}{m} \right)^{c/6} .
\]
Using $C \leq 2^{-\mathcal{O}(r)} \cdot m^{1 - \frac{200}{c}}$, we get:
\[
\tvd{ \distributions{ \rv{\Pi}(G) }{ G \sim \distnest_{c, S}^{\Y} } }{ \distributions{ \rv{\Pi}(G) }{ G \sim \distnest_{c, S}^{\N} } } \leq (2c)^{r - 1} \cdot 8rm^2 \cdot   2^{-\mathcal{O}(cr)} \cdot m^{- 30} \leq \frac{1}{m^3} .
\]
\end{proof}

\begin{remark}\label{rem:error-FMT}
	In an earlier version of this paper, we defined a similar FMT problem with the difference that the size of every layer was equal and the final matching corresponded to a \emph{perfect matching} between the first and last layers. And we mistakenly claimed an analogue of~\Cref{thm:FMT} for this problem. 
	
	However, one can verify that in that case, if the \emph{parity} of permutations $\Y$ and $\N$ are different, then there is a simple $O(1)$ communication protocol that can solve the problem in one-round by Alice sending the parity of her combined matchings. As such,~\Cref{thm:FMT} in such generality cannot hold 
	for that version of FMT problem, hence necessitating the approach taken in the current version. 
\end{remark}

\subsection{The Proof of \Cref{res:main}}
\label{sec:OMC}

In this section, we formalize and prove \Cref{res:main} assuming \Cref{thm:FMT}. 

\begin{theorem}
\label{thm:mainformal}
For even $k > 0$ and $n,r > 0$ such that $n$ is an odd multiple of $k$, any $r$-round communication protocol $\prot$ that solves $\OMC_{n,k}$ with probability at least $(1/2+(k/n)^2)$ has communication cost $\norm{\prot} = 2^{-\mathcal{O}(r)} \cdot (n/k)^{1 - 10^4 \cdot k^{-1/r}}$.
\end{theorem}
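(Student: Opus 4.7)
The strategy is to reduce $\FMT$ to $\OMC_{n,k}$: given any $r$-round protocol $\prot$ for $\OMC_{n,k}$ achieving advantage at least $(k/n)^2$, we will build an $r$-round protocol for an appropriate $\FMT$ instance with the same communication cost and then invoke \Cref{thm:FMT} to derive a contradiction. Parameter setup: set $m := n/k$ (odd by assumption) and choose $c = \Theta(k^{1/r})$ so that $(2c)^r \approx k - 1$, i.e., the nested-block depth satisfies $\ell + 1 \approx k$ where $\ell = \n_{c,S} = (2c)^r$. Instantiate $\FMT$ on the sequence $S = m, m + \lfloor m/r \rfloor, \ldots, 2m$ with two permutations $\Y := \mathrm{id}_{[m]}$ and $\N := $ the cyclic shift by one on $[m]$.

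The reduction itself proceeds as follows. Given a $(c,S)$-nested block graph $G = (\bigsqcup V_i, \bigsqcup G_i)$ sampled from $\tfrac12(\distnest^{\Y} + \distnest^{\N})$, the players construct an $\OMC_{n,k}$ input. Alice, who owns the even-indexed matchings $\{G_i\}_{i\text{ even}}$, forms her OMC matching $M_A$ by combining these with a canonical closing matching between $V_0$ and $V_\ell$ (the identity on a fixed ordering of these $m$-element sets). Bob forms $M_B$ from his odd-indexed matchings $\{G_i\}_{i\text{ odd}}$ together with any canonical local additions needed for $2$-regularity. Tracing the $m$ principal $V_0$-to-$V_\ell$ paths and the closing edge yields the key dichotomy: in the $\Y$-case (so $G_{0 \to \ell} = \mathrm{id}$), each principal path closes into its own cycle of length $\ell + 1 = k$, giving the \No case of $\OMC$ ($m$ disjoint $k$-cycles); in the $\N$-case, the cyclic shift chains all $m$ principal paths into one cycle of length $mk = n$, giving the \Yes case.

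The principal technical obstacle is the varying inner-layer sizes: layers strictly larger than $m$ contain ``dead-end'' vertices off any $V_0$-to-$V_\ell$ path that must be integrated so the constructed graph is exactly $2$-regular, bipartite, and lies inside the $\OMC_{n,k}$ promise. We leverage that the \emph{nice} condition fixes the combinatorial type of these dead-end sub-paths inside each nested block, which lets both players canonically augment their matchings with local edges---computable from their own inputs (possibly aided by shared public randomness)---that weave the dead-end vertices into the main cycles. With the right augmentation, each No-case main cycle still has length $k$ after absorbing its share of dead-end vertices, and in the Yes case the single large cycle extends to cover every dead-end vertex; the overall vertex count is tuned to equal $n$. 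Crucially the augmentation is identical in the $\Y$ and $\N$ cases, so the OMC-distinguishing advantage is inherited from the FMT-distinguishing advantage.

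Finally, plug the reduction into \Cref{thm:FMT}. If $\norm{\prot} \leq 2^{-\mathcal{O}(r)} \cdot m^{1 - 200/c}$, then the induced FMT protocol has the same communication, and by \Cref{thm:FMT} its output-distributions in the $\Y$ and $\N$ cases differ in total variation by at most $1/m^3$. A standard distinguishing-to-TVD inequality then bounds the OMC success probability by $\tfrac12 + O(1/m^3) < \tfrac12 + (k/n)^2$, contradicting the hypothesis. Contrapositively, $\norm{\prot} > 2^{-\mathcal{O}(r)} \cdot m^{1 - 200/c}$; substituting $m = n/k$ and $c = \Theta(k^{1/r})$ and absorbing constants into $10^4$ yields the stated bound $\norm{\prot} \geq 2^{-\mathcal{O}(r)} \cdot (n/k)^{1 - 10^4 \cdot k^{-1/r}}$.
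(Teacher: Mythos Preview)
Your overall strategy---reduce $\FMT$ to $\OMC_{n,k}$ and invoke \Cref{thm:FMT}---is exactly right, and the final paragraph (contrapositive, parameter substitution) is fine. The gap is in the reduction itself, specifically the paragraph where you deal with the ``dead-end'' vertices.

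The $\FMT$ instance has layers of sizes ranging from $m$ up to $2m$, so the total vertex count is some number strictly between $m(\ell+1)$ and $2m(\ell+1)$, not $n = mk$. Your plan is to ``weave the dead-end vertices into the main cycles'' via locally computable augmentations, but this does not work as stated: (i) the identity of the unmatched vertices in a layer $V_i$ is determined by the matching $E_i$, which belongs to the \emph{other} player, so a player cannot canonically augment without that information; (ii) even granting that, there is no reason the dead-end sub-paths distribute so that every one of the $m$ principal cycles lands at length exactly $k$; and (iii) you never explain how the total vertex count becomes exactly $n$. The ``nice'' condition only tells you how many through-paths exist, not how to absorb the leftover vertices into uniformly sized cycles.

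The paper handles this with a concrete construction you are missing. It takes $c$ to be the largest integer with $(2c)^r < k/4$ (so $\ell$ is only about $k/4$, not $k$), pads every layer with dummy vertices to size exactly $2m$ and completes each matching to a perfect matching, then extends by identity matchings until there are $k/4$ layers. It then makes a second \emph{copy} of this padded graph and has Bob add four specific identity matchings cross-connecting the two copies (top-to-top, bottom-to-bottom, and two diagonals). This yields exactly $2 \cdot 2m \cdot (k/4) = mk = n$ vertices, and one checks directly that the resulting $2$-regular graph is a single Hamiltonian cycle when $G_{0\to\ell}=\Y$ (this is where the oddness of $m$ is used) and a union of $k$-cycles when $G_{0\to\ell}=\N$. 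The doubling-and-cross-connecting is the missing idea; without it the vertex count and the cycle lengths do not come out right.
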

\begin{proof}
Let us first assume that $k$ is a multiple of $4$ and define $m = n/k$. As $n$ is an odd multiple of $k$, we have that $m$ is odd. Let $c$ be the largest such that $(2c)^r < k/4$, $\Y$ be the permutation on $[m]$ such that $\Y(i) = (i + 1) \bmod m$ for all $i \in [m]$ and $\N$ be the identity permutation, {\em i.e.}, $\N(i) = i$ for all $i \in [m]$. Finally, let $S = m, m + \left\lfloor{\frac{m}{r}}\right\rfloor, m + \left\lfloor{\frac{2m}{r}}\right\rfloor, \cdots, 2m$ as in the statement of \Cref{thm:FMT}.

We use $\prot$ to construct an $r$-round $(c,S)$-protocol $\Pi$ as follows: Given as input $\nest(c, S)$-layered graph $G = ( \bigsqcup_{i=0}^{\n} V_i, \bigsqcup_{i=1}^{\n} E_i )$, where Alice's input is the edges $E_i$ for even $i$ and Bob's input is the edges $E_i$ for odd $i$, the protocol $\Pi$ behaves as follows:

\begin{enumerate}[label=$(\arabic*)$]
\item \label{item:fmt2omc1}Alice and Bob add dummy vertices to all layers so that all of them have size $2m$. For all odd (resp. even) $i \in [\n]$, Alice (resp. Bob) adds dummy edges to $E_i$ so that it is a perfect matching between $V_{i-1}$ and $V_i$. 
\item Alice and Bob add $k/4 - \n - 1$ new layers of vertices of size $2m$ each. For all $\n < k' < k/4$, if $k'$ is even, then Alice adds the identity matching between layer $k' - 1$ and layer $k'$ to her set of edges. Otherwise, Bob adds the identity matching to his set of edges.
\item The players make a copy of the graph, denoted by $\tilde{G}$, with layers $\tilde{V}_0, \tilde{V}_1,\ldots,\tilde{V}_{k/4-1}$.
\item Bob adds the following edge to his input (see Figure~\ref{fig_proof_main_informal}):
	\begin{enumerate}
		\item an identity matching (i.e., $i$-th vertex to $i$-th vertex) between the bottom ({\em i.e.}, not the dummy vertices added in \Cref{item:fmt2omc1}) $m$ vertices in $V_{k/4-1}$ and the bottom $m$ vertices $\tilde V_0$,
		\item an identity matching between the bottom $m$ vertices in $\tilde V_{k/4-1}$ and the top $m$ vertices in $V_{0}$,
		\item an identity matching between the top $m$ vertices in $V_{k/4-1}$ and the top $m$ vertices in $\tilde V_{k/4-1}$,
		\item an identity matching between the top $m$ vertices in $\tilde V_0$ and the bottom $m$ vertices in $V_0$.
	\end{enumerate}
\item Alice and Bob run the protocol $\pi$ on the new graph $G_{\omc}$. 
\end{enumerate}
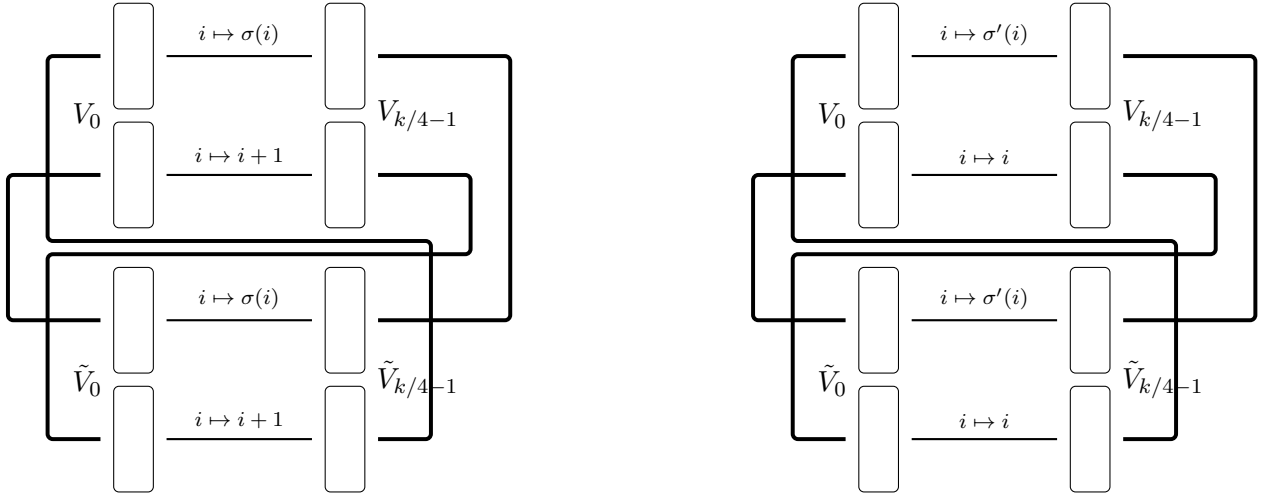
\begin{figure}
	\centering
    \begin{subfigure}{0.4\textwidth}
        \centering
        \begin{tikzpicture}
        	\draw[rectangle, rounded corners=2pt] (0, 0) rectangle (15pt, 40pt);
        	\draw[rectangle, rounded corners=2pt] (0, 45pt) rectangle (15pt, 85pt);
        	\draw[rectangle, rounded corners=2pt] (0, 100pt) rectangle (15pt, 140pt);
        	\draw[rectangle, rounded corners=2pt] (0, 145pt) rectangle (15pt, 185pt);

        	\draw[rectangle, rounded corners=2pt] (80pt, 0) rectangle (95pt, 40pt);
        	\draw[rectangle, rounded corners=2pt] (80pt, 45pt) rectangle (95pt, 85pt);
        	\draw[rectangle, rounded corners=2pt] (80pt, 100pt) rectangle (95pt, 140pt);
        	\draw[rectangle, rounded corners=2pt] (80pt, 145pt) rectangle (95pt, 185pt);

        	\node at (-10pt, 142.5pt) {$V_0$};
        	\node at (-10pt, 42.5pt) {$\tilde V_0$};
        	\node at (115pt, 142.5pt) {$V_{k/4-1}$};
        	\node at (115pt, 42.5pt) {$\tilde V_{k/4-1}$};

        	\draw (20pt, 20pt) edge [thick] node [above] {\scriptsize $i\mapsto i+1$} (75pt, 20pt);
        	\draw (20pt, 65pt) edge [thick] node [above] {\scriptsize $i\mapsto \sigma(i)$} (75pt, 65pt);
        	\draw (20pt, 120pt) edge [thick] node [above] {\scriptsize $i\mapsto i+1$} (75pt, 120pt);
        	\draw (20pt, 165pt) edge [thick] node [above] {\scriptsize $i\mapsto \sigma(i)$} (75pt, 165pt);

        	\draw [line width=1.5pt, rounded corners=2pt] (100pt, 165pt) -- (150pt, 165pt) -- (150pt, 65pt) -- (100pt, 65pt);
        	\draw [line width=1.5pt, rounded corners=2pt] (100pt, 20pt) -- (120pt, 20pt) -- (120pt, 95pt) -- (-25pt, 95pt) -- (-25pt, 165pt) -- (-5pt, 165pt);
        	\draw [line width=1.5pt, rounded corners=2pt] (100pt, 120pt) -- (135pt, 120pt) -- (135pt, 90pt) -- (-25pt, 90pt) -- (-25pt, 20pt) -- (-5pt, 20pt);
        	\draw [line width=1.5pt, rounded corners=2pt] (-5pt, 65pt) -- (-40pt, 65pt) -- (-40pt, 120pt) -- (-5pt, 120pt);
        \end{tikzpicture}
        \caption{\scriptsize Vertex $i$ in top of $V_0$ connects to vertex $i+2$ in top of $V_0$.}\label{fig_sub_fig_Pyes}
    \end{subfigure}
    \hfill
    \begin{subfigure}{0.4\textwidth}
		\centering
       \begin{tikzpicture}
        	\draw[rectangle, rounded corners=2pt] (0, 0) rectangle (15pt, 40pt);
        	\draw[rectangle, rounded corners=2pt] (0, 45pt) rectangle (15pt, 85pt);
        	\draw[rectangle, rounded corners=2pt] (0, 100pt) rectangle (15pt, 140pt);
        	\draw[rectangle, rounded corners=2pt] (0, 145pt) rectangle (15pt, 185pt);

        	\draw[rectangle, rounded corners=2pt] (80pt, 0) rectangle (95pt, 40pt);
        	\draw[rectangle, rounded corners=2pt] (80pt, 45pt) rectangle (95pt, 85pt);
        	\draw[rectangle, rounded corners=2pt] (80pt, 100pt) rectangle (95pt, 140pt);
        	\draw[rectangle, rounded corners=2pt] (80pt, 145pt) rectangle (95pt, 185pt);

        	\node at (-10pt, 142.5pt) {$V_0$};
        	\node at (-10pt, 42.5pt) {$\tilde V_0$};
        	\node at (115pt, 142.5pt) {$V_{k/4-1}$};
        	\node at (115pt, 42.5pt) {$\tilde V_{k/4-1}$};

        	\draw (20pt, 20pt) edge [thick] node [above] {\scriptsize $i\mapsto i$} (75pt, 20pt);
        	\draw (20pt, 65pt) edge [thick] node [above] {\scriptsize $i\mapsto \sigma'(i)$} (75pt, 65pt);
        	\draw (20pt, 120pt) edge [thick] node [above] {\scriptsize $i\mapsto i$} (75pt, 120pt);
        	\draw (20pt, 165pt) edge [thick] node [above] {\scriptsize $i\mapsto \sigma'(i)$} (75pt, 165pt);

        	\draw [line width=1.5pt, rounded corners=2pt] (100pt, 165pt) -- (150pt, 165pt) -- (150pt, 65pt) -- (100pt, 65pt);
        	\draw [line width=1.5pt, rounded corners=2pt] (100pt, 20pt) -- (120pt, 20pt) -- (120pt, 95pt) -- (-25pt, 95pt) -- (-25pt, 165pt) -- (-5pt, 165pt);
        	\draw [line width=1.5pt, rounded corners=2pt] (100pt, 120pt) -- (135pt, 120pt) -- (135pt, 90pt) -- (-25pt, 90pt) -- (-25pt, 20pt) -- (-5pt, 20pt);
        	\draw [line width=1.5pt, rounded corners=2pt] (-5pt, 65pt) -- (-40pt, 65pt) -- (-40pt, 120pt) -- (-5pt, 120pt);
        \end{tikzpicture}
		\caption{\scriptsize Vertex $i$ in top of $V_0$ connects to vertex $i$ in top of $V_0$.}\label{fig_sub_fig_Pno}
     \end{subfigure}
	\caption{Figure~\ref{fig_sub_fig_Pyes} and~\ref{fig_sub_fig_Pno} show the new graph $G_{\omc}$ when the final matching of $G$ is $\Y$ and $\N$ respectively.}\label{fig_proof_main_informal}
\end{figure}

Suppose the matching from top $m$ vertices of $V_0$ to top $m$ vertices of $V_{\n}$ is $\sigma$.
Observe that if $G$ has final matching $\Y$, then the $i$-th vertex in bottom of $V_0$ connects to $(i+1)$-th vertex in bottom of $V_{k/4-1}$, which connects to $(i+1)$-th vertex in bottom of $\tilde V_0$, and $(i+2)$-th vertex in bottom of $\tilde V_{k/4-1}$.
Then it connects to $(i+2)$-th vertex in the top of $V_0$, which connects to the $\sigma(i+2)$-th vertex in the top of $V_{k/4-1}$, and then the $\sigma(i+2)$-th vertex in the top of $\tilde V_{k/4-1}$.
It then connects to the $(i+2)$-th vertex in the top of $\tilde V_0$, and finally, to the $(i+2)$-th vertex in the bottom of $V_0$.
Therefore, since $m$ is odd, all vertices are connected, and $G_{\omc}$ is simply a Hamiltonian cycle.
On the other hand, if $G$ has final matching $\N$, one may verify similarly that $G_{\omc}$ has $n/k$ vertex-disjoint $k$-cycles.
This means that $\Pi$ has probability of success $\geq (1/2+(k/n)^2)$. 
On the other hand, $\norm{\Pi}=\norm{\prot}$, and by \Cref{thm:FMT},  
\[
\norm{\Pi} \geq 2^{-\mathcal{O}(r)} \cdot m^{1 - \frac{200}{c}}.
\]
By the fact that $\n =  (2c)^{r}$ and $k/4 \leq (2(c + 1))^r$, we get $c+1\geq\frac{1}{2}(k/4)^{1/r}  \geq \frac{1}{8} k^{1/r}$. Next, if $k^{1/r} < 10^4$, the result is trivial. Assuming, otherwise, we have $c \geq \frac{1}{16}k^{1/r}$, and thus,  
\begin{align*}
	\norm{\prot} &\geq 2^{-\mathcal{O}(r)} \cdot m^{1 - \frac{200}{c}} \\
	&\geq 2^{-\mathcal{O}(r)} \cdot m^{1 - 10^4 \cdot k^{-1/r}}.
\end{align*}
This proves the theorem when $k$ is a multiple of $4$.

For even $k$ that are not a multiple of $4$,  we make the same reduction for $k'=k - 2$ and $m=n/k$.
Then we split each vertex in the top of $V_0$ into a chain of length $3$.
The new graph has $m\cdot k'+2m=mk=n$ vertices.
Moreover,  if the graph is a Hamiltonian cycle before the split, then it remains a Hamiltonian cycle; if the graph was a disjoint union of $k'$-cycles, then the split adds $2$ to the length of each cycle. 
Hence, this proves the theorem for all even $k$.
\end{proof}

\begin{remark}
\label{rem:mainformal}
Observe that implicit in our proof above is a distribution $\mathcal{D}_{\mathsf{OMC}}$ for the $\OMC_{n, k}$ problem that is ``hard'' for $r$-rounds protocols. It is notable that $\mathcal{D}_{\mathsf{OMC}}$ can even be taken to be the uniform distribution that with probability $\frac{1}{2}$, is the uniform distribution over all Hamiltonian cycles and with probability $\frac{1}{2}$, is the uniform distribution on all graphs with $(n/k)$ vertex-disjoint $k$-cycles provided  that before running the protocol $\pi$, Alice and Bob permute the identity of the vertices randomly (using public randomness).
\end{remark}
\begin{remark}
\label{rem:arb_mat}
	By plugging in different pairs of $\Y$ and $\N$, we can prove lower bounds for distinguishing between different configurations of cycles in a graph.
	For example, by setting $\Y$ to the matching such that \[
		\Y(i)=\begin{cases}
			i-3 & \textrm{if $4\mid i$} \\
			i+1 & \textrm{otherwise}
		\end{cases}
	\]
	and $\N(i)=i$, and assuming $m=n/k$ is a multiple of four, the same argument proves that distinguishing between union of $2k$-cycles and $k$-cycles is hard.
\end{remark}


\newcommand{\rP}{\rv{P}}

\section{Lower Bound for One Round (\Cref{lemma:oneround})}\label{sec:single-block}

We prove~\Cref{lemma:oneround}, restated below, in this section. 

\newcommand{\contlemoneroundmarginalelltwo}{Let $m > 0$ and $G$ be a graph sampled from $\distnest_{1, c, (m,m)}$ with $v_0$ being the only vertex in the first layer. For any $1$-round $(1, c, (m,m))$-protocol $\Pi$ with communication $C$, we must have
	\[
		\Ex_{\Pi,M_{\odd}}\norm{\rv G_{0\to 2c}(v_0) \mid \rv{\Pi}=\Pi,\rv{M}_{\odd}=M_{\odd}}_2^2 \leq \frac{1}{m}+\left(\frac{40C}{m}\right)^{c/3}.
	\]}
	
\begin{restate}[\Cref{lemma:oneround}]
	\lemmaoneround
\end{restate}
\begin{proof} 

To start, note that we can assume without loss of generality that $s_1 = s_2$. Indeed, if we show the lemma for $s_1 = s_2$, the lemma for $s_1 < s_2$ follows as any protocol for the latter case can be use to construct a protocol for the former case by Alice deleting $s_2 - s_1$ vertices from the last layer. We can also assume that $\norm{\Pi} = C$. In order to show the lemma, we show that if $\Y, \N$ differ only in one coordinate, then:
\[
\tvd{ \distributions{ \rv{\Pi}(G) }{ G \sim \distnest_{s, c, S}^{\Y} } }{ \distributions{ \rv{\Pi}(G) }{ G \sim \distnest_{s, c, S}^{\N} } } \leq s_2 \cdot \left(\frac{40C}{s_2-s}\right)^{c/6} .
\]
The lemma then follows by triangle inequality as $\Y, \N$ differ in at most $s$ coordinates and $s<s_2$. Thus, there exists a sequence of injections $\Z_0,\cdots,\Z_t$ for $t\leq 2s$ such that $\Z_0=\Y,\Z_t=\N$, $\Z_i$ and $\Z_{i+1}$ differ in exactly one coordinate. 

Without loss of generality, assume that $\Y, \N$ differ in the coordinate corresponding to vertex $v_0 \in V_0$. Denote the other $s-1$ vertices by $S=V_0\setminus \{v_0\}$. Throughout this section, we use $M_A$ to denote Alice's input and $M_B$ to denote Bob's input. We abuse notation slightly and use $\Pi$ to denote Alice's message and $\Pi^{\out}$ to denote Bob's output. The following lemma is the core of the proof. It bounds the $\ell_2$-norm of the distribution of the image of $v_0$ after one message.

\begin{lemma}[$\ell_2$-norm of the $\rv{G}_{0\to 2c}(v_0)$-vector]\label{lem_one_round_marginal_ell_2}
	\contlemoneroundmarginalelltwo
\end{lemma}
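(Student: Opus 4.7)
The plan is to bound $\Ex_{\rv{\Pi}, \rv{M}_{\odd}}\|\rv{G}_{0\to 2c}(v_0) \mid \rv{\Pi}, \rv{M}_{\odd}\|_2^2$ along the lines sketched in~\Cref{sec:tech}. The argument has three stages: expand the squared norm as a sum over ordered pairs of paths from $v_0$; average over the randomness of $M_{\odd}$ to reorganize the sum by a ``divergence pattern'' $S$; and then establish a direct-product bound controlling how much Alice's $C$-bit message can concentrate the joint conditional probability of several edges in her matchings.

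\textbf{Step 1 (path enumeration).} Given $M_{\odd}$, any path from $v_0$ to $w \in V_{2c}$ is uniquely specified by its sequence of even-layer vertices $\bv = (v_1, \ldots, v_c = w)$, and exists in $G$ iff the tuple of induced candidate edges $e_i(\bv, M_{\odd}) := (M_{2i-1}(v_{i-1}), v_i)$ is entirely contained in $M_{\even}$. Since $M_{\even} \perp M_{\odd}$ and $\Pi$ depends only on $M_{\even}$, one has
\[
\Pr[G_{0\to 2c}(v_0) = w \mid \Pi, M_{\odd}] = \sum_{\bv : v_c = w} q(\bv; \Pi, M_{\odd}),
\]
where $q(\bv; \Pi, M_{\odd}) := \Pr[\{e_i(\bv, M_{\odd})\}_{i\in[c]} \subseteq M_{\even} \mid \Pi]$. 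Squaring and summing over $w$ gives $\|G_{0\to 2c}(v_0)\|_2^2 = \sum_{\bv, \bv' : v_c = v'_c} q(\bv)q(\bv')$.

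\textbf{Step 2 (reorganizing by divergence).} I then take expectation over the uniform $M_{\odd}$ and index each ordered pair $(\bv, \bv')$ by the subset $S \subseteq [c]$ of matching-indices where their required even-edges differ. Pairs with $S = \emptyset$ (identical paths) contribute the ``baseline'' $\tfrac{1}{m}$ corresponding to the uniform endpoint distribution. For $|S| \geq 1$, the agreements outside $S$ impose equalities on random odd matchings whose joint satisfaction probability contributes the universal prefactor $\tfrac{1}{m(m-1)^{c-1}}$, while the $S$-indexed edges behave like a free tuple $\bv_S$ of candidate edges in the matchings indexed by $S$. This yields the decomposition from~\Cref{eq:over4}:
\[
\Ex_{\Pi, M_{\odd}}\|G_{0\to 2c}(v_0)\|_2^2 \leq \frac{1}{m} + \frac{1}{m(m-1)^{c-1}} \sum_{S \subseteq [c]} \Ex_{\Pi}\left[\sum_{\bv_S} \Pr[\bv_S \subseteq M_{\even} \mid \Pi]^2\right].
\]

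\textbf{Step 3 (direct product bound and combining).} The crux is the direct-product estimate
\[
\Ex_{\Pi}\left[\sum_{\bv_S} \Pr[\bv_S \subseteq M_{\even} \mid \Pi]^2\right] \leq (Cm)^{|S|/2}.
\]
The base case $|S|=1$ will follow from an $\ell_2$-type information lemma (the analogue of the \Cref{lem_info_perm} alluded to in the overview) translating $C$ bits of mutual information about a uniform matching $M$ into a bound on $\Ex_\Pi[\sum_{\hat e} \Pr[\hat e \in M \mid \Pi]^2]$. For $|S| > 1$ I proceed by induction: peel off one matching, condition on $\Pi$ and the previously processed edges, and apply Cauchy--Schwarz together with the chain rule for KL divergence. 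Since $\mi{M_{\even}}{\Pi} \leq C$ is shared across all of Alice's matchings, each additional matching in $S$ costs only a factor of $\sqrt{Cm}$. Plugging this into Step 2 and summing the resulting geometric series over $S \subseteq [c]$ gives the claimed bound $\frac{1}{m} + (40C/m)^{c/3}$ after tracking constants.

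\textbf{Main obstacle.} The principal difficulty is the direct-product bound of Step 3: a priori the message $\Pi$ can correlate its information across all $c$ matchings, so raising the single-matching $\ell_2$ bound to the $|S|$-th power is much too loose. Achieving the tight $\sqrt{Cm}$ per matching requires a careful information-theoretic chaining argument that decouples the matchings conditionally while respecting the global information budget $\mi{M_{\even}}{\Pi}\leq C$. This is what ultimately pins down the exponent $c/3$ in the final bound.
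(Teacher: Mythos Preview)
Your three-step plan matches the paper's proof structure: path expansion, the decomposition of \Cref{lem_sec_moment}, and the direct-product bound of \Cref{lem_delta_S}. However, your justification for Step~2 does not work as stated. There is no ``universal prefactor'' $\frac{1}{m(m-1)^{c-1}}$ arising directly from the odd-matching agreement probabilities: when you index pairs of paths by their divergence pattern and average over $M_{\odd}$, the probability weight you actually get is $\frac{1}{m^c(m-1)^{|T|}}$, where $T$ is the divergence set in the \emph{odd} layers, and this depends on $|T|$. The paper only reaches the single prefactor $\frac{1}{m(m-1)^{c-1}}$ after a delicate inclusion--exclusion producing an \emph{exact} alternating-sign identity (\Cref{lem_sec_moment}), followed by a closed-form evaluation of the resulting combinatorial coefficients (the $\alpha,\beta,\gamma$ recurrences there). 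Your stated inequality does follow from that identity by dropping the signs $(-1)^{c-|S|}$, but the direct heuristic you give for it is not a valid derivation.

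For Step~3, the paper's induction also peels off one matching at a time, but the mechanism is not Cauchy--Schwarz plus the KL chain rule. Instead it factors $p_S=p_{S'}\cdot p_{\{i\}}(\cdot\mid \bv_{S'},\Pi)$, bounds the single-matching $\ell_2$ sum $\sum_{v_{2i-1},v_{2i}}p_{\{i\}}(\cdot\mid \bv_{S'},\Pi)^2$ pointwise via \Cref{lem_info_perm}, and then controls the expectation over $\Pi$---which now appears under the non-uniform weights $p_{S'}(\bv_{S'}\mid\Pi)^2$---using the weighted-entropy inequality of \Cref{lem_weighted_sum_entropy}. A naive Cauchy--Schwarz at this step would decouple the wrong factors and does not recover the $\sqrt{Cm}$ per matching.
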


The lemma is proved in the next subsection. Here, we use it to prove \Cref{lemma:oneround}. As Bob's output is determined by his input and Alice's message, we have:
\begin{align*}
&\tvd{ \distributions{ \rv{\Pi}^{\out}(G) }{ G \sim \distnest_{s, c, S}^{\Y} } }{ \distributions{ \rv{\Pi}{\out}(G) }{ G \sim \distnest_{s, c, S}^{\N} } } \\
&\leq \tvd{ \distributions{ \rv{\Pi}(G), \rv{M}_{\odd} }{ G \sim \distnest_{s, c, S}^{\Y} } }{ \distributions{ \rv{\Pi}(G), \rv{M}_{\odd} }{ G \sim \distnest_{s, c, S}^{\N} } }   \\
&= \frac{1}{2} \cdot \sum_{\Pi, M_{\odd}} \left\lvert{ \Pr_{ \distnest_{s, c, S} } \left( \rv{\Pi}(G) = \Pi, \rv{M}_{\odd} = M_{\odd} \mid G_{0 \to 2c} = \Y \right) - \Pr_{ \distnest_{s, c, S} } \left( \rv{\Pi}(G) = \Pi, \rv{M}_{\odd} = M_{\odd} \mid G_{0 \to 2c} = \N \right) }\right\rvert .
\end{align*}
To continue, define $W$ be the event that $\rv G_{0\to 2c}(v)=\Y(v)(=\N(v))$ for all $v\neq v_0$. Thus, the event $G_{0 \to 2c} = \Y$ is equivalent to $W$ and $G_{0 \to 2c}(v_0) = \Y(v_0)$ and $G_{0 \to 2c} = \N$ is equivalent to $W$ and $G_{0 \to 2c}(v_0) = \N(v_0)$. Also, for any $a,b \in \{0\} \cup [2c]$, and any set $S\subseteq V_{a}$, we denote by $P_{[a, b]|S}$, the \emph{paths} from $S$ to the vertices in $V_b$. By Bayes's theorem, we have:
\begin{align*}
&\tvd{ \distributions{ \rv{\Pi}^{\out}(G) }{ G \sim \distnest_{s, c, S}^{\Y} } }{ \distributions{ \rv{\Pi}{\out}(G) }{ G \sim \distnest_{s, c, S}^{\N} } } \\
&\leq \frac{s_2}{2} \cdot \E_{\rv{\Pi}, \rv{M}_{\odd} \mid W} \left\lvert{ \Pr_{ \distnest_{s, c, S} } \left( \rv{G}_{0 \to 2c}(v_0) = \Y(v_0) \mid W, \Pi, M_{\odd} \right) - \Pr_{ \distnest_{s, c, S} } \left( \rv{G}_{0 \to 2c}(v_0) = \N(v_0) \mid W, \Pi, M_{\odd} \right) }\right\rvert \\
&\leq s_2 \cdot \max_{v_{2c} \in V_{2c}} \E_{\rv{\Pi}, \rv{M}_{\odd} \mid W} \left\lvert{ \Pr_{ \distnest_{s, c, S} } \left( \rv{G}_{0 \to 2c}(v_0) = v_{2c} \mid W, \Pi, M_{\odd} \right) - \frac{1}{s_2-s+1} }\right\rvert \tag{Triangle inequality} \\
&\leq s_2 \cdot \max_{v_{2c} \in V_{2c}} \E_{\rv{\Pi}, \rv{M}_{\odd}, \rv{P}_{[0,2c]\mid S} \mid W} \left\lvert{ \Pr_{ \distnest_{s, c, S} } \left( \rv{G}_{0 \to 2c}(v_0) = v_{2c} \mid W, \Pi, M_{\odd}, P_{[0,2c]\mid S} \right) - \frac{1}{s_2-s+1} }\right\rvert \tag{Triangle inequality} .
\end{align*}
To continue, note that the max of expectation is at most the expectation of the max. We have:
\begin{align*}
&\tvd{ \distributions{ \rv{\Pi}^{\out}(G) }{ G \sim \distnest_{s, c, S}^{\Y} } }{ \distributions{ \rv{\Pi}{\out}(G) }{ G \sim \distnest_{s, c, S}^{\N} } } \\
&\leq s_2 \cdot \E_{\rv{\Pi}, \rv{M}_{\odd}, \rv{P}_{[0,2c]\mid S} \mid W} \Bracket{ \max_{v_{2c} \in V_{2c}} \left\lvert{ \Pr_{ \distnest_{s, c, S} } \left( \rv{G}_{0 \to 2c}(v_0) = v_{2c} \mid W, \Pi, M_{\odd}, P_{[0,2c]\mid S} \right) - \frac{1}{s_2-s+1} }\right\rvert } \\
&\leq s_2 \cdot \sqrt{ \E_{\rv{\Pi}, \rv{M}_{\odd}, \rv{P}_{[0,2c]\mid S} \mid W} \Bracket{ \max_{v_{2c} \in V_{2c}} \left\lvert{ \Pr_{ \distnest_{s, c, S} } \left( \rv{G}_{0 \to 2c}(v_0) = v_{2c} \mid W, \Pi, M_{\odd}, P_{[0,2c]\mid S} \right) - \frac{1}{s_2-s+1} }\right\rvert^2 } } \tag{Jensen's inequality} .
\end{align*}
Next, we note that if $v = (v_1, v_2, \cdots, v_k)$ is a probability vector and $k \geq 2$, then $\sum_i v_i^2 \geq \frac{1}{k} + \max_i  (\frac{1}{k} - v_i)^2$. This gives:
\begin{align*}
&\tvd{ \distributions{ \rv{\Pi}^{\out}(G) }{ G \sim \distnest_{s, c, S}^{\Y} } }{ \distributions{ \rv{\Pi}{\out}(G) }{ G \sim \distnest_{s, c, S}^{\N} } } \\
&\leq s_2 \cdot \sqrt{ \E_{\rv{\Pi}, \rv{M}_{\odd}, \rv{P}_{[0,2c]\mid S} \mid W} \Bracket{ \norm{ \rv{G}_{0 \to 2c}(v_0) \mid W, \Pi, M_{\odd}, P_{[0,2c]\mid S} }^2_2 - \frac{1}{s_2-s+1} } } .
\end{align*}

To finish, let us analyze $\Pi$ on the distribution $\distnest_{s, c, S}$, {\em i.e.}, without the condition on the final matching $\rv G_{0\to 2c}$. Conditioned on $W, P_{[0,2c]\mid S}$, the remaining graph simply becomes a random graph sampled from $\distnest_{1, c, (s_2-s+1, s_2-s+1)}$. By \Cref{lem_one_round_marginal_ell_2}, we have:
\[
\tvd{ \distributions{ \rv{\Pi}^{\out}(G) }{ G \sim \distnest_{s, c, S}^{\Y} } }{ \distributions{ \rv{\Pi}{\out}(G) }{ G \sim \distnest_{s, c, S}^{\N} } } \leq s_2 \cdot \left(\frac{40C}{s_2-s+1}\right)^{c/6} \leq s_2 \cdot \left(\frac{40C}{s_2-s}\right)^{c/6} .
\]

\end{proof}

\subsection{Block Matching of a Single Vertex After the First Round (\Cref{lem_one_round_marginal_ell_2})}

In this subsection, we prove the $\ell_2$ bound on the matching of a single vertex after one message.
\begin{restate}[\Cref{lem_one_round_marginal_ell_2}]
	\contlemoneroundmarginalelltwo
\end{restate}
\noindent
The rest of this subsection and the next one is then dedicated to the proof of~\Cref{lem_one_round_marginal_ell_2}. 

\paragraph{Notation.} Let $T=\{t_1,\ldots,t_l\}$ be a subset of $\{0,\ldots,2c\}$, we use $V_T$ to denote $V_{t_1}\times V_{t_2}\times\cdots\times V_{t_l}$, i.e., the set of all $l$-tuples of vertices with one vertex from each layer $V_{t_i}$.

Fix a message $\Pi$, and consider the distribution of Alice's input matchings conditioned on $\rv \Pi=\Pi$.
Let $\bv=(v_0,v_1,\ldots,v_{2c})\in V_{\{0,\ldots,2c\}}$ be a sequence of vertices (i.e., $v_i\in V_i$), and define: 
\[
	p(\bv\mid \Pi):=\Pr[\forall i\in\{1,\ldots,c\},\rv G_{2i}(v_{2i-1})=v_{2i}\mid \rv \Pi=\Pi],
\]
that is, $p(\bv\mid\Pi)$ denotes the probability that the sequence $\bv$ is consistent with Alice's input conditioned on her sending the message $\Pi$.

Likewise, for any subset of Alice's matchings $S\subseteq \{1,\ldots,c\}$, we define $p_S$ as follows: For any $\bv_S\in V_{(2S-1)\cup 2S}$ (where $2S-1 := \set{2i-1 \mid i \in S}$ and similarly $2S := \set{2i \mid i \in S}$, i.e., the sets of endpoints of 
Alice's matchings in $S$),
\[
	p_S(\bv_S\mid \Pi):=\Pr[\forall i\in S,\rv G_{2i}(v_{2i-1})=v_{2i}\mid\rv\Pi=\Pi].
\]
It is easy to verify the following claim (it simply calculates $p_S(\bv_S\mid \Pi)$ by going over all possible choices for edges in one extra matching of Alice which is not in $S$).
\begin{claim}\label{cl_delta_sum}
	For any $i\notin S$, $v_{2i}\in V_{2i}$ and $\bv_S\in V_{(2S-1)\cup 2S}$, we have
	\[
		p_S(\bv_S\mid \Pi)=\sum_{v_{2i-1}\in V_{2i-1}} p_{S\cup \{i\}}(\bv_S,v_{2i-1},v_{2i}\mid \Pi).
	\]
	Similarly, for any $v_{2i-1}\in V_{2i-1}$,
	\[
		p_S(\bv_S\mid \Pi)=\sum_{v_{2i}\in V_{2i}} p_{S\cup \{i\}}(\bv_S,v_{2i-1},v_{2i}\mid \Pi).
	\]
\end{claim}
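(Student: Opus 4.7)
The claim is a straightforward application of the law of total probability, leveraging the fact that Alice's matchings in the $\distnest_{1, c, (m,m)}$ model are perfect. Recall that Alice's $i$-th matching corresponds to the edge set $E_{2i}$ between layers $V_{2i-1}$ and $V_{2i}$ of the underlying layered graph. In the distribution $\distnest_{1, c, (m,m)}$, only the layer $V_0$ has size different from $m$, so for every $i \in \{1, \ldots, c\}$, both $V_{2i-1}$ and $V_{2i}$ have size exactly $m$ and thus (since the graph is nice) $E_{2i}$ is a perfect matching. In particular, for every fixed $v_{2i} \in V_{2i}$, there is exactly one $v_{2i-1} \in V_{2i-1}$ such that $\rv G_{2i}(v_{2i-1}) = v_{2i}$, and symmetrically for a fixed $v_{2i-1} \in V_{2i-1}$.

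The first identity then follows as follows. Let $A_S(\bv_S)$ denote the event $\{\forall j \in S,\ \rv G_{2j}(v_{2j-1}) = v_{2j}\}$, so that $p_S(\bv_S \mid \Pi) = \Pr[A_S(\bv_S) \mid \rv\Pi = \Pi]$ and $p_{S \cup \{i\}}(\bv_S, v_{2i-1}, v_{2i} \mid \Pi) = \Pr[A_S(\bv_S) \wedge \rv G_{2i}(v_{2i-1}) = v_{2i} \mid \rv\Pi = \Pi]$. For any fixed $v_{2i} \in V_{2i}$, the events $\{\rv G_{2i}(v_{2i-1}) = v_{2i}\}_{v_{2i-1} \in V_{2i-1}}$ are pairwise disjoint (since any vertex has at most one matching partner) and their union is the full sample space (by perfectness, $v_{2i}$ always has some preimage). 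Summing over $v_{2i-1}$ and applying the law of total probability conditioned on $\rv\Pi = \Pi$ yields the first equation. The second equation is identical by symmetry: fix $v_{2i-1}$ and partition on which vertex of $V_{2i}$ is matched to it.

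\textbf{Main obstacle.} There really isn't a substantive one --- the content of the claim is just a marginalization of a joint distribution. The only point worth verifying is that $E_{2i}$ is indeed a perfect matching, which relies on the fact that the conditioning layer $V_0$ of size $1$ does not touch any of Alice's matchings (Alice's matchings sit between layers $V_{2i-1}$ and $V_{2i}$ for $i \geq 1$, all of size $m$) and on niceness of the graph. With that in hand, the partitioning of the sample space is automatic and the two identities drop out immediately.
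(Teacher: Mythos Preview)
Your proposal is correct and matches the paper's own justification, which is simply the one-line remark that the claim follows by ``going over all possible choices for edges in one extra matching of Alice which is not in $S$.'' Your write-up spells out the law-of-total-probability argument and correctly identifies that perfectness of $E_{2i}$ (guaranteed because $V_{2i-1}$ and $V_{2i}$ both have size $m$ in the $\distnest_{1,c,(m,m)}$ setting) is what makes the events $\{\rv G_{2i}(v_{2i-1})=v_{2i}\}_{v_{2i-1}}$ partition the sample space.
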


In the following lemma, we give a formula for the LHS of~\Cref{lem_one_round_marginal_ell_2} in terms of $p_S$. We then use this formula in the next subsection to bound 
the LHS and conclude the proof.

\begin{lemma}\label{lem_sec_moment}
	For any message $\Pi$, 
	\begin{align*}
		\E_{M_B}\norm{\rv G_{0 \to 2c}(v_0)\mid \rv \Pi=\Pi,\rv M_{\odd}=M_B]}^2_2 =\frac{1}{m} +\frac{1}{m(m-1)^{c-1}}\cdot\sum_{S\subseteq \{1,\ldots,c\}}\sum_{\bv_S\in V_{(2S-1)\cup 2S}} (-1)^{c-|S|}p_{S}(\bv_S\mid \Pi)^2.
	\end{align*}
\end{lemma}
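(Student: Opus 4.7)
The plan is to interpret the left-hand side as a two-copy collision probability and then match it to the right-hand side via induction plus a short inclusion--exclusion expansion. Using the identity $\norm{\rv X}_2^2 = \Pr[\rv X = \rv{X'}]$ for any random variable $\rv X$ with i.i.d.\ copy $\rv{X'}$, applied to $\rv G_{0\to 2c}(v_0) \mid \rv\Pi = \Pi,\; \rv M_{\odd} = M_{\odd}$ and then averaged over $M_{\odd}$, we obtain
\[
\E_{M_{\odd}}\norm{\rv G_{0\to 2c}(v_0)\mid \rv \Pi=\Pi,\rv M_{\odd}=M_{\odd}}_2^2 \;=\; \Pr_{\rv A, \rv{A'}, \rv M_{\odd} \mid \Pi}\bigl[\rv G(v_0) = \rv{G'}(v_0)\bigr],
\]
where $\rv A, \rv{A'}$ are two independent samples from the posterior of Alice's matchings given $\Pi$, and $\rv G, \rv{G'}$ are the graphs obtained by combining $\rv M_{\odd}$ with $\rv A$ and $\rv{A'}$ respectively.

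The heart of the proof is the per-instance identity, valid for any fixed realizations $A = (M_2, \ldots, M_{2c})$ and $A' = (M'_2, \ldots, M'_{2c})$:
\[
\Pr_{\rv M_{\odd}}\!\bigl[\rv G(v_0) = \rv{G'}(v_0) \,\big|\, A, A'\bigr] \;=\; \frac{1}{m} + \frac{1}{m(m-1)^{c-1}} \prod_{i=1}^c (a_i - 1),
\]
where $a_i := |M_{2i} \cap M'_{2i}|$ counts the edges shared by the two copies of the $i$-th Alice matching. I prove this by induction on $c$. The base case $c=1$ is immediate: $v_1 = M_1(v_0)$ is uniform over $V_1$, so $v_2 = v'_2$ iff $v_1$ lies in the agreement set of $M_2, M'_2$, giving probability $a_1/m = 1/m + (a_1-1)/m$. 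For the inductive step, let $P_c$ denote the claimed collision probability at layer $2c$ and split on whether $v_{2c} = v'_{2c}$. In the agreement case, $v_{2c+1} = v'_{2c+1}$ (Bob's shared matching preserves equality) is marginally uniform over $V_{2c+1}$, so the conditional collision at layer $2c+2$ is $a_{c+1}/m$. In the disagreement case, $(v_{2c+1}, v'_{2c+1})$ is uniform over ordered distinct pairs in $V_{2c+1}^2$, and the count $\sum_{u \neq u'}\mathbf{1}[M_{2c+2}(u) = M'_{2c+2}(u')] = m - a_{c+1}$ gives conditional collision $(m - a_{c+1})/(m(m-1))$. Substituting into $P_{c+1} = P_c\cdot\tfrac{a_{c+1}}{m} + (1 - P_c)\cdot\tfrac{m - a_{c+1}}{m(m-1)}$ and simplifying using the inductive form of $P_c$, the constants combine to give exactly $\tfrac{1}{m} + \tfrac{1}{m(m-1)^c}\prod_{i=1}^{c+1}(a_i - 1)$.

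To conclude, I take expectation over $(\rv A, \rv{A'}) \mid \Pi$ on both sides and expand the product via $\prod_{i=1}^c (a_i - 1) = \sum_{S \subseteq \{1, \ldots, c\}} (-1)^{c - |S|} \prod_{i \in S} a_i$. The final step is to identify $\E[\prod_{i \in S} a_i]$ with $\sum_{\bv_S \in V_{(2S-1)\cup 2S}} p_S(\bv_S \mid \Pi)^2$: writing $a_i = \sum_{(u,v) \in V_{2i-1} \times V_{2i}} \mathbf{1}[M_{2i}(u) = v]\,\mathbf{1}[M'_{2i}(u) = v]$, swapping sum and expectation, and using the conditional independence of $\rv A$ and $\rv{A'}$ given $\Pi$ so that the expectation factors into $p_S(\bv_S \mid \Pi)^2$, the two quantities coincide. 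The main obstacle is carrying out the inductive step cleanly---in particular, justifying the uniform-over-distinct-pairs claim in the disagreement case (which uses that Bob's matching $M_{2c+1}$ is a uniformly random perfect matching) and verifying that the recurrence telescopes to the claimed closed form---but once the setup is correct, the remaining algebra is direct.
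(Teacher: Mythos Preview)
Your proof is correct and takes a genuinely different---and considerably cleaner---route than the paper's. The paper expands $\|\cdot\|_2^2$ directly as a double sum over paths $\bv,\bu$, computes the expectation of the product of Bob's indicator variables case-by-case, and then performs two layers of inclusion--exclusion (over a set $T$ of layers where $\bv,\bu$ disagree, and then over a subset $T'$ of coordinates), finally proving a somewhat intricate combinatorial identity via three auxiliary recursions $\alpha,\beta,\gamma$. Your approach replaces all of this with a single structural observation: for fixed $A,A'$, the collision probability at layer $2c$ has the closed product form $\tfrac{1}{m}+\tfrac{1}{m(m-1)^{c-1}}\prod_i(a_i-1)$, proved by a two-line Markov-chain style induction exploiting that a fresh random perfect matching sends a fixed pair of distinct vertices to a uniform distinct pair. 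The expansion $\prod(a_i-1)=\sum_S(-1)^{c-|S|}\prod_{i\in S}a_i$ and the identification $\E[\prod_{i\in S}a_i]=\sum_{\bv_S}p_S(\bv_S\mid\Pi)^2$ then finish immediately. What your approach buys is a transparent explanation of where the alternating-sign sum over $S$ comes from (it is literally the binomial expansion of a product), at the cost of the extra ``two i.i.d.\ copies of Alice'' setup; the paper's approach stays closer to the one-copy picture but pays for it with heavier bookkeeping.
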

\begin{proof}
	Note that, for all $v_{2c} \in V_{2c}$, we have:
	\[
		\Pr[\rv G_{0 \to 2c}(v_0)=v_{2c}\mid \rv \Pi=\Pi,\rv M_{\odd}=M_{\odd}]=\sum_{\bv\in \{v_0\}\times V_{\{1,\ldots,2c-1\}}\times \{v_{2c}\}}p(\bv\mid \Pi)\cdot \prod_{i=1}^c\bOne_{G_{2i-1}(v_{2i-2})=v_{2i-1}}.
	\]

	Since Alice and Bob's inputs are independent, we have
	\begin{align*}
		&\, \E_{M_{\odd}}\left[\Pr[\rv G_{0 \to 2c}(v_0)=v_{2c}\mid \rv \Pi=\Pi,\rv M_{\odd}=M_{\odd}]^2\right] \\
		=&\, \sum_{\bv,\bu\in\{v_0\}\times V_{\{1,\ldots,2c-1\}}\times \{v_{2c}\}}p(\bv\mid \Pi)p(\bu\mid \Pi)\cdot \E_{M_{\odd}}\left[\prod_{i=1}^c\bOne_{G_{2i-1}(v_{2i-2})=v_{2i-1}\wedge G_{2i-1}(u_{2i-2})=u_{2i-1}}\right] \\
		=&\, \sum_{\bv,\bu\in\{v_0\}\times V_{\{1,\ldots,2c-1\}}\times \{v_{2c}\}}p(\bv\mid \Pi)p(\bu\mid \Pi)\cdot \prod_{i=1}^c\Pr[\rv G_{2i-1}(v_{2i-2})=v_{2i-1}\wedge \rv G_{2i-1}(u_{2i-2})=u_{2i-1}].
	\end{align*}
	Observe that: $(i)$ for $(v_{2i-2},v_{2i-1})=(u_{2i-2},u_{2i-1})$,
	\[
		\Pr[\rv G_{2i-1}(v_{2i-2})=v_{2i-1}\wedge \rv G_{2i-1}(u_{2i-2})=u_{2i-1}]=\frac{1}{m};
	\]
	and $(ii)$ for $v_{2i-2}\neq u_{2i-2}$ and $v_{2i-1}\neq u_{2i-1}$,
	\[
		\Pr[\rv G_{2i-1}(v_{2i-2})=v_{2i-1}\wedge \rv G_{2i-1}(u_{2i-2})=u_{2i-1}]=\frac{1}{m(m-1)};
	\]
	otherwise, it is equal to $0$.
	Since $u_0=v_0$ and $u_{2c}=v_{2c}$, we have
	\begin{multline}\label{eqn_var}
		\sum_{v_{2c}\in V_{2c}}\E_{M_{\odd}}\left[\Pr[\rv G_{0 \to 2c}(v_0)=v_{2c}\mid \rv \Pi=\Pi,\rv M_{\odd}=M_{\odd}]^2\right] \\
		=\sum_{T\subseteq\{1,\ldots,c-1\}}\sum_{\stackrel{\bu,\bv:\forall i\notin T, v_{2i}=u_{2i},v_{2i+1}=u_{2i+1}}{\forall i\in T, v_{2i}\neq u_{2i},v_{2i+1}\neq u_{2i+1}}}\frac{p(\bv\mid \Pi)p(\bu\mid \Pi)}{m^c(m-1)^{|T|}}.
	\end{multline}

	Now, let us fix $T$, $v_{2i}(=u_{2i})$ and $v_{2i+1}(=u_{2i+1})$ for all $i\notin T$, then take the sum over all $v_{2i}\neq u_{2i},v_{2i+1}\neq u_{2i+1}$ for all $i\in T$.
	By inclusion-exclusion, we have
	\begin{align*}
		&\quad \sum_{\forall i\in T, v_{2i}\neq u_{2i},v_{2i+1}\neq u_{2i+1}}p(\bv\mid \Pi)p(\bu\mid \Pi)\\
		&=\sum_{T'\subseteq 2T\cup (2T+1)}\sum_{\forall j\in(2T\cup (2T+1))\setminus T', v_j=u_j}\sum_{\forall j\in T',v_j,u_j}(-1)^{2|T|-|T'|}\cdot p(\bv\mid \Pi)p(\bu\mid \Pi)\\
		&=\sum_{T'\subseteq 2T\cup (2T+1)}(-1)^{|T'|}\cdot\sum_{\forall j\in(2T\cup (2T+1))\setminus T', v_j}\left(\sum_{\forall j\in T',v_j}p(\bv\mid \Pi)\right)^2.
	\end{align*}
	Plug it into the RHS of \eqref{eqn_var}, we get
	\begin{align*}
		&\, \sum_{T\subseteq\{1,\ldots,c-1\}}\sum_{\stackrel{\bu,\bv:\forall i\notin T, v_{2i}=u_{2i},v_{2i+1}=u_{2i+1}}{\forall i\in T, v_{2i}\neq u_{2i},v_{2i+1}\neq u_{2i+1}}}\frac{p(\bv\mid \Pi)p(\bu\mid \Pi)}{m^c(m-1)^{|T|}} \\
		=&\, \sum_{T\subseteq\{1,\ldots,c-1\}}\sum_{\forall i\notin T, v_{2i},v_{2i+1}}\sum_{T'\subseteq2T\cup (2T+1)}\frac{(-1)^{|T'|}}{m^c(m-1)^{|T|}}\cdot \sum_{\forall j\in (2T\cup(2T+1))\setminus T',v_j}\left(\sum_{\forall j\in  T',v_j}p(\bv\mid \Pi)\right)^2  \\
		=&\, \sum_{T\subseteq\{1,\ldots,c-1\}}\sum_{T'\subseteq2T\cup (2T+1)}\frac{(-1)^{|T'|}}{m^c(m-1)^{|T|}}\cdot\sum_{\forall j\in\{1,\ldots,2c\}\setminus T',v_j} \left(\sum_{\forall j\in  T',v_j}p(\bv\mid \Pi)\right)^2  \\
		=&\, \sum_{T'\subseteq\{2,\ldots,2c-1\}}\left(\sum_{T\supset \lfloor T'/2\rfloor}\frac{(-1)^{|T'|}}{m^c(m-1)^{|T|}}\right)\cdot\left(\sum_{\forall j\in\{1,\ldots,2c\}\setminus  T',v_j} \left(\sum_{\forall j\in  T',v_j}p(\bv\mid \Pi)\right)^2\right), \\
		\intertext{where $\lfloor T'/2\rfloor$ denotes the set $\{\lfloor i/2\rfloor: i\in T'\}$, and it is}
		=&\, \sum_{T'\subseteq\{2,\ldots,2c-1\}}\frac{(-1)^{|T'|}}{m^{|\lfloor T'/2\rfloor|+1}(m-1)^{c-1}}\cdot\sum_{\forall j\in\{1,\ldots,2c\}\setminus T',v_j} \left(\sum_{\forall j\in  T',v_j}p(\bv\mid \Pi)\right)^2 \\
		\intertext{by Claim~\ref{cl_delta_sum}, it is}
		=&\, \sum_{T'\subseteq\{2,\ldots,2c-1\}}\frac{(-1)^{|T'|}}{m^{|\lfloor T'/2\rfloor|+1}(m-1)^{c-1}}\cdot\sum_{\forall i\in\{1,\ldots,c\}\setminus \lceil T'/2\rceil,v_{2i-1},v_{2i}} p_{\{1,\ldots,c\}\setminus\lceil T'/2\rceil}(\bv\mid \Pi)^2\cdot m^{|T'|} \\
		\intertext{let $S:=\{1,\ldots,c\}\setminus \lceil T'/2\rceil$, it is}
		=&\, \frac{1}{m(m-1)^{c-1}}\cdot\sum_{S\subseteq \{1,\ldots,c\}}\sum_{\bv_S\in V_{(2S-1)\cup 2S}} p_{S}(\bv_S\mid \Pi)^2\cdot \left(\sum_{T'\subseteq \{2,\ldots,2c-1\}:\lceil T'/2\rceil=\{1,\ldots,c\}\setminus S}\frac{(-m)^{|T'|}}{m^{|\lfloor T'/2\rfloor|}}\right).
	\end{align*}
	We claim that
	\begin{equation}\label{eqn_alt_sum}
		\sum_{T'\subseteq \{2,\ldots,2c-1\}:\lceil T'/2\rceil=\{1,\ldots,c\}\setminus S}\frac{(-m)^{|T'|}}{m^{|\lfloor T'/2\rfloor|}}=\begin{cases}
			(m-1)^{c-1}+(-1)^c & \textrm{if $S=\emptyset$,} \\
			(-1)^{c-|S|} & \textrm{otherwise.}
		\end{cases}
	\end{equation}
	Since $p_{\emptyset}=1$, it is easy to verify that plugging Equation~\eqref{eqn_alt_sum} proves the lemma.

	To prove \eqref{eqn_alt_sum}, define
	\[
		\alpha(m, c):=\sum_{T'\subseteq \{2,\ldots,2c-1\}:\lceil T'/2\rceil=\{1,\ldots,c\}}\frac{(-m)^{|T'|}}{m^{|\lfloor T'/2\rfloor|}},
	\]
	\[
		\beta(m, c):=\sum_{T'\subseteq \{2,\ldots,2c\}:\lceil T'/2\rceil=\{1,\ldots,c\}}\frac{(-m)^{|T'|}}{m^{|\lfloor T'/2\rfloor|}},
	\]
	by symmetry, $\beta(c)$ is also equal to
	\[
		\beta(m, c)=\sum_{T'\subseteq \{1,\ldots,2c-1\}:\lceil T'/2\rceil=\{1,\ldots,c\}}\frac{(-m)^{|T'|}}{m^{|\lfloor T'/2\rfloor|}},
	\]
	and
	\[
		\gamma(m, c):=\sum_{T'\subseteq \{1,\ldots,2c\}:\lceil T'/2\rceil=\{1,\ldots,c\}}\frac{(-m)^{|T'|}}{m^{|\lfloor T'/2\rfloor|}}.
	\]
	For $S=\emptyset$, the left-hand-side of \eqref{eqn_alt_sum} is simply $\alpha(m, c)$.
	Otherwise, suppose $S=\{i_1,\ldots,i_s\}$, where $1\leq i_1<i_2<\cdots<i_s\leq c$. 
	Then the sum in \eqref{eqn_alt_sum} only has $T'$ such that $2i_1-1,2i_1,2i_2-1,2i_2,\ldots\notin T'$.
	These elements divide the set $\{2,\ldots,c-1\}$ into $s+1$ chunks.
	Moreover, the constraint $\lceil T'/2\rceil=\{1,\ldots,c\}\setminus S$ and the value $\frac{(-m)^{|T'|}}{m^{|\lfloor T'/2\rfloor|}}$ are independent in all chunks.
	Therefore, the LHS is equal to
	\[
		\beta(m, i_1-1)\cdot \left(\prod_{l=1}^{s-1}\gamma(m, i_{l+1}-i_l-1)\right)\cdot \beta(m, c-i_s).
	\]

	At last, we calculate $\alpha,\beta$ and $\gamma$.
	First, note $\beta(m, c)=-\beta(m, c-1)$. 
	To see this, since $\lceil T'/2\rceil=\{1,\ldots,c\}$, at least one of $2c-1$ and $2c$ must be in $T'$.
	There are three cases: (1) $2c-1,2c\in T'$; (2) $2c-1\in T'$ and $2c\notin T'$; (3) $2c-1\notin T'$ and $2c\in T'$.
	Case (1) and case (2) cancel, since adding $2c$ to the set increases both $|T'|$ and $|\lfloor T'/2\rfloor|$ by one, and the two cases have opposite signs by definition.
	Case (3) is equal to $-\beta(m, c-1)$.
	For the same reason, we have $\gamma(m, c)=-\gamma(m, c-1)$.
	By the fact that $\beta(m, 1)=-1$ and $\gamma(m, 1)=-1$, we have $\beta(m, c)=\gamma(m, c)=(-1)^c$.
	This shows that the LHS when $S\neq \emptyset$ is 
	\[
		(-1)^{i_1-1}\cdot \prod_{l=1}^{s-1}(-1)^{i_{l+1}-i_l-1}\cdot (-1)^{c-i_s}=(-1)^{c-|S|},
	\]
	proving \eqref{eqn_alt_sum} for $S\neq \emptyset$.

	On the other hand, for $\alpha(m, c)$, $2c-1$ must be in $T'$ (otherwise $c\notin \lceil T'/2\rceil$), and at least one of $2c-3$ and $2c-2$ must be in $T'$.
	There are again three cases for $2c-1,2c-2,2c-3$: (1) $2c-3,2c-2,2c-1\in T'$; (2) $2c-3,2c-1\in T'$ and $2c-2\notin T'$; (3) $2c-3\notin T'$ and $2c-2,2c-1\in T'$. 
	Case (1) is equal to $m\cdot \alpha(m, c-1)$; Case (2) is equal to $-\alpha(m, c-1)$; Case (3) is equal to $m\cdot \beta(m, c-2)$.
	Thus, we have the following recurrence:
	\[
		\alpha(m, c)=(m-1)\alpha(m, c-1)+m\cdot (-1)^c,
	\]
	with $\alpha(m, 1)=0$ and $\alpha(m, 2)=m$.
	Expanding the recurrence, we get
	\begin{align*}
		\alpha(m, c)&=m(-1)^c+(m-1)m(-1)^{c-1}+\cdots+(m-1)^{c-2}m(-1)^{2} \\
		&=m(-1)^c\cdot\frac{1-(1-m)^{c-1}}{1-(1-m)} \\
		&=(m-1)^{c-1}+(-1)^c.
	\end{align*}
	This completes the proof of the lemma.
\end{proof}


The following lemma gives an upper bound on the expectation of each summand on the RHS of~\Cref{lem_sec_moment}. 

\begin{lemma}\label{lem_delta_S}
	For any $S\subseteq \{1,\ldots,c\}$, 
	\[
		\E_{\Pi}\left[\sum_{\bv_S\in V_{(2S-1)\cup 2S}}p_S(\bv_S\mid \Pi)^2\right]\leq \paren{4\sqrt{2(C+\log m)m}+6}^{|S|}.
	\]
	where we define $\Gamma(m,C) := 4\sqrt{2(C+\log m)m} < m/4$. 
\end{lemma}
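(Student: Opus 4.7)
The plan is to induct on $|S|$, with the trivial base case $|S|=0$ following from $p_\emptyset\equiv 1$ and $(\Gamma(m,C)+6)^0=1$. For the inductive step, I fix any $i_0\in S$ and set $S'=S\setminus\{i_0\}$. Writing $\bv_S=(\bv_{S'},u,v)$ with $(u,v)\in V_{2i_0-1}\times V_{2i_0}$ and exploiting that $M_{2i_0}$ is marginally independent of the matchings $\{M_{2i}\}_{i\in S'}$ that determine the event
$$E_{\bv_{S'}} := \{M_{2i}(v_{2i-1})=v_{2i}\ \text{for every}\ i\in S'\},$$
I use the chain rule to factorize $p_S(\bv_S\mid\Pi)=p_{S'}(\bv_{S'}\mid\Pi)\cdot q(u,v\mid\Pi,\bv_{S'})$, where $q(u,v\mid\Pi,\bv_{S'}):=\Pr[M_{2i_0}(u)=v\mid\Pi,E_{\bv_{S'}}]$. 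Summing the squares over $(u,v)$ yields
$$\sum_{\bv_S} p_S(\bv_S\mid\Pi)^2 \;=\; \sum_{\bv_{S'}} p_{S'}(\bv_{S'}\mid\Pi)^2\cdot Q(\Pi,\bv_{S'}), \quad \text{where}\ \ Q(\Pi,\bv_{S'}):=\sum_{u,v}q(u,v\mid\Pi,\bv_{S'})^2.$$

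The inductive step reduces to the per-slot estimate
$$\E_\Pi\bigl[p_{S'}(\bv_{S'}\mid\Pi)^2\cdot Q(\Pi,\bv_{S'})\bigr] \;\leq\; (\Gamma(m,C)+6)\cdot \E_\Pi\bigl[p_{S'}(\bv_{S'}\mid\Pi)^2\bigr]$$
for every $\bv_{S'}$; summing over $\bv_{S'}$ and applying the induction hypothesis on $S'$ then gives the desired $(\Gamma(m,C)+6)^{|S|}$ bound. For the per-slot estimate, I intend to invoke the single-permutation information-theoretic lemma the paper develops for a $C$-bit message about a random matching on $[m]$ (namely, that such a message increases the $\ell_2^2$-collision of the matching by at most $O(\sqrt{Cm})/m$ in expectation). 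Since $M_{2i_0}$ is marginally independent of the matchings in $S'$, the chain rule gives $\mi{M_{2i_0}}{\Pi,M_{S'}}=\mi{M_{2i_0}}{\Pi\mid M_{S'}}\leq \en{\Pi}\leq C$, so conditioning on both $\Pi$ and $M_{S'}$ still keeps $M_{2i_0}$ close to uniform on average. Conditioning on the weaker event $E_{\bv_{S'}}$ (rather than the full $M_{S'}$) adds at most the rarity cost $-\log\Pr[E_{\bv_{S'}}\mid\Pi]\leq O(\log m)$, which is exactly where the $\log m$ term inside $\Gamma(m,C)=4\sqrt{2(C+\log m)m}$ originates.

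The main obstacle I anticipate is that the reweighting factor $p_{S'}(\bv_{S'}\mid\Pi)^2$ is correlated with $Q(\Pi,\bv_{S'})$ through their common dependence on $\Pi$, so $Q$ cannot simply be pulled outside the expectation over $\Pi$. My plan is to resolve this by disintegrating $\E_\Pi$ as $\E_{\Pi,M_{S'}}$, using that $\Pr[E_{\bv_{S'}}\mid\Pi,M_{S'}]\in\{0,1\}$, applying the single-permutation $\ell_2^2$-bound conditionally on $(\Pi,M_{S'})$, and then reassembling via Cauchy--Schwarz and convexity of the KL divergence. The additive constant $6$ (rather than a tight constant) is there to absorb the low-order terms --- the uniform baseline of $1$, diagonal-collision corrections in the $\ell_2^2$ sum, and the slack from Cauchy--Schwarz --- so that the recursion multiplies cleanly across all $|S|$ steps.
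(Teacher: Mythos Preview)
Your overall structure matches the paper's proof: induction on $|S|$ with trivial base case, factorization $p_S(\bv_S\mid\Pi)=p_{S'}(\bv_{S'}\mid\Pi)\cdot q(u,v\mid\Pi,\bv_{S'})$ via the chain rule, and reduction of the inductive step to a per-slot estimate of the form
\[
\E_\Pi\bigl[p_{S'}(\bv_{S'}\mid\Pi)^2\cdot Q(\Pi,\bv_{S'})\bigr]\;\le\;(\Gamma(m,C)+6)\cdot \E_\Pi\bigl[p_{S'}(\bv_{S'}\mid\Pi)^2\bigr].
\]
You also correctly pinpoint the main obstacle: the weight $p_{S'}(\bv_{S'}\mid\Pi)^2$ and the factor $Q(\Pi,\bv_{S'})$ are correlated through $\Pi$.

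The gap is in how you propose to resolve that obstacle. First, the claim ``$-\log\Pr[E_{\bv_{S'}}\mid\Pi]\le O(\log m)$'' is false: a priori $\Pr[E_{\bv_{S'}}]=m^{-|S'|}$, and conditioned on a particular $\Pi$ this probability can be arbitrarily small (even zero). So you cannot attribute the single $\log m$ inside $\Gamma$ to this ``rarity cost.'' Second, the plan ``disintegrate over $M_{S'}$, apply the single-permutation bound conditionally on $(\Pi,M_{S'})$, reassemble via Cauchy--Schwarz and KL convexity'' does not close: once you pass to $\tilde Q(\Pi,M_{S'})=\sum_{u,v}\Pr[M_{2i_0}(u)=v\mid\Pi,M_{S'}]^2$ via Jensen, you are left with an expectation weighted by $p_{S'}(\bv_{S'}\mid\Pi)\cdot\mathbf 1_{E_{\bv_{S'}}}$, and a Cauchy--Schwarz split either produces $\tilde Q^2$ or $p_{S'}^3$, neither of which feeds back into the induction. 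The pointwise entropy bound you would need for $M_{2i_0}\mid(\Pi,M_{S'})$ also fails for atypical $(\Pi,M_{S'})$.

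The paper handles exactly this point with machinery you do not invoke. It (i) splits $\Pi$ into ``likely'' and ``unlikely'' sets relative to both the prior $\Pr[\Pi]$ and the posterior $\Pr[\Pi\mid\bv_{S'}]$ at threshold $m^{-2}2^{-C}$; (ii) on likely $\Pi$, bounds $Q$ by an entropy-deficit quantity $\Psi(m,\Pi,G_{2i})=4\sqrt{(\log m!-\HH(G_{2i}\mid\bv_{S'},\Pi))m}+4$ via \Cref{lem_info_perm}; and (iii) controls the \emph{reweighted} average of $\Psi$ using a dedicated weighted-entropy lemma (\Cref{lem_weighted_sum_entropy}) for weights of the form $\beta(\Pi)^{-1}\alpha(\Pi)^2$ with $\alpha(\Pi)=\Pr[\Pi\mid\bv_{S'}]$ and $\beta(\Pi)=\Pr[\Pi]$. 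The $\log m$ in $\Gamma$ arises from the likely-message threshold $\epsilon=m^{-2}2^{-C}$ in that lemma, not from a rarity cost of $E_{\bv_{S'}}$. The unlikely-$\Pi$ tails contribute the additive ``$+2$'' that is absorbed into the ``$+6$.'' These three ingredients are what make the per-slot estimate go through; your proposal would need either to supply them or to find a genuinely different substitute for step~(iii).
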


We prove this lemma in the next subsection and for now, show that how this lemma combined with~\Cref{lem_sec_moment} implies~\Cref{lem_one_round_marginal_ell_2}.
	\begin{proof}[Proof of \Cref{lem_one_round_marginal_ell_2}]
		We have, 
		\begin{align*}
			\text{LHS of~\Cref{lem_one_round_marginal_ell_2}} &= \frac{1}{m}+\frac{1}{m(m-1)^{c-1}}\cdot\sum_{S\subseteq \{1,\ldots,c\}}(-1)^{c-|S|}\cdot \E_{\Pi}\left[\sum_{\bv_S\in V_{(2S-1)\cup 2S}} p_{S}(\bv_S\mid \Pi)^2\right] 
			\tag{by the formula of~\Cref{lem_sec_moment}} \\
			&\leq \frac{1}{m}+\frac{1}{m(m-1)^{c-1}}\cdot\sum_{S\subseteq \{1,\ldots,c\}}\left(\Gamma(m,C)+6\right)^{|S|} \tag{by the bound in~\Cref{lem_delta_S} for each summand} \\
			&=\frac{1}{m}+\frac{1}{m(m-1)^{c-1}}\cdot\left(\Gamma(m,C)+7\right)^c \\
			&\leq \frac{1}{m}+\left(\frac{40C}{m}\right)^{c/3},
		\end{align*}
		for sufficiently large $m$ by the choice of $\Gamma(m,C) := 4\sqrt{2(C+\log m)m}$.
	\end{proof}

\subsection{Proof of \Cref{lem_delta_S}}

We are going to prove \Cref{lem_delta_S} inductively on the size of $S$, i.e., for all $S \subseteq [c]$: 
\begin{align}
\E_{\Pi}\left[\sum_{\bv_S\in V_{(2S-1)\cup 2S}}p_S(\bv_S\mid \Pi)^2\right]\leq \paren{\Gamma(m,C)+6}^{|S|}, \label{eq:sep2}
\end{align}
where $\Gamma(m,C) = 4\sqrt{2(C+\log m)m}$ as defined in~\Cref{lem_delta_S}. 

The proof of this lemma builds heavily on our auxiliary information theory lemmas presented in~\Cref{sec:aux-lem} and the reader may want to consult that section during this proof. 

	The base case is when $S=\emptyset$ and by definition, $p_{\emptyset}=1$; hence, both sides are $1$, and the inequality of \eqref{eq:sep2} holds vacuously.

	Let us now focus on the induction step. Fix any element $i\in S$, and let $S' :=S \setminus \{i\}$.
	We have
	\begin{align}
		\E_{\Pi}\left[\sum_{\bv_S\in V_{(2S-1)\cup 2S}}p_S(\bv_S\mid \Pi)^2\right]&=\sum_{\bv_{S'}\in V_{(2S'-1)\cup 2S'}}\E_{\Pi}\left[\sum_{\stackrel{v_{2i-1}\in V_{2i-1}}{v_{2i}\in V_{2i}}}p_S(\bv_{S'},v_{2i-1},v_{2i}\mid \Pi)^2\right] \nonumber\\
		&=\sum_{\bv_{S'}\in V_{(2S'-1)\cup 2S'}}\E_{\Pi}\left[p_{S'}(\bv_{S'}\mid \Pi)^2\cdot\sum_{\stackrel{v_{2i-1}\in V_{2i-1}}{v_{2i}\in V_{2i}}}p_{\{i\}}(v_{2i-1},v_{2i}\mid \bv_{S'}, \Pi)^2\right].\label{eqn_l2_term_1}
	\end{align}

\newcommand{\BB}{\ensuremath{\mathcal{B}}}	
	
	In the following, for any $\bv_{S'}$, we define: 
	\begin{align*}
		\BB(\bv_{S'}) &:= \set{\Pi \in \supp{\rv \Pi} \mid \Pr[\rv \Pi=\Pi\mid \bv_{S'}]\geq m^{-2}2^{-C}}; \\
		\BB &:= \set{\Pi \in \supp{\rv \Pi} \mid \Pr[\rv \Pi=\Pi]\geq m^{-2}2^{-C}}.
	\end{align*}
	
	In words, $\BB(\bv_{S'})$ contains the messages that are ``likely'' conditioned on $\bv_{S'}$ and $\BB$ contains the messages that are ``likely'' a-priori (without any conditioning). 
	
	We bound each summand in~\eqref{eqn_l2_term_1} for likely messages in the following claim. This is the place where we crucially use the fact that message of Alice is ``small'' (note that in the following, $\Psi$ would be small as we are conditioning on a 
	likely message $\Prot$). 
	\begin{claim}\label{clm:perm}
	Fix $\bv_{S'}$. For any $\Prot \in \BB(\bv_{S'})$: 
	\[
		\sum_{\stackrel{v_{2i-1}\in V_{2i-1}}{v_{2i}\in V_{2i}}}p_{\{i\}}(v_{2i-1},v_{2i}\mid \bv_{S'}, \Pi)^2 \leq \Psi(m,\Pi,\rv G_{2i}),	
	\]
	where we define $\Psi(m,\Pi,\rv G_{2i}) := 4\sqrt{(\log m!-\HH(\rv G_{2i}\mid \bv_{S'},\rv \Pi=\Pi))m}+4$. 
	\end{claim}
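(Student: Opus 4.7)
The plan is to identify the sum $\sum_{v_{2i-1}, v_{2i}} p_{\{i\}}(v_{2i-1}, v_{2i} \mid \bv_{S'}, \Pi)^2$ with the expected intersection size of two i.i.d.~copies of the random matching $G_{2i}$ drawn from the conditional distribution $\mu := \distribution{\rv{G}_{2i} \mid \bv_{S'}, \rv{\Pi} = \Pi}$, and then to bound this collision count via the paper's information-theoretic lemma for permutations (\Cref{lem_info_perm}).

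First, since $G_{2i}$ is a priori uniform and independent of Alice's other matchings (so all of the $\bv_{S'}$-dependence is mediated through $\Pi$), we have $p_{\{i\}}(v_{2i-1}, v_{2i} \mid \bv_{S'}, \Pi) = \Pr_\mu[G(v_{2i-1}) = v_{2i}]$. Opening the square with two independent copies $G, G' \sim \mu$ and interchanging sums yields
\begin{equation*}
\sum_{v_{2i-1}, v_{2i}} p_{\{i\}}(v_{2i-1}, v_{2i} \mid \bv_{S'}, \Pi)^2 \;=\; \sum_{v_{2i-1}, v_{2i}} \mu\paren{G(v_{2i-1}) = v_{2i}}^2 \;=\; \E_{G, G' \sim \mu \otimes \mu}\bracket{\card{G \cap G'}}.
\end{equation*}
When $\mu$ is uniform this expectation equals $1$ exactly (it counts the fixed points of the uniform permutation $G' \circ G^{-1}$); the claim asserts it exceeds $1$ by at most $O(\sqrt{\delta m}) + O(1)$, where $\delta := \log m! - \HH(\rv{G}_{2i} \mid \bv_{S'}, \rv{\Pi} = \Pi)$ is the entropy deficit of $\mu$ relative to the uniform distribution on matchings.

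This is exactly the bound supplied by \Cref{lem_info_perm}: for any distribution $\mu$ on perfect matchings between two copies of $[m]$ with entropy deficit $\delta$, one has $\E_{G, G' \sim \mu \otimes \mu}[\card{G \cap G'}] \leq 4\sqrt{\delta m} + 4$. Applying it with our specific $\mu$ and $\delta$ yields the stated $\Psi(m, \Pi, \rv{G}_{2i})$-bound directly. I note that the hypothesis $\Prot \in \BB(\bv_{S'})$ is never actually invoked in the proof of the claim itself; its role is solely to guarantee that the resulting bound is non-trivial when later integrated against $\Pr[\rv{\Pi} = \Pi \mid \bv_{S'}]$ in \eqref{eqn_l2_term_1}.

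The main obstacle, packaged inside \Cref{lem_info_perm}, is that a naive row-wise Pinsker bound on the doubly stochastic marginal matrix $Q_{u,w} := \mu(G(u) = w)$ yields only $\sum_{u,w}(Q_{u,w} - 1/m)^2 = O(m) + O(\sqrt{m\delta})$, because the subadditivity slack $\sum_u \HH(G(u)) - \HH(G) = \Theta(m)$ is already present even for uniform~$\mu$ by Stirling. Obtaining the sharper $\sqrt{\delta m}$ scaling therefore requires exploiting the permutation structure of $Q$ directly rather than summing row-wise Pinsker bounds, and this is precisely the delicate step that \Cref{lem_info_perm} isolates from the rest of the argument.
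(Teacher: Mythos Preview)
Your overall route is the paper's: the collision-count identity you write is just $\sum_{v_{2i-1}}\|\rv{G}_{2i}(v_{2i-1})\mid\bv_{S'},\Pi\|_2^2$, and the finish is via the permutation-entropy lemma. Two points need correcting, though.

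First, you misquote \Cref{lem_info_perm}: its conclusion is $m\log m-\sum_j\HH(\rv{M}(j))\le 4\sqrt{\delta m}+3$, a bound on the marginal \emph{entropies}, not on the collision count directly. The paper first converts the $\ell_2$ sum to this entropy sum via the row-wise inequality of \Cref{lem_l2_entropy-1}, namely $\|\rv{G}_{2i}(v_{2i-1})\|_2^2\le\frac{1}{m}+\log m-\HH(\rv{G}_{2i}(v_{2i-1}))$, and only then applies \Cref{lem_info_perm} to the resulting entropy deficit; so a row-wise step is in fact used, and as written your argument has a gap between the collision identity and the lemma you cite. Second, your assertion that $\Pi\in\BB(\bv_{S'})$ is ``never actually invoked'' is false: \Cref{lem_info_perm} carries the hypothesis $\HH(\rv{M})\ge\log m!-m/8$, and the paper verifies it precisely from $\Pr[\rv\Pi=\Pi\mid\bv_{S'}]\ge m^{-2}2^{-C}$ via \Cref{fact:kl-event}, obtaining $\HH(\rv{G}_{2i}\mid\bv_{S'},\Pi)\ge\log m!-(C+2\log m)>\log m!-m/8$. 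One could sidestep this hypothesis by the trivial observation that the collision count is at most $m\le 4\sqrt{\delta m}$ once $\delta>m/16$, but you do not make that argument.
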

	\begin{proof}
	By definition, $p_{\{i\}}(v_{2i-1},v_{2i}\mid \bv_{S'}, \Pi)$ measures the probability of existence of any single edge in $\rv G_{2i}$, thus,
	\begin{align}
		\sum_{\stackrel{v_{2i-1}\in V_{2i-1}}{v_{2i}\in V_{2i}}}p_{\{i\}}(v_{2i-1},v_{2i}\mid \bv_{S'}, \Pi)^2 &=\sum_{v_{2i-1}\in V_{2i-1}}\|{\rv G_{2i}(v_{2i-1})\mid \bv_{S'},\rv\Pi=\Pi}\|_2^2 \notag \\
		&\leq \sum_{v_{2i-1}\in V_{2i-1}}\left(\frac{1}{m}+\log m-\HH(\rv G_{2i}(v_{2i-1})\mid \bv_{S'},\rv \Pi=\Pi)\right) \tag{by the connection between entropy and $\ell_2$-norm in~\cref{lem_l2_entropy-1}}\\
		&=1+m\log m-\sum_{v_{2i-1}\in V_{2i-1}}\HH(\rv G_{2i}(v_{2i-1})\mid \bv_{S'},\rv \Pi=\Pi). \label{eq:sep3}
	\end{align}
	Now recall that $\rv G_{2i} \mid \bv_{S'}$ is still a uniformly random matching and hence can be interpreted as a random permutation over $[m]$ (up to a renaming). Moreover, even conditioned on $\rv\Pi=\Pi$, it has a 
	high entropy since: 
	\begin{align}
		\HH(\rv G_{2i}\mid \rv\Pi=\Pi,\bv_{S'})&=\log m!- \kl{(\rv G_{2i}\mid \rv\Pi=\Pi,\bv_{S'})}{(\rv G_{2i}\mid \bv_{S'})}  \tag{by~\Cref{fact:kl-en}} \\
		&\geq \log m!-\log (1/\Pr[\rv \Pi=\Pi\mid \bv_{S'}]) \tag{by~\Cref{fact:kl-event}} \\
		&\geq \log m!-(C+2\log m) \tag{by the choice of $\Pi \in \BB(\bv_{S'})$}\\
		&>\log m!-m/8. \label{eq:added1}
	\end{align}
	As such, $\rv G_{2i} \mid \bv_{S'}, \rv\Pi=\Pi$ is a high entropy random permutation of $[m]$ and we are interested in bounding the entropy of its average coordinate in~\eqref{eq:sep3}. 
	This is precisely the setting of~\Cref{lem_info_perm}, which allows us to obtain, 
	\begin{align*}
		\eqref{eq:sep3} &\leq 4\sqrt{(\log m!-\HH(\rv G_{2i}\mid \bv_{S'},\rv \Pi=\Pi))m}+4 = \Psi(m,\Pi,\rv G_{2i}). \tag{by~\eqref{eq:added1} and~\Cref{lem_info_perm}}
	\end{align*}
	This concludes the proof of~\Cref{clm:perm}.
	\end{proof}
	
	We continue with the proof of~\Cref{lem_delta_S}. To bound \eqref{eqn_l2_term_1}, let us fix $\bv_{S'}$, and consider the expectation over $\Pi$:
	\begin{align}
		&\, \sum_\Pi \Pr(\rv\Pi=\Pi) \cdot p_{S'}(\bv_{S'}\mid \Pi)^2\cdot \sum_{\stackrel{v_{2i-1}\in V_{2i-1}}{v_{2i}\in V_{2i}}}p_{\{i\}}(v_{2i-1},v_{2i}\mid \bv_{S'}, \Pi)^2  \notag \\
		=&\, \sum_{\Pi \in \BB(\bv_{S'})} \Pr(\rv\Pi=\Pi) \cdot p_{S'}(\bv_{S'}\mid \Pi)^2\cdot \sum_{\stackrel{v_{2i-1}\in V_{2i-1}}{v_{2i}\in V_{2i}}}p_{\{i\}}(v_{2i-1},v_{2i}\mid \bv_{S'}, \Pi)^2 \notag \\
		&\quad +\sum_{\Pi \notin \BB(\bv_{S'})} \Pr(\rv\Pi=\Pi) \cdot p_{S'}(\bv_{S'}\mid \Pi)^2\cdot \sum_{\stackrel{v_{2i-1}\in V_{2i-1}}{v_{2i}\in V_{2i}}}p_{\{i\}}(v_{2i-1},v_{2i}\mid \bv_{S'}, \Pi)^2,
		\intertext{which by~\Cref{clm:perm} for the first term and the trivial upper bound for the second, is at most}
		\leq&\, \sum_{\Pi \in \BB(\bv_{S'})} \Pr(\rv\Pi=\Pi) \cdot p_{S'}(\bv_{S'}\mid \Pi)^2\cdot \Psi(m,\Pi,\rv G_{2i}) \notag \\
		&\quad +\sum_{\Pi \notin \BB(\bv_{S'})} m^{-2} \cdot 2^{-C} \cdot p_{S'}(\bv_{S'})\cdot m^2 \tag{where we define $p_{S'}(\bv_{S'}) := p_{S'}(\bv_{S'} \mid \emptyset)$} \notag \\
		\leq &\, \paren{\sum_{\Pi} \Pr(\rv\Pi=\Pi) \cdot p_{S'}(\bv_{S'}\mid \Pi)^2\cdot \Psi(m,\Pi,\rv G_{2i})}+p_{S'}(\bv_{S'}). \label{eq:sep4}
	\end{align}
	Next, we split the first term in~\eqref{eq:sep4} according to $\Pr(\rv\Pi=\Pi)$, and it is equal to
	\begin{align}
		= & \sum_{\Pi \in \BB} \Pr(\rv\Pi=\Pi) \cdot p_{S'}(\bv_{S'}\mid \Pi)^2\cdot \Psi(m,\Pi,\rv G_{2i}) 
		 +\sum_{\Pi \notin \BB} \Pr(\rv\Pi=\Pi) \cdot p_{S'}(\bv_{S'}\mid \Pi)^2\cdot \Psi(m,\Pi,\rv G_{2i})\nonumber\\
		\leq &  \sum_{\Pi \in \BB} \Pr(\rv\Pi=\Pi) \cdot p_{S'}(\bv_{S'}\mid \Pi)^2\cdot \Psi(m,\Pi,\rv G_{2i}) 
		+ \sum_{\Pi \notin \BB} m^{-2}2^{-C} \cdot p_{S'}(\bv_{S'}\mid \Pi)\cdot m\log m  \nonumber\\
		\leq &\sum_{\Pi \in \BB} \Pr(\rv\Pi=\Pi) \cdot p_{S'}(\bv_{S'}\mid \Pi)^2\cdot \Psi(m,\Pi,\rv G_{2i}) + \sum_{\Pi \notin \BB} 2^{-C} \cdot p_{S'}(\bv_{S'}\mid \Pi)  \label{eq:sep5}.
	\end{align}
	We further simplify the first term in~\eqref{eq:sep5} in the following claim. 
	\begin{claim}\label{clm:first-term-simpler}
		For the first term in~\eqref{eq:sep5}, 
		\begin{align*}
		&\sum_{\Pi \in \BB} \Pr(\rv\Pi=\Pi) \cdot p_{S'}(\bv_{S'}\mid \Pi)^2\cdot \Psi(m,\Pi,\rv G_{2i}) 
		 \leq \left(\Gamma(m,C)+4\right) \cdot \sum_{\Pi \in \BB} \Pr(\rv\Pi=\Pi) \cdot p_{S'}(\bv_{S'}\mid \Pi)^2,
		\end{align*} 
		for the parameter $\Gamma(m,C) := 4\sqrt{2(C+\log m)m}$ defined in~\Cref{lem_delta_S}. 
	\end{claim}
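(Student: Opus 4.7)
The plan is to reduce the claim via a Jensen/Cauchy--Schwarz argument to bounding a weighted average of the entropy deficit $X(\Pi) := \log m! - \HH(\rv G_{2i} \mid \bv_{S'}, \rv\Pi = \Pi)$, and then to control that average using information-theoretic identities together with the definition of $\mathcal{B}$. Writing $\Psi(\Pi) = 4\sqrt{X(\Pi)\, m} + 4$, $w(\Pi) := \Pr(\rv\Pi = \Pi) \cdot p_{S'}(\bv_{S'}\mid\Pi)^2$ (restricted to $\Pi \in \mathcal{B}$), and $W := \sum_\Pi w(\Pi)$, Cauchy--Schwarz gives $\sum_\Pi w(\Pi)\sqrt{X(\Pi)} \leq W\sqrt{\E_w[X]}$. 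Since $\Gamma = 4\sqrt{2(C+\log m)\, m}$ and the $+4$ on the right-hand side of the claim is already matched by the $+4W$ arising from the $+4$ in $\Psi$, it suffices to prove $\E_w[X] \leq 2(C+\log m)$.

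To bound $\E_w[X]$, I would use~\Cref{fact:kl-event} together with the observation that $(\rv G_{2i} \mid \bv_{S'})$ is uniform on permutations of $[m]$ (since $\rv G_{2i}$ is independent of the Alice-matchings indexed by $S'$), yielding $X(\Pi) \leq \log(1/\Pr(\rv\Pi=\Pi \mid \bv_{S'}))$. Bayes's rule decomposes this as $\log(1/\Pr(\rv\Pi=\Pi)) + \log(1/p_{S'}(\bv_{S'}\mid\Pi)) - \log(1/\Pr(\bv_{S'}))$. The first summand is at most $C + 2\log m$ for $\Pi \in \mathcal{B}$ by definition of $\mathcal{B}$. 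For the middle summand, by concavity of $\log$ and Jensen's inequality,
\[
\E_w[\log(1/p_{S'}(\bv_{S'}\mid\Pi))] \leq \log \E_w[1/p_{S'}] = \log\!\left(\frac{1}{W}\sum_{\Pi\in\mathcal{B}} \Pr(\rv\Pi=\Pi)\, p_{S'}(\bv_{S'}\mid\Pi)\right) \leq \log(\Pr(\bv_{S'})/W),
\]
since $\sum_{\Pi \in \mathcal{B}}\Pr(\rv\Pi=\Pi)\, p_{S'}(\bv_{S'}\mid\Pi) \leq \sum_\Pi\Pr(\rv\Pi=\Pi)\, p_{S'}(\bv_{S'}\mid\Pi) = \Pr(\bv_{S'})$. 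Combining gives $\E_w[X] \leq (C + 2\log m) + \log(\Pr(\bv_{S'})^2/W)$.

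The last step is a lower bound on $W$. The Cauchy--Schwarz inequality $(\sum_\Pi \Pr(\rv\Pi=\Pi)\, p_{S'})^2 \leq (\sum_\Pi\Pr(\rv\Pi=\Pi)) (\sum_\Pi\Pr(\rv\Pi=\Pi)\, p_{S'}^2)$ yields $\sum_\Pi\Pr(\rv\Pi=\Pi)\, p_{S'}^2 \geq \Pr(\bv_{S'})^2$. Restricting the sum to $\mathcal{B}$ removes at most $\sum_{\Pi\notin\mathcal{B}}\Pr(\rv\Pi=\Pi) \leq m^{-2}$ (since the message space has at most $2^C$ elements, each with probability below $m^{-2}\cdot 2^{-C}$). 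Hence $W \geq \Pr(\bv_{S'})^2 - m^{-2}$, and whenever $\Pr(\bv_{S'})^2 \geq 2m^{-2}$ this gives $\log(\Pr(\bv_{S'})^2/W) \leq 1 \leq C$, so $\E_w[X] \leq 2(C+\log m)$ as required.

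The hardest part will be the edge case $\Pr(\bv_{S'}) \ll 1/m$, where the Cauchy--Schwarz lower bound on $W$ becomes vacuous. In this regime I would argue directly: for $\Pi \in \mathcal{B}$ the constraint $\Pr(\rv\Pi=\Pi) \geq m^{-2}2^{-C}$ combined with $p_{S'}(\bv_{S'}\mid\Pi) = \Pr(\rv\Pi=\Pi\mid\bv_{S'})\cdot\Pr(\bv_{S'})/\Pr(\rv\Pi=\Pi)$ constrains how large $\Psi(\Pi)$ can be, and the total contribution of such small-probability $\bv_{S'}$ to the outer sum in~\eqref{eqn_l2_term_1} is itself negligible; the resulting slack can be absorbed into the additive constant in $\Gamma + 4$ (the overall bound in~\Cref{lem_delta_S} already tolerates $\Gamma + 6$).
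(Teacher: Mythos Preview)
Your reduction via Jensen/Cauchy--Schwarz to bounding $\E_w[X]$ is exactly the first move in the paper's proof too (the paper phrases it as concavity of $\sqrt{\cdot}$, equation~\eqref{eqn_l2_term_3}). The divergence is in how you then bound $\E_w[X]$. Your route relies on showing $\log(\Pr(\bv_{S'})^2/W)$ is small, which needs $W$ to be comparable to $\Pr(\bv_{S'})^2$. But $\Pr(\bv_{S'}) = m^{-|S'|}$ (one edge per matching in $S'$, each with probability $1/m$, independently across matchings), so for every nonempty $S'$ we have $\Pr(\bv_{S'})^2 \leq m^{-2}$. Your Cauchy--Schwarz lower bound $W \geq \Pr(\bv_{S'})^2 - m^{-2}$ is therefore vacuous in \emph{every} nontrivial case: the ``edge case'' you flag at the end is the only case. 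Your proposed workaround---that small-probability $\bv_{S'}$ contribute negligibly to the outer sum~\eqref{eqn_l2_term_1}---also fails, since \emph{all} $\bv_{S'}$ have this small probability and their $m^{2|S'|}$ contributions together are exactly what the induction must control.

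The paper avoids any lower bound on $W$. After the same Jensen step it invokes \Cref{lem_weighted_sum_entropy} with $\alpha(\Pi)=\Pr(\rv\Pi=\Pi\mid\bv_{S'})$, $\beta(\Pi)=\Pr(\rv\Pi=\Pi)$ (so that $\beta^{-1}\alpha^2$ is proportional to your $w$), $t=m!$, $\ell=2^C$, $\epsilon=m^{-2}2^{-C}$. That lemma directly gives $\E_w[X]\leq\log(\ell/\epsilon)=2(C+\log m)$, with no dependence on $\Pr(\bv_{S'})$ or $W$. Its proof starts from the same pointwise bound $X(\Pi)\leq\log(1/\alpha(\Pi))$ you used, but replaces your Jensen-on-$\log$ step by a perturbation argument: shift mass within $\alpha$ until every $\alpha(\Pi)\geq\epsilon/\ell$, arguing via derivatives that this only decreases the target sum, and then observe each term is nonnegative. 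This perturbation argument is the missing ingredient in your plan.
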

	\begin{proof}
	By the definition of $\Psi$ in~\Cref{clm:perm}, the LHS of the claim is equal to
	\begin{align}
		&\, \sum_{\Pi \in \BB} \Pr(\rv\Pi=\Pi) \cdot p_{S'}(\bv_{S'}\mid \Pi)^2\cdot \paren{4\sqrt{(\log m!-\HH(\rv G_{2i}\mid \bv_{S'},\rv \Pi=\Pi))m}+4}  \notag \\
		\intertext{which by Jensen's inequality and the concavity of square root, is at most}
		\leq &\,  \left(\sum_{\Pi \in \BB} \Pr(\rv\Pi=\Pi) \cdot p_{S'}(\bv_{S'}\mid \Pi)^2\right) \left(4\cdot\sqrt{\left(\log m!-\sum_{\Pi \in \BB} q(\Pi)\cdot \HH(\rv G_{2i}\mid \bv_{S'},\rv \Pi=\Pi)\right)m}+4\right),
	\label{eqn_l2_term_3}
	\end{align}
	where we define $q(\Pi)\propto \Pr(\rv\Pi=\Pi) p_{S'}(\bv_{S'}\mid \Pi)^2\propto \Pr(\rv\Pi=\Pi)^{-1}\Pr(\rv \Pi=\Pi\mid \bv_{S'})^2$ and $\sum_{\Pi \in \BB} q(\Pi)=1$.
	
	Notice the term $\sum_{\Pi \in \BB }q(\Pi)\cdot \HH(\rv G_{2i}\mid \bv_{S'},\rv \Pi=\Pi)$ in above which is a weighted sum of entropy terms. 
	In~\Cref{lem_weighted_sum_entropy}, we give a simple auxiliary lemma that allows for lower bounding such weighted sum of entropy terms. 
	
	We apply \Cref{lem_weighted_sum_entropy} for $\rA$ being $(\rv \Pi\mid \bv_{S'})$ and $\rv M$ being $(\rv G_{2i}\mid \bv_{S'})$, and further set
	\[
	t=m!,\quad \ell:=2^C, \quad \alpha(\Pi) :=\Pr(\rv\Pi=\Pi \mid \bv_{S'}),  \quad \beta(\Pi):=\Pr(\rv\Pi=\Pi),\quad \epsilon :=m^{-2}2^{-C}.
	\]
	Consequently, 
	\[
		q(\Pi)=\frac{\beta(\Pi)^{-1}\alpha(\Pi)^2}{\sum_{\Pi' \in \BB}\beta(\Pi')^{-1}\alpha(\Pi')^2}.
	\]
	Then since $\rv G_{2i}$ is a uniformly random matching conditioned on $\bv_{S'}$, we can apply~\Cref{lem_weighted_sum_entropy} and have, 
	\[
		\sum_{\Pi \in \BB}q(\Pi)\cdot \HH(\rv G_{2i}\mid \bv_{S'},\rv \Pi=\Pi)\geq \log m!-2(C+\log m).
	\]
	Therefore,
	\begin{align*}
		\eqref{eqn_l2_term_3}&\leq \left(\sum_{\Pi \in \BB} \Pr(\rv\Pi=\Pi) \cdot p_{S'}(\bv_{S'}\mid \Pi)^2\right) \left(4\sqrt{2m(C+\log m)}+4\right) \\
		&=\left(\sum_{\Pi \in \BB} \Pr(\rv\Pi=\Pi) \cdot p_{S'}(\bv_{S'}\mid \Pi)^2\right) \left(\Gamma(m,C)+4\right),
	\end{align*}
	by the choice of $\Gamma(m,C)$ as desired. 
	\end{proof}
	
	Finally, by~\Cref{clm:first-term-simpler} and~\eqref{eq:sep5},~\eqref{eq:sep4}, we obtain that for any fixed $\bv_{S'}$, 
	\begin{align*}
		&\sum_\Pi \Pr(\rv\Pi=\Pi) \cdot p_{S'}(\bv_{S'}\mid \Pi)^2\cdot \sum_{\stackrel{v_{2i-1}\in V_{2i-1}}{v_{2i}\in V_{2i}}}p_{\{i\}}(v_{2i-1},v_{2i}\mid \bv_{S'}, \Pi)^2 \\
		&\hspace{1cm} \leq p_{S'}(\bv_{S'}) + {\sum_{\Pi} \Pr(\rv\Pi=\Pi) \cdot p_{S'}(\bv_{S'}\mid \Pi)^2\cdot \Psi(m,\Pi,\rv G_{2i})}\tag{by~\eqref{eq:sep4}} \\
		&\hspace{1cm} \leq  p_{S'}(\bv_{S'}) + \sum_{\Pi \notin \BB} 2^{-C} \cdot p_{S'}(\bv_{S'}\mid \Pi) + \sum_{\Pi \in \BB} \Pr(\rv\Pi=\Pi) \cdot p_{S'}(\bv_{S'}\mid \Pi)^2\cdot \Psi(m,\Pi,\rv G_{2i})  \tag{by~\eqref{eq:sep5}} \\
		&\hspace{1cm} \leq  p_{S'}(\bv_{S'}) + \sum_{\Pi \notin \BB} 2^{-C} \cdot p_{S'}(\bv_{S'}\mid \Pi) +  \sum_{\Pi \in \BB} \Pr(\rv\Pi=\Pi) \cdot p_{S'}(\bv_{S'}\mid \Pi)^2 \cdot \left(\Gamma(m,C)+4\right) \tag{by~\Cref{clm:first-term-simpler}} \\
		&\hspace{1cm} \leq  p_{S'}(\bv_{S'}) + \sum_{\Pi} 2^{-C} \cdot p_{S'}(\bv_{S'}\mid \Pi) +  \sum_{\Pi} \Pr(\rv\Pi=\Pi) \cdot p_{S'}(\bv_{S'}\mid \Pi)^2 \cdot \left(\Gamma(m,C)+4\right). 
	\end{align*}
	
	By plugging these bounds in~\eqref{eqn_l2_term_1} and summing over all choices $\bv_{S'}$, we have
	\begin{align*}
		\eqref{eqn_l2_term_1}&\leq \sum_{\bv_{S'}\in V_{(2S'-1)\cup 2S'}}\left( \left(\Gamma(m,C)+4\right)  \cdot \left(\sum_{\Pi} p(\Pi) \cdot p_{S'}(\bv_{S'}\mid \Pi)^2\right)+ \paren{\sum_{\Pi} 2^{-C} \cdot p_{S'}(\bv_{S'}\mid \Pi)} + p(\bv_{S'})\right) \\
		&= \paren{(\Gamma(m,C)+4)\cdot \sum_{\bv_{S'}\in V_{(2S'-1)\cup 2S'}}\E_{\Pi}  \left[p_{S'}(\bv_{S'}\mid \Pi)^2\right]}+2.
	\end{align*}
	That is,
	\begin{align*}
		\E_{\Pi}\left[\sum_{\bv_S\in V_{(2S-1)\cup2S}}p_S(\bv_S\mid \Pi)^2\right] &\leq (\Gamma(m,C)+4)\cdot \E_\Pi\left[\sum_{\bv_{S'}\in V_{(2S'-1)\cup 2S'}}p_{S'}(\bv_{S'}\mid \Pi)^2\right]+2 \\
		&\leq (\Gamma(m,C)+4)\cdot (\Gamma(m,C)+6)^{\card{S'}}+2 \tag{by the induction hypothesis for $S'$ in~\eqref{eq:sep2}} \\
		&\leq (\Gamma(m,C)+6)^{|S|},
	\end{align*}
	concluding the induction step, and the proof of~\Cref{lem_delta_S}. 

\section{Lower Bound For Multiple Rounds}
\label{sec:multiround}

In this section, we inductively extend the one round lower bound from \Cref{sec:single-block} to many rounds. Specifically, we show that:

\begin{restate}[\Cref{lemma:multiround}]
\lemmamultiround
\end{restate}
\begin{proof}
To start, note that we can assume without loss of generality the protocol $\Pi$ is deterministic. Indeed, as a randomized protocol is simply a distribution over deterministic protocols, if we show the lemma for all deterministic protocols, we can use \Cref{lemma:tvdtriangle} to also get the lemma for all randomized protocols.

Now, fix $s \leq s_1$, injections $\Y, \N : [s] \to [s_1]$, and a deterministic $(\card{S} - 1)$-round $(s, c, S)$-protocol $\Pi$ such that $\norm{\Pi} \leq C$. We assume without loss of generality that $\card{S}$ is even implying that Alice sends the last message in $\Pi$. A symmetric argument, with Alice replaced by Bob, holds when $\card{S}$ is odd.

\paragraph{Notation.} Throughout this proof, we shall use $S' = S_{\geq 2}$ and $\hat{S} = S_{\leq 2}$. Define the sequence $T = \nest(s, c, S)$ and $\hat{T} = \nest(s, c, \hat{S})$. We use normal letters for objects associated with $S$, letters with primes for objects associated with $S'$, and letters with a $\hat{}$ on top to denote objects associated with $\hat{S}$. Thus, $\distnest'^{\N'}_s$ will denote $\distnest_{s, c, S_{\geq 2}}^{\N'}$, $\hat{\n}$ will denote $\n_{c, \hat{S}}$, $\hat{\layer}$ to denote $\layer_{\hat{T}}$, {\em etc.}

For a $\hat{T}$-layered graph $\hat{H}$, define the event $\expand_{\hat{H}}$ over the distribution $\layer$ as follows:
\[
\expand_{\hat{H}} := \forall i \in [2c]: \text{ The subgraph $\{ E_{(i-1)\n' + i'} \}_{i' = 1}^{\n'}$ is nice and } G_{(i-1)\n' \to i\n'} = \hat{H}_i .
\]
For a $\hat{T}$-layered graph $\hat{H}$, we define $\hat{H}_{\alice} = \{\hat{H}_i\}_{\ev~i \in [\hat{\n}]}$ and $\hat{H}_{\bob} = \{\hat{H}_i\}_{\od~i \in [\hat{\n}]}$. Analogously (note that $\hat{\n} = 2c$), for a $T$-layered graph $H$, we define $H_{\alice} = \{ H_{(i-1)\n' \to i\n'} \}_{\ev~i \in [2c]}$ and $H_{\bob} = \{ H_{(i-1)\n' \to i\n'} \}_{\od~i \in [2c]}$.

Let $\Pi^{\trans}$ denote the protocol where Alice and Bob first run the first $(\card{S} - 2)$-rounds of $\Pi$ and then Alice outputs the transcript of $\Pi$, {\em i.e.}, the messages exchanged in the first $(\card{S} - 2)$-rounds and the message she would have sent in the last round (recall that Alice sends the last message in $\Pi$). Fix $\hat{H}^{\star}$ to be some canonical $\hat{T}$-layered graph. We define the following notation for the edges of a $T$-layered graph $H = \left( \bigsqcup_{i=0}^{\n} V_i, \bigsqcup_{i=1}^{\n} E_i \right)$:
\begin{itemize}
\item For all $i \in [2c]$, define $L_i = (L_{i, \ev}, L_{i, \od})$ where:
\[
L_{i, \ev} = \{ E_{(i-1)\n' + i'} \}_{\ev~i' = 1}^{\n'} \hspace{1cm}\text{and}\hspace{1cm} L_{i, \od} = \{ E_{(i-1)\n' + i'} \}_{\od~i' = 1}^{\n'} .
\]
\item Define $L_{\pub} = (L_{\pub, \ev}, L_{\pub, \od})$ where:
\[
L_{\pub, \ev} = \{ L_{i, \ev} \}_{\od~i = 1}^{2c} \hspace{1cm}\text{and}\hspace{1cm} L_{\pub, \od} = \{ L_{i, \od} \}_{\ev~i = 1}^{2c}  .
\]
\item Define:
\[
L_{\apriv} = \{ L_{i, \ev} \}_{\ev~i = 1}^{2c} \hspace{1cm}\text{and}\hspace{1cm} L_{\bpriv} = \{ L_{i, \od} \}_{\od~i = 1}^{2c}  .
\]
\end{itemize}
Thus, $H$ can be equivalently seen as the tuple $(L_{\pub}, L_{\apriv}, L_{\bpriv})$ and we use these representations interchangeably. We first claim that:

\begin{lemma}
\label{lemma:hybrid}
For all $\hat{T}$-layered graphs $\hat{H}^1, \hat{H}^2$ such that $\hat{H}^1_{\alice} = \hat{H}^2_{\alice}$, we have that:
\[
\tvd{ \distributions{ \rv{\Pi}^{\trans}(H), \rv{L}_{\pub} }{ H \sim \layer \mid \expand_{\hat{H}^1} } }{ \distributions{ \rv{\Pi}^{\trans}(H), \rv{L}_{\pub} }{ H \sim \layer \mid \expand_{\hat{H}^2} } } \leq c\epsilon_0 .
\]
\end{lemma}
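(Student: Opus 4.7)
The plan is a hybrid argument over Bob's $c$ odd-indexed blocks of $\hat{H}$, reducing each hybrid step to the inductive hypothesis. Since $\hat{H}^1_{\alice} = \hat{H}^2_{\alice}$, the two graphs disagree only on the $c$ odd-indexed blocks. I define hybrids $\hat{K}^0, \hat{K}^1, \ldots, \hat{K}^c$ where $\hat{K}^j$ agrees with $\hat{H}^2$ on the first $j$ odd-indexed blocks (i.e., $\hat{H}_1, \hat{H}_3, \ldots, \hat{H}_{2j-1}$) and with $\hat{H}^1$ on the remaining ones, so that $\hat{K}^0 = \hat{H}^1$, $\hat{K}^c = \hat{H}^2$, and consecutive hybrids $\hat{K}^{j-1}, \hat{K}^j$ differ only on the single odd-indexed block $i^* = 2j-1$. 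By the triangle inequality, it suffices to bound the total variation distance between the distributions of $(\rv{\Pi}^{\trans}(H), \rv{L}_{\pub})$ under $\expand_{\hat{K}^{j-1}}$ and $\expand_{\hat{K}^j}$ by $\epsilon_0$ for each $j \in [c]$.

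For a fixed hybrid step, the key structural observation is that once we condition on $\expand_{\hat{K}}$, the matchings of $H$ decompose into independent uniformly random nice layered graphs across blocks, where the $i$-th block's distribution depends only on $\hat{K}_i$. Since $\hat{K}^{j-1}$ and $\hat{K}^j$ agree on every block $i \neq i^*$, the conditional distribution of the other blocks is identical under both conditionings and can therefore be sampled by public randomness. Block $i^*$ itself is distributed as an $(s^*, c, S_{\geq 2})$-almost nested block graph with prescribed final matching $\hat{K}^{j-1}_{i^*}$ or $\hat{K}^j_{i^*}$, where $s^* = s$ when $i^* = 1$ (the only block whose first layer inherits the modified size $s$ of $T$) and $s^* = s_2$ for $i^* \geq 3$ (interior blocks have both boundary layers of size $s_2$); in either case $s^* \leq s_2$, as required by the inductive hypothesis.

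I then define a randomized $(\card{S}-2)$-round $(s^*, c, S_{\geq 2})$-protocol $\Pi'$ as follows: on input $G' \sim \distnest_{s^*, c, S_{\geq 2}}^{\Z}$ with $\Z \in \{\hat{K}^{j-1}_{i^*}, \hat{K}^j_{i^*}\}$, Alice and Bob use public randomness to sample the remaining blocks from the common conditional distribution, assemble the full $T$-layered graph $H$, and execute the first $\card{S}-2$ rounds of $\Pi$ on it. Since $\card{S}$ is even, Alice is both the last sender in $\Pi$ and the output party of the $(\card{S}-2)$-round $\Pi'$; she can locally compute her round-$(\card{S}-1)$ message to complete $\Pi^{\trans}$, and she knows every entry of $L_{\pub}$, which consists of Alice's matchings in odd blocks (public randomness for $i \neq i^*$ and her own input $G'$ for $i = i^*$) together with Bob's matchings in even blocks (all public randomness). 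Alice then outputs $(\rv{\Pi}^{\trans}, \rv{L}_{\pub})$. By construction, the distribution of $\rv{\Pi'}(G')$ under $G' \sim \distnest_{s^*, c, S_{\geq 2}}^{\hat{K}^{j-1}_{i^*}}$ matches $\distributions{\rv{\Pi}^{\trans}(H), \rv{L}_{\pub}}{H \sim \layer \mid \expand_{\hat{K}^{j-1}}}$ and analogously for $\hat{K}^j$, and $\norm{\Pi'} \leq \norm{\Pi} \leq C$, so the inductive hypothesis bounds each hybrid step by $\epsilon_0$. Summing over $j = 1, \ldots, c$ yields the required $c\epsilon_0$ bound; the case of odd $\card{S}$ is handled by a fully symmetric argument with the roles of Alice and Bob interchanged.

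The main subtlety I expect to navigate is the parity and boundary bookkeeping that makes the reduction a valid $(s^*, c, S_{\geq 2})$-protocol. The even-indexed matchings of block $i^*$ must be Alice's (to match the $(s^*, c, S_{\geq 2})$-protocol convention), which follows from the block length $\n' = (2c)^{\card{S}-2}$ being even; the output party of the $(\card{S}-2)$-round reduction must coincide with the last-sending party of $\Pi$, which holds because $\card{S}$ and $\card{S}-2$ have the same parity; and the boundary sizes of block $i^*$ must allow us to realize it as an $(s^*, c, S_{\geq 2})$-almost nested block graph with $s^* \leq s_2$, which uses that only the leftmost block of $T$ inherits the modified first-layer size $s$ while every other odd-indexed block has both endpoint sizes equal to the internal layer size $s_2$ of $T$.
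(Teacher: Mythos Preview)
Your proof is correct and follows essentially the same approach as the paper: a hybrid argument over the $c$ odd-indexed blocks, with each step reduced to the inductive hypothesis by publicly sampling the unchanged blocks and planting the $(s^*, c, S_{\geq 2})$-instance in the differing block, then simulating $\Pi^{\trans}$ and having Alice output $(\Pi^{\trans}, L_{\pub})$. The paper's writeup simply asserts ``WLOG $\hat{H}^1,\hat{H}^2$ differ only in the first layer'' and writes the reduction for $i^*=1$ with parameter $s$; your more careful case split $s^*=s$ for $i^*=1$ versus $s^*=s_2$ for $i^*\geq 3$ (both covered by the hypothesis since $s\leq s_1\leq s_2$) makes explicit what the paper leaves to symmetry.
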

\begin{proof}
To prove this lemma, we show that if the graphs $\hat{H}^1, \hat{H}^2$ differ in at most one layer, then:
\begin{equation}
\label{eq:hybrid1}
\tvd{ \distributions{ \rv{\Pi}^{\trans}(H), \rv{L}_{\pub} }{ H \sim \layer \mid \expand_{\hat{H}^1} } }{ \distributions{ \rv{\Pi}^{\trans}(H), \rv{L}_{\pub} }{ H \sim \layer \mid \expand_{\hat{H}^2} } } \leq \epsilon_0 .
\end{equation}
\Cref{eq:hybrid1} implies the lemma using the triangle inequality as $\hat{H}^1, \hat{H}^2$ differ in at most $c$ layers. As the argument is symmetric for all layers, we assume without loss of generality that $\hat{H}^1$ and $\hat{H}^2$ differ only in the first layer. 

Define $\hat{H}_i = \hat{H}^1_i =  \hat{H}^2_i$ for all $i \neq 1 \in [2c]$ and consider the following $(\card{S} - 2)$-round $(s, c, S')$-protocol $\Pi'$:

\begin{algorithm}
\caption{The protocol $\Pi'$.}
\label{algo:piprime}
\begin{algorithmic}[1]
\renewcommand{\algorithmicrequire}{\textbf{Input:}}
\renewcommand{\algorithmicensure}{\textbf{Output:}}

\Require An $(s, c, S')$-almost nested block graph $G' = \left( \bigsqcup_{i'=0}^{\n'} V'_{i'}, \bigsqcup_{i'=1}^{\n'} E'_{i'} \right)$. Alice gets the edges $E'_{i'}$ for even $i'$ and Bob gets the edges $E'_{i'}$ for odd $i$.

\medskip

\Statex \hspace{-\algorithmicindent} {\bf Sample a $T$-layered graph $H$:}

\medskip

\State Alice sets $L_{1,\ev} = \{E'_{i'}\}_{\ev~i' = 1}^{\n'}$ and Bob sets $L_{1,\od} = \{E'_{i'}\}_{\od~i' = 1}^{\n'}$. \label{line:piprime:first}

\State For all $i \in \{2, \cdots, 2c-1\}$, publicly sample edges $L_i$ from the distribution $\distnest'^{\hat{H}_i}$. \label{line:piprime:i}

\State Publicly sample edges $L_{2c}$ from the distribution $\distnest'^{\hat{H}_i}_{s_1}$. \label{line:piprime:last}

\medskip

\Statex \hspace{-\algorithmicindent} {\bf Simulate the protocol $\Pi^{\trans}$ on $H$:}

\medskip

\State Alice and Bob run the protocol $\Pi^{\trans}$ on the graph $H$ and Alice outputs $\Pi^{\trans}(H)$ and $L_{\pub}$.

\medskip

\end{algorithmic}
\end{algorithm}

Next, note that when the input to $\Pi'$ is from $\distnest'^{\hat{H}^1_1}_{s}$, then the graph $H$ sampled by $\Pi'$ distributes as $\layer \mid \expand_{\hat{H}^1}$. To see why, note that the edges $L_i$ are mutually independent for all $i \in [2c]$ in the distribution $\layer$ and conditioning on $\expand_{\hat{H}^1}$ does not change this fact. Thus, the distribution $\layer \mid \expand_{\hat{H}^1}$ is exactly the distribution sampled in Lines~\ref{line:piprime:first}, \ref{line:piprime:i}, and \ref{line:piprime:last}, as desired. 

Similarly, when the input to $\Pi'$ is from $\distnest'^{\hat{H}^2_1}_{s}$, then the graph $H$ sampled by $\Pi'$ distributes as $\layer \mid \expand_{\hat{H}^2}$. We get:
\begin{align*}
&\tvd{ \distributions{ \rv{\Pi}^{\trans}(H), \rv{L}_{\pub} }{ H \sim \layer \mid \expand_{\hat{H}^1} } }{ \distributions{ \rv{\Pi}^{\trans}(H), \rv{L}_{\pub} }{ H \sim \layer \mid \expand_{\hat{H}^2} } } \\
&\hspace{1cm}\leq \tvd{ \distributions{ \rv{\Pi'}(G') }{ G' \sim \distnest'^{\hat{H}^1_1}_{s} } }{ \distributions{ \rv{\Pi'}(G') }{ G' \sim \distnest'^{\hat{H}^2_1}_{s} } } \\
&\hspace{1cm}\leq \epsilon_0 ,
\end{align*}
by the assumption in \Cref{lemma:multiround} and \Cref{eq:hybrid1} follows. 
\end{proof}

Next, we show the following property of $\Pi$.

\begin{lemma}
\label{lemma:hatpi:bob}
For the distribution $\distnest_{s}$, it holds that:
\[
\mi{ \rv{L}_{\bpriv} }{ \rv{G}_{0 \to \n} \mid \rv{\Pi}^{\trans}(G), \rv{G}_{\bob}, \rv{L}_{\pub} } = 0.
\]
\end{lemma}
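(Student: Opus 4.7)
The plan is to apply the rectangle property of deterministic communication protocols together with a structural decoupling of the niceness constraints defining the support of $\distnest_s$. Since we have already reduced to the case where $\Pi$ is deterministic, the map $G \mapsto \Pi^{\trans}(G)$ is a fixed function, and for every $\tau$ in its range the set of inputs producing $\tau$ is a combinatorial rectangle $A_\tau \times B_\tau$ with $A_\tau$ a set of Alice's inputs $(L_{\apriv}, L_{\pub, \ev})$ and $B_\tau$ a set of Bob's inputs $(L_{\bpriv}, L_{\pub, \od})$.

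The key structural step is to show that under $\distnest_s$ one has the conditional independence $L_{\apriv} \perp L_{\bpriv} \mid L_{\pub}, G_{\bob}$. I would prove this by inspecting each niceness constraint defining the support: it is either $(i)$ an intra-block condition in an odd block, which given $L_{\pub, \ev}$ constrains only $L_{\bpriv}$; $(ii)$ an intra-block condition in an even block, which given $L_{\pub, \od}$ constrains only $L_{\apriv}$; or $(iii)$ an inter-block condition which, after conditioning on $G_{\bob}$, reduces to a condition on the even-block compositions, hence only on $L_{\apriv}$. The additional consistency condition that each odd block composes to $G_{\bob, i}$ involves only $L_{\bpriv}$ given $L_{\pub, \ev}$. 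Since no constraint couples $L_{\apriv}$ to $L_{\bpriv}$, the support of $(L_{\apriv}, L_{\bpriv})$ given $(L_{\pub}, G_{\bob})$ is a product set, and since $\distnest_s$ is uniform on its support, the corresponding conditional distribution factorizes.

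Combining this product structure with the rectangle property then yields $L_{\apriv} \perp L_{\bpriv} \mid \Pi^{\trans}(G), L_{\pub}, G_{\bob}$: further conditioning on $\Pi^{\trans}(G) = \tau$ multiplies the factored prior by $\mathbf{1}[L_{\apriv} \in A'_\tau] \cdot \mathbf{1}[L_{\bpriv} \in B'_\tau]$, where $A'_\tau, B'_\tau$ are obtained by projecting $A_\tau, B_\tau$ onto the $L_{\apriv}$- and $L_{\bpriv}$-coordinates after fixing $L_{\pub}$, and this preserves the product form.

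To conclude, I would observe that $G_{0 \to \n}$ is the composition of the $\hat{\n}$ block matchings $G_1, \ldots, G_{\hat{\n}}$, where the odd-indexed ones are read off from $G_{\bob}$ and each even-indexed one is a composition of matchings in $L_{\apriv, i}$ and $L_{\pub, \od, i}$. Hence $G_{0 \to \n}$ is a deterministic function of $(L_{\apriv}, L_{\pub}, G_{\bob})$, and the conditional independence above immediately forces $\mi{\rv{L}_{\bpriv}}{\rv{G}_{0 \to \n} \mid \rv{\Pi}^{\trans}(G), \rv{G}_{\bob}, \rv{L}_{\pub}} = 0$. The main technical obstacle will be the detailed verification of the niceness-constraint decoupling in the second step, which leans on the recursive structure of nested block graphs.
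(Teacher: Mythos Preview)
Your proposal is correct and follows essentially the same route as the paper: reduce to showing $L_{\apriv}\perp L_{\bpriv}$ given $(\Pi^{\trans},G_{\bob},L_{\pub})$ via the rectangle property of deterministic protocols, then use that $G_{0\to\n}$ is a function of $(L_{\apriv},L_{\pub},G_{\bob})$. The only difference is in how the underlying conditional independence is argued: the paper uses the chain rule to enlarge $\mi{L_{\bpriv}}{L_{\apriv}\mid G_{\bob},L_{\pub}}$ to $\mi{\{L_i\}_{\od~i}}{\{L_i\}_{\ev~i}\mid G_{\bob}}$ and then invokes block independence directly, whereas you argue $L_{\apriv}\perp L_{\bpriv}\mid L_{\pub},G_{\bob}$ by decomposing the niceness constraints; both arguments ultimately rest on the same structural fact that the blocks $L_1,\ldots,L_{2c}$ are conditionally independent given their compositions.
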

\begin{proof}
To start, note that conditioned on $\rv{G}_{\bob}, \rv{L}_{\pub}$, the random variable $\rv{G}_{0 \to \n}$ is determined by $\rv{L}_{\apriv}$. Thus, we have from \Cref{part:data-processing} of \Cref{fact:it-facts} that:
\[
\mi{ \rv{L}_{\bpriv} }{ \rv{G}_{0 \to \n} \mid \rv{\Pi}^{\trans}(G), \rv{G}_{\bob}, \rv{L}_{\pub} } \leq \mi{ \rv{L}_{\bpriv} }{ \rv{L}_{\apriv} \mid \rv{\Pi}^{\trans}(G), \rv{G}_{\bob}, \rv{L}_{\pub} } .
\]
Next, recall that $\Pi$ is an $(\card{S} - 1)$-round communication protocol and let $\Pi_1(G), \cdots, \Pi_{\card{S} - 1}(G)$ be the messages sent in $\Pi$ when run on input $G$. As $\Pi$ is deterministic, we have that for all odd (resp, even) $r \in [\card{S} - 1]$, the value $\Pi_1(G)$ is determined by Alice's (resp. Bob's) input and the values $\Pi_1(G), \cdots, \Pi_{r-1}(G)$. Using this fact repeatedly, we have:
\begin{align*}
\mi{ \rv{L}_{\bpriv} }{ \rv{G}_{0 \to \n} \mid \rv{\Pi}^{\trans}(G), \rv{G}_{\bob}, \rv{L}_{\pub} } &\leq \mi{ \rv{L}_{\bpriv} }{ \rv{L}_{\apriv} \mid \Pi_1(\rv{G}), \cdots, \Pi_{\card{S} - 1}(G), \rv{G}_{\bob}, \rv{L}_{\pub} } \\
&\leq \mi{ \rv{L}_{\bpriv} }{ \rv{L}_{\apriv} \mid \rv{G}_{\bob}, \rv{L}_{\pub} } \tag{\Cref{prelim-prop:info-decrease}} \\
&\leq \mi{ \rv{L}_{\bpriv},  \rv{L}_{\pub,\ev} }{ \rv{L}_{\apriv}, \rv{L}_{\pub,\od} \mid \rv{G}_{\bob} } \tag{\Cref{fact:it-facts}, \Cref{part:chain-rule}} \\
&\leq  \mi{ \{\rv{L}_i\}_{\od~i = 1}^{2c} }{ \{\rv{L}_i\}_{\ev~i = 1}^{2c} \mid \{ \rv{G}_{(i-1)\n' \to i\n'} \}_{\od~i \in [2c]} } .
\end{align*}
To finish, note that, conditioned on $\{ \rv{G}_{(i-1)\n' \to i\n'} \}_{\od~i \in [2c]}$ the edges in $\{\rv{L}_i\}_{\od~i = 1}^{2c}$ and $\{\rv{L}_i\}_{\ev~i = 1}^{2c}$ are independent, and therefore the right hand side is $0$.
\end{proof}

Consider the following $1$-round $(s, c, \hat{S})$-protocol $\hat{\Pi}$. Observe that in Lines~\ref{line:hatpi:apriv1}, \ref{line:hatpi:apriv2} we resample $L_{i,\od}$. This is fine as our conditioning ensures that the two values are the same. Likewise for Lines~\ref{line:hatpi:bpriv1}, \ref{line:hatpi:bpriv2}.

\begin{algorithm}
\caption{The protocol $\hat{\Pi}$.}
\label{algo:hatpi}
\begin{algorithmic}[1]
\renewcommand{\algorithmicrequire}{\textbf{Input:}}
\renewcommand{\algorithmicensure}{\textbf{Output:}}

\Require An $(s, c, \hat{S})$-almost nested block graph $\hat{G} = \left( \bigsqcup_{i=0}^{\hat{\n}} \hat{V}_i, \bigsqcup_{i=1}^{\hat{\n}} \hat{E}_i \right)$. Alice gets the edges $\hat{E}_i$ for even $i$ and Bob gets the edges $\hat{E}_i$ for odd $i$.

\medskip 

\State Construct a $\hat{T}$-layered graph $\hat{H}$ by setting $\hat{H}_{\alice} = \hat{G}_{\alice}$ and $\hat{H}_{\bob} = \hat{H}^{\star}_{\bob}$. 

\medskip

\Statex \hspace{-\algorithmicindent} {\bf Sample a $T$-layered graph $H$:}

\medskip

\State Publicly sample $L_{\pub}$ uniformly at random. \label{line:hatpi:pub}

\For{$i \in [2c]$} \label{line:hatpi:loop}

\If{$i$ is even}

\State If $i = 2c$, Alice samples edges $L_i$ from the distribution $\distnest'^{\hat{H}_i}_{s_1} \mid \rv{L}_{i,\od} = L_{i,\od}$ .  \label{line:hatpi:apriv1}
\State If $i < 2c$,  Alice samples edges $L_i$ from the distribution $\distnest'^{\hat{H}_i} \mid \rv{L}_{i,\od} = L_{i,\od}$ .  \label{line:hatpi:apriv2}

\Else

\State If $i = 1$, Alice samples edges $L_i$ from the distribution $\distnest'^{\hat{H}_i}_{s} \mid \rv{L}_{i,\ev} = L_{i,\ev}$ .  \label{line:hatpi:bpriv1}
\State If $i > 1$,  Alice samples edges $L_i$ from the distribution $\distnest'^{\hat{H}_i} \mid \rv{L}_{i,\ev} = L_{i,\ev}$ .  \label{line:hatpi:bpriv2}

\EndIf

\EndFor

\medskip

\Statex \hspace{-\algorithmicindent} {\bf Simulate the protocol $\Pi^{\trans}$ on $H$:}

\medskip

\State Alice runs the protocol $\Pi^{\trans}$ on $H$. Observe that she can do this without any communication as she knows all of $H$. She sends the output $\Pi^{\trans}(H)$ to Bob. \label{line:hatpi:comm}

\medskip

\State Bob outputs $(\hat{G}_{\bob}, \Pi^{\trans}(H), L_{\pub})$.

\end{algorithmic}
\end{algorithm}

Let $\hat{H}^{\hat{\Pi}}(\hat{G})$ be the graph constructed in $\hat{\Pi}$ when the input is $\hat{G}$. Also, define the random variable $\rv{H}^{\hat{\Pi}}(\hat{G})$ be the graph $H$ sampled in $\hat{\Pi}$ when the input is $\hat{G}$. Using \Cref{lemma:hybrid}, we conclude that:
\begin{lemma}
\label{lemma:hybrid1}
For all injections $\Z : [s] \to [s_1]$, we have that:
\[
\tvd{ \distributions{ \rv{\hat{G}}_{\bob}, \rv{\Pi}^{\trans}(G), \rv{L}_{\pub} }{ \hat{G} \sim \hat{\distnest}^{\Z}_{s}, G \sim \layer \mid \expand_{\hat{G}} } }{ \distributions{ \rv{\hat{G}}_{\bob}, \rv{\Pi}^{\trans}(G), \rv{L}_{\pub} }{ \hat{G} \sim \hat{\distnest}^{\Z}_{s}, G \sim \layer \mid \expand_{\hat{H}^{\hat{\Pi}}(\hat{G})} } } \leq c\epsilon_0 .
\]
\end{lemma}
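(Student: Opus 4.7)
The plan is to reduce the statement directly to~\Cref{lemma:hybrid} by a fix-and-average argument. The key observation that makes this reduction work is structural and comes straight from the construction in~\Cref{algo:hatpi}: the auxiliary graph $\hat{H}^{\hat{\Pi}}(\hat{G})$ is defined so that $\hat{H}^{\hat{\Pi}}(\hat{G})_{\alice} = \hat{G}_{\alice}$, while its Bob-part is replaced by the fixed canonical $\hat{H}^{\star}_{\bob}$. Thus, for every realization of $\hat{G}$, the pair $(\hat{G}, \hat{H}^{\hat{\Pi}}(\hat{G}))$ is of exactly the form to which~\Cref{lemma:hybrid} applies.

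Concretely, I would first condition on the outer random variable $\hat{G} \sim \hat{\distnest}^{\Z}_{s}$. On both sides of the TVD in the statement, the marginal distribution of $\hat{G}$ is identical (it is $\hat{\distnest}^{\Z}_{s}$ in each case), and the first coordinate $\hat{G}_{\bob}$ of the output tuple is a deterministic function of $\hat{G}$. By convexity of total variation (i.e., the triangle inequality applied to the common outer randomness), the TVD we must bound is at most
\[
\Ex_{\hat{G} \sim \hat{\distnest}^{\Z}_{s}} \tvd{ \distributions{ \rv{\Pi}^{\trans}(G), \rv{L}_{\pub} }{ G \sim \layer \mid \expand_{\hat{G}} } }{ \distributions{ \rv{\Pi}^{\trans}(G), \rv{L}_{\pub} }{ G \sim \layer \mid \expand_{\hat{H}^{\hat{\Pi}}(\hat{G})} } }.
\]
For each fixed $\hat{G}$, we apply~\Cref{lemma:hybrid} with the choice $\hat{H}^1 := \hat{G}$ and $\hat{H}^2 := \hat{H}^{\hat{\Pi}}(\hat{G})$; the hypothesis $\hat{H}^1_{\alice} = \hat{H}^2_{\alice}$ is satisfied by the observation above, so the inner TVD is at most $c\epsilon_0$. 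Taking the expectation over $\hat{G}$ delivers the bound of $c\epsilon_0$ in the statement.

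I do not expect any serious obstacle here: once the structural compatibility with~\Cref{lemma:hybrid} is noticed, the rest is bookkeeping. The only small points to verify carefully are that $\hat{G}_{\bob}$ can indeed be pulled outside the TVD (it is the same measurable function of the outer randomness $\hat{G}$ on both sides, so this is standard), and that the $\hat{G}$-marginal really does agree on both sides (which is immediate, since on both sides $\hat{G}$ is drawn from $\hat{\distnest}^{\Z}_{s}$ before any further conditioning is imposed).
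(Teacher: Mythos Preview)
Your proposal is correct and follows essentially the same approach as the paper: reduce to \Cref{lemma:hybrid} by averaging over the outer randomness and using that $\hat{H}^{\hat{\Pi}}(\hat{G})_{\alice}=\hat{G}_{\alice}$. The only cosmetic difference is that the paper first conditions on $\hat{G}_{\bob}$ (via \Cref{fact_tvd_chain_rule}) and then averages over the remaining randomness in $\hat{G}$ using \Cref{lemma:tvdtriangle}, whereas you condition on all of $\hat{G}$ in one step; your route is slightly cleaner and arrives at the identical pointwise application of \Cref{lemma:hybrid}.
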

\begin{proof}
We have:
\begin{align*}
&\tvd{ \distributions{ \rv{\hat{G}}_{\bob}, \rv{\Pi}^{\trans}(G), \rv{L}_{\pub} }{ \hat{G} \sim \hat{\distnest}^{\Z}_{s}, G \sim \layer \mid \expand_{\hat{G}} } }{ \distributions{ \rv{\hat{G}}_{\bob}, \rv{\Pi}^{\trans}(G), \rv{L}_{\pub} }{ \hat{G} \sim \hat{\distnest}^{\Z}_{s}, G \sim \layer \mid \expand_{\hat{H}^{\hat{\Pi}}(\hat{G})} } } \\
&\leq \E_{ \hat{G}_{\bob} \sim \hat{\distnest}^{\Z}_{s} } \Bracket{ \tvd{ \distributions{ \rv{\Pi}^{\trans}(G), \rv{L}_{\pub} }{ \hat{G} \sim \hat{\distnest}^{\Z}_{s} \mid \hat{G}_{\bob}, G \sim \layer \mid \expand_{\hat{G}} } }{ \distributions{ \rv{\Pi}^{\trans}(G), \rv{L}_{\pub} }{ \hat{G} \sim \hat{\distnest}^{\Z}_{s} \mid \hat{G}_{\bob}, G \sim \layer \mid \expand_{\hat{H}^{\hat{\Pi}}(\hat{G})} } } } \tag{\Cref{fact_tvd_chain_rule}} \\
&\leq \E_{ \hat{G}_{\bob} \sim \hat{\distnest}^{\Z}_{s} } \Bracket{ \tvd{ \E_{ \hat{G} \sim \hat{\distnest}^{\Z}_{s} \mid \hat{G}_{\bob} } \Bracket{ \distributions{ \rv{\Pi}^{\trans}(G), \rv{L}_{\pub} }{ G \sim \layer \mid \expand_{\hat{G}} } } }{ \E_{ \hat{G} \sim \hat{\distnest}^{\Z}_{s} \mid \hat{G}_{\bob} } \Bracket{ \distributions{ \rv{\Pi}^{\trans}(G), \rv{L}_{\pub} }{ G \sim \layer \mid \expand_{\hat{H}^{\hat{\Pi}}(\hat{G})} } } } } \\
&\leq \E_{ \hat{G}_{\bob} \sim \hat{\distnest}^{\Z}_{s} } \E_{ \hat{G} \sim \hat{\distnest}^{\Z}_{s} \mid \hat{G}_{\bob} } \Bracket{ \tvd{ \distributions{ \rv{\Pi}^{\trans}(G), \rv{L}_{\pub} }{ G \sim \layer \mid \expand_{\hat{G}} } }{ \distributions{ \rv{\Pi}^{\trans}(G), \rv{L}_{\pub} }{ G \sim \layer \mid \expand_{\hat{H}^{\hat{\Pi}}(\hat{G})} } } } \tag{\Cref{lemma:tvdtriangle}} \\
&\leq c\epsilon_0 \tag{\Cref{lemma:hybrid}} .
\end{align*}
\end{proof}

A key property satisfied by the protocol $\hat{\Pi}$ is that:
\begin{lemma}
\label{lemma:hatpi:identical}
It holds for all $(s, c, \hat{S})$-almost nested block graphs $\hat{G}$ that:
\[
\distribution{ \rv{H}^{\hat{\Pi}}(\hat{G}) } = \layer \mid \expand_{\hat{H}^{\hat{\Pi}}(\hat{G})} .
\]
\end{lemma}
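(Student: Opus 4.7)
The plan is to verify that the graph $H = \rv{H}^{\hat{\Pi}}(\hat{G})$ constructed inside $\hat{\Pi}$ has exactly the distribution $\layer \mid \expand_{\hat{H}^{\hat{\Pi}}(\hat{G})}$, by reducing the check to a per-block marginal-conditional factorization and combining via the chain rule. Let $\hat{H}=\hat{H}^{\hat{\Pi}}(\hat{G})$ throughout.

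First I would show that $\layer \mid \expand_{\hat{H}}$ factorizes as a product over the $2c$ macro-blocks of the nested structure. Under $\layer$, the edge set $L_i$ inside the $i$-th macro-block is independent of the other blocks, and the event $\expand_{\hat{H}}$ is a conjunction of the block-wise events $\expand_{\hat{H},i}$ asserting that the $i$-th macro-block is nice and has macro-matching $\hat{H}_i$. Consequently
\[
\layer \mid \expand_{\hat{H}} \;=\; \bigtimes_{i=1}^{2c} \mathcal{D}_i ,
\]
where $\mathcal{D}_i = \distnest'^{\hat{H}_i}$ for $1 < i < 2c$, $\mathcal{D}_1 = \distnest'^{\hat{H}_1}_{s}$ (smaller first layer of size $s$), and $\mathcal{D}_{2c} = \distnest'^{\hat{H}_{2c}}_{s_1}$ (smaller last layer of size $s_1$), so it suffices to match $\hat{\Pi}$'s output to this product distribution.

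Next, I would match this factorization to the sampling in $\hat{\Pi}$. For each block $i$, the protocol obtains the public half $L_{i,\text{pub}}$ (namely $L_{i,\od}$ for even $i$ and $L_{i,\ev}$ for odd $i$) from public randomness and then draws $L_i$ by sampling the remaining private half from $\mathcal{D}_i$ conditioned on $L_{i,\text{pub}}$. The key point is that the ``uniformly at random'' public draw coincides with the marginal of $L_{i,\text{pub}}$ under $\mathcal{D}_i$: by a symmetry argument (permuting vertices inside any intermediate layer of the $i$-th block is an automorphism of the set of nice blocks with fixed macro-matching $\hat{H}_i$), this marginal is uniform over the set of values of $L_{i,\text{pub}}$ that can be extended to a graph in $\mathcal{D}_i$. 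With this identification, the chain rule gives $L_i \sim \mathcal{D}_i$ for every $i$.

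Finally, because the public randomness is independent across blocks and the private sampling inside block $i$ uses only the public part of the same block (plus fresh private randomness), the resulting $L_1, \ldots, L_{2c}$ are mutually independent and so jointly distribute as $\bigtimes_{i=1}^{2c} \mathcal{D}_i = \layer \mid \expand_{\hat{H}}$, as required. The main subtle point is justifying the symmetry argument that yields uniformity of $L_{i,\text{pub}}$'s marginal under $\mathcal{D}_i$; once this vertex-relabeling bijection is spelled out for each of the three boundary/middle block types, the rest of the proof is a routine assembly via the chain rule and block-wise independence.
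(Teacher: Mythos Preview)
Your proposal is correct and follows essentially the same approach as the paper. The only cosmetic difference is the order of decomposition: you first factorize $\layer \mid \expand_{\hat{H}}$ as a product over the $2c$ blocks and then, within each block, match the public marginal and the private conditional; the paper instead first checks that the \emph{entire} $L_{\pub}$ has the correct (uniform) marginal in both distributions and then, conditioned on $L_{\pub}$, checks that the private coordinates $(L_{\apriv},L_{\bpriv})$ are independent across blocks with the right per-block conditional law. These two orderings are interchangeable and rely on the same facts (block-wise independence under $\layer$, block-wise conditioning in $\expand_{\hat{H}}$, and uniformity of the public half within each $\mathcal{D}_i$), so the arguments are equivalent.
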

\begin{proof}
The proof is in two steps.
\paragraph{The marginal distribution of $\rv{L}_{\pub}$ is identical.} First note, that the marginal distribution $\rv{L}_{\pub}$ in $\distribution{ \rv{H}^{\hat{\Pi}}(\hat{G}) }$ is just uniform. We now consider the distribution $\layer \mid \expand_{\hat{H}^{\hat{\Pi}}(\hat{G})}$. As the edges $L_i$ are mutually independent for all $i \in [2c]$ in the distribution $\layer$ and conditioning on $\expand_{\hat{H}^{\hat{\Pi}}(\hat{G})}$ does not change this fact, the distribution $\layer \mid \expand_{\hat{H}^{\hat{\Pi}}(\hat{G})}$ can be seen as simply sampling $L_i$ independently from the distribution $\distnest'^{\hat{H}^{\hat{\Pi}}_i(\hat{G})}$ for $i \notin \{1, 2c\}$, from the distribution $\distnest'^{\hat{H}^{\hat{\Pi}}_i(\hat{G})}_{s}$, when $i = 1$, and from the distribution $\distnest'^{\hat{H}^{\hat{\Pi}}_i(\hat{G})}_{s_1}$, when $i = 2c$. In particular, the marginal distribution of $\rv{L}_{\pub}$ is uniform.
\paragraph{Conditioned on $L_{\pub}$, the marginal distribution of $(\rv{L}_{\apriv}, \rv{L}_{\bpriv})$ is identical.} The value $(\rv{L}_{\apriv}, \rv{L}_{\bpriv})$ can be seen as the $2c$-tuple:
\[
(\rv{L}_{\apriv}, \rv{L}_{\bpriv}) = (\{ \rv{L}_{i, \ev} \}_{\ev~i = 1}^{2c}, \{ \rv{L}_{i, \od} \}_{\od~i = 1}^{2c}) .
\]
We first show that the $2c$ coordinates are mutually independent in both distributions. In the distribution $\distribution{ \rv{H}^{\hat{\Pi}}(\hat{G}) } \mid L_{\pub}$, this is by Lines~\ref{line:hatpi:apriv1}, \ref{line:hatpi:apriv2}, \ref{line:hatpi:bpriv1}, and \ref{line:hatpi:bpriv2}, while for the distribution $\layer \mid \expand_{\hat{H}^{\hat{\Pi}}(\hat{G})}, L_{\pub}$, this is because the coordinates $L_i$ are independent in $\layer$ and conditioning on $\expand_{\hat{H}^{\hat{\Pi}}(\hat{G})}, L_{\pub}$ does not change this fact.

To finish the proof, we fix a coordinate $i \in [2c]$ in this tuple and show that the marginal distribution of the $i^{\text{th}}$ coordinate is the same for both the distributions. Suppose $i = 1$ without loss of generality as the argument is the same for other values of $i$. The marginal distribution of $\rv{L}_{i, \od}$ in the distribution $\distribution{ \rv{H}^{\hat{\Pi}}(\hat{G}) } \mid L_{\pub}$ is the corresponding marginal when the edges $\rv{L}_i$ are sampled from $\distnest'^{\hat{H}_i}_{s} \mid \rv{L}_{i,\ev} = L_{i,\ev}$. The latter is the same as the marginal distribution of $\rv{L}_{i,\od}$ in the distribution $\layer \mid \expand_{\hat{H}^{\hat{\Pi}}(\hat{G})}, L_{\pub}$.
\end{proof}

As the output of the protocol $\Pi$ is determined by the transcript and Bob's input, we get:
\begin{align*}
&\tvd{ \distributions{ \rv{\Pi}(G) }{ G \sim \distnest_{s}^{\Y} } }{ \distributions{ \rv{\Pi}(G) }{ G \sim \distnest_{s}^{\N} } } \\
&\hspace{0.5cm} \leq \tvd{ \distributions{ \rv{G}_{\bob}, \rv{\Pi}^{\trans}(G), \rv{L}_{\pub}, \rv{L}_{\bpriv} }{ G \sim \distnest_{s}^{\Y} } }{ \distributions{ \rv{G}_{\bob}, \rv{\Pi}^{\trans}(G), \rv{L}_{\pub}, \rv{L}_{\bpriv} }{ G \sim \distnest_{s}^{\N} } } \\
&\hspace{0.5cm} \leq \tvd{ \distributions{ \rv{G}_{\bob}, \rv{\Pi}^{\trans}(G), \rv{L}_{\pub} }{ G \sim \distnest_{s}^{\Y} } }{ \distributions{ \rv{G}_{\bob}, \rv{\Pi}^{\trans}(G), \rv{L}_{\pub} }{ G \sim \distnest_{s}^{\N} } } \\
&\hspace{1cm} + \E_{ (G_{\bob}, \Pi^{\trans}(G), L_{\pub}) \sim  \distnest_{s}^{\Y} } \Bracket{ \tvd{ \distributions{ \rv{L}_{\bpriv} }{ G \sim \distnest_{s}^{\Y} \mid G_{\bob}, \Pi^{\trans}(G), L_{\pub} } }{ \distributions{ \rv{L}_{\bpriv} }{ G \sim \distnest_{s}^{\N} \mid G_{\bob}, \Pi^{\trans}(G), L_{\pub} } } } \tag{\Cref{fact_tvd_chain_rule}} .
\end{align*}
To continue, note by \Cref{lemma:hatpi:bob} that, conditioned on $\rv{\Pi}^{\trans}(G), \rv{G}_{\bob}, \rv{L}_{\pub}$, the random variable $\rv{L}_{\bpriv}$ is independent of $\rv{G}_{0 \to \n}$. Thus, the second term on the right hand side is $0$. For the first term, observe that, for all injections $\Z : [s] \to [s_1]$, sampling a graph $G \sim \distnest_{s}^{\Z}$ is the same as first sampling a graph $\hat{G} \sim \hat{\distnest}_{s}^{\Z}$ and then sampling $G \sim \layer \mid \expand_{\hat{G}}$. We get:
\begin{align*}
&\tvd{ \distributions{ \rv{\Pi}(G) }{ G \sim \distnest_{s}^{\Y} } }{ \distributions{ \rv{\Pi}(G) }{ G \sim \distnest_{s}^{\N} } } \\
&\hspace{0.5cm} \leq \tvd{ \distributions{ \rv{G}_{\bob}, \rv{\Pi}^{\trans}(G), \rv{L}_{\pub} }{ \hat{G} \sim \hat{\distnest}_{s}^{\Y}, G \sim \layer \mid \expand_{\hat{G}} } }{ \distributions{ \rv{G}_{\bob}, \rv{\Pi}^{\trans}(G), \rv{L}_{\pub} }{ \hat{G} \sim \hat{\distnest}_{s}^{\N}, G \sim \layer \mid \expand_{\hat{G}} } } \\
&\hspace{0.5cm} \leq \tvd{ \distributions{ \rv{\hat{G}}_{\bob}, \rv{\Pi}^{\trans}(G), \rv{L}_{\pub} }{ \hat{G} \sim \hat{\distnest}_{s}^{\Y}, G \sim \layer \mid \expand_{\hat{G}} } }{ \distributions{ \rv{\hat{G}}_{\bob}, \rv{\Pi}^{\trans}(G), \rv{L}_{\pub} }{ \hat{G} \sim \hat{\distnest}_{s}^{\N}, G \sim \layer \mid \expand_{\hat{G}} } } \\
&\hspace{0.5cm} \leq \tvd{ \distributions{ \rv{\hat{G}}_{\bob}, \rv{\Pi}^{\trans}(G), \rv{L}_{\pub} }{ \hat{G} \sim \hat{\distnest}_{s}^{\Y}, G \sim \layer \mid \expand_{\hat{G}} } }{ \distributions{ \rv{\hat{G}}_{\bob}, \rv{\Pi}^{\trans}(G), \rv{L}_{\pub} }{ \hat{G} \sim \hat{\distnest}_{s}^{\Y}, G \sim \layer \mid \expand_{\hat{H}^{\hat{\Pi}}(\hat{G})} } } \\
&\hspace{1cm} + \tvd{ \distributions{ \rv{\hat{G}}_{\bob}, \rv{\Pi}^{\trans}(G), \rv{L}_{\pub} }{ \hat{G} \sim \hat{\distnest}_{s}^{\Y}, G \sim \layer \mid \expand_{\hat{H}^{\hat{\Pi}}(\hat{G})} } }{ \distributions{ \rv{\hat{G}}_{\bob}, \rv{\Pi}^{\trans}(G), \rv{L}_{\pub} }{ \hat{G} \sim \hat{\distnest}_{s}^{\N}, G \sim \layer \mid \expand_{\hat{H}^{\hat{\Pi}}(\hat{G})} } } \\
&\hspace{1cm} + \tvd{ \distributions{ \rv{\hat{G}}_{\bob}, \rv{\Pi}^{\trans}(G), \rv{L}_{\pub} }{ \hat{G} \sim \hat{\distnest}_{s}^{\N}, G \sim \layer \mid \expand_{\hat{H}^{\hat{\Pi}}(\hat{G})} } }{ \distributions{ \rv{\hat{G}}_{\bob}, \rv{\Pi}^{\trans}(G), \rv{L}_{\pub} }{ \hat{G} \sim \hat{\distnest}_{s}^{\N}, G \sim \layer \mid \expand_{\hat{G}} } } \tag{Triangle Inequality} \\
&\hspace{0.5cm} \leq \tvd{ \distributions{ \rv{\hat{G}}_{\bob}, \rv{\Pi}^{\trans}(G), \rv{L}_{\pub} }{ \hat{G} \sim \hat{\distnest}_{s}^{\Y}, G \sim \layer \mid \expand_{\hat{H}^{\hat{\Pi}}(\hat{G})} } }{ \distributions{ \rv{\hat{G}}_{\bob}, \rv{\Pi}^{\trans}(G), \rv{L}_{\pub} }{ \hat{G} \sim \hat{\distnest}_{s}^{\N}, G \sim \layer \mid \expand_{\hat{H}^{\hat{\Pi}}(\hat{G})} } } \\
&\hspace{4cm} + 2c\epsilon_0 \tag{\Cref{lemma:hybrid1}} \\
&\hspace{0.5cm} \leq \tvd{ \distributions{ \rv{\hat{G}}_{\bob}, \rv{\Pi}^{\trans}(G), \rv{L}_{\pub} }{ \hat{G} \sim \hat{\distnest}_{s}^{\Y}, G \sim \distribution{ \rv{H}^{\hat{\Pi}}(\hat{G}) } } }{ \distributions{ \rv{\hat{G}}_{\bob}, \rv{\Pi}^{\trans}(G), \rv{L}_{\pub} }{ \hat{G} \sim \hat{\distnest}_{s}^{\N}, G \sim \distribution{ \rv{H}^{\hat{\Pi}}(\hat{G}) } } } \\
&\hspace{4cm} + 2c\epsilon_0 \tag{\Cref{lemma:hatpi:identical}} \\
&\hspace{0.5cm} \leq \tvd{ \distributions{ \rv{\hat{\Pi}}(\rv{\hat{G}}) }{ \hat{G} \sim \hat{\distnest}_{s}^{\Y} } }{ \distributions{ \rv{\hat{\Pi}}(\rv{\hat{G}}) }{ \hat{G} \sim \hat{\distnest}_{s}^{\N} } } + 2c\epsilon_0 \\
&\hspace{0.5cm} \leq 2c\epsilon_0 + \Phi(s, c, s_2, C) \tag{\Cref{lemma:oneround}} .
\end{align*}
\end{proof}


\section{Graph Streaming Lower Bounds and Beyond}\label{sec:implications} 

We now list some of the implications of our lower bound for $\OMC$ in~\Cref{res:main} in proving streaming lower bounds for graph problems and beyond. \Cref{res:implications} is formalized by these theorems.

We shall note that many of these reductions are simple (or sometimes even simpler) variants of prior reductions from $\BHH$ for these problems in the literature and we claim no particular novelty in their proofs (although in some cases such as maximum matching  they even improve prior lower bounds for single-pass algorithms). 
Rather, we believe that these reductions showcase the role of $\OMC$ as a ``multi-round version'' of $\BHH$ for proving streaming lower bounds. 

Throughout this section, for any integer $p \geq 1$  and $\eps \in (0,1)$, we define: 
\begin{align}
	g(\eps,p) := \beta \cdot \eps^{1/2p-1} \label{eq:g}
\end{align} 
for some absolute constant $\beta > 0$, depending only on the problem considered (taking $\beta = 10^6$ suffices for all the following applications). 

\subsubsection*{Odd Cycles in the Reductions} 

Before we get to prove our streaming lower bounds, we prove a simple auxiliary lemma that is used in many of the following reductions. Recall that in the $\OMC_{n,k}$ problem, both $n$ and $k$ are even
integers and this in particular implies that we are always working with \emph{even}-length cycles. This is necessary in the definition of the problem as an odd-cycle cannot be partitioned into two matchings as input to players. But what if we
are interested in the case when the input graph consists of many \emph{odd} cycles or {none} at all? As it will become evident shortly, such lower bounds can be very helpful for proving streaming lower bounds for MAX-CUT, matching size, 
or bipartiteness testing. The following lemma establishes a simple reduction for this purpose\footnote{We could have alternatively used~\Cref{thm:FMT} to directly prove a lower bound for a 
slight variant of $\OMC_{n,k}$ for odd-values of $k$ in which the input to players are no longer necessarily matchings--however, in the spirit of this section, we do this part using a reduction as well.}. 

\begin{lemma}\label{lem:odd-cycle}
	Let $n,k$ be even integers such that $n/k$ is also even and sufficiently large.
	Let $G$ be an input graph to $\OMC_{n,k}$. Suppose we sample $(n/k)$ edges from $G$ uniformly at random and obtain a graph $H$ from $G$ by ``stretching'' each sampled edge
	into two edges and placing a new vertex between them.
	Then, $(i)$ in the \Yes case of $\OMC(G)$, $H$ has no odd cycle, while $(ii)$ in the \No case of $\OMC(G)$, with probability at least $9/10$, $H$ has $\nicefrac{n}{10k}$ vertex-disjoint odd cycles of length $k+1$. 
\end{lemma}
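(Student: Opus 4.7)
The two items will be verified separately. Item (i) is essentially immediate from the structural effect of the stretching operation on cycles, while item (ii) reduces to a standard balls-into-bins type concentration argument.

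For item (i), in the \Yes case, $G$ is a single Hamiltonian cycle of length $n$. Stretching an edge $uv$ into a length-two path $u\text{-}w\text{-}v$ replaces one edge by two and increases the length of the unique cycle through that edge by exactly one. After stretching all $n/k$ sampled edges, $H$ therefore remains a single cycle, now of length $n + n/k$. This length is even since both $n$ and $n/k$ are even by assumption, so $H$ is an even cycle and hence contains no odd cycle.

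For item (ii), in the \No case, $G$ is the disjoint union of $m := n/k$ vertex-disjoint cycles $C_1,\ldots,C_m$, each of length $k$. After stretching, cycle $C_i$ becomes a cycle of length $k + T_i$, where $T_i$ is the number of sampled edges lying on $C_i$. Such a cycle is an odd $(k+1)$-cycle precisely when $T_i = 1$, and the resulting $(k+1)$-cycles arising from different $C_i$'s are automatically vertex-disjoint. So it suffices to prove that $Y := \sum_{i=1}^m \bOne[T_i = 1]$ satisfies $\Pr[Y \geq m/10] \geq 9/10$.

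I will view the sampling as choosing $m$ of the $n$ edges without replacement, so that each $T_i$ is hypergeometric with parameters $(n, k, m)$ and mean $km/n = 1$. A direct estimation
\[
\Pr[T_i = 1] \;=\; k \cdot \frac{\binom{n-k}{m-1}}{\binom{n}{m}} \;=\; \frac{km}{n} \cdot \prod_{j=0}^{k-2}\!\frac{n-m-j}{n-1-j} \;\xrightarrow[n \to \infty]{}\; 1/e,
\]
(alternatively, the Poisson limit theorem, since $km/n = 1$ is fixed as $n \to \infty$) shows that for $n/k$ large enough $\Ex[Y] \geq m/3$. For the variance, I will invoke the fact that the $T_i$ are negatively associated (a standard property of uniform sampling without replacement from a partitioned ground set), which gives $\cov{\bOne[T_i = 1], \bOne[T_j = 1]} \leq 0$ for $i \neq j$, and hence $\var{Y} \leq \sum_i \var{\bOne[T_i = 1]} \leq m/4$. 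Chebyshev's inequality then yields $\Pr[Y < m/10] \leq \Pr[|Y - \Ex Y| \geq m/6] \leq 9/m$, which is at most $1/10$ for $m$ sufficiently large, completing the proof. The only mildly delicate step is the variance bound; if one prefers to avoid invoking negative association, the same conclusion follows from a direct $O(1/n)$ estimate on the pairwise covariances by computing $\Pr[T_i = 1 \wedge T_j = 1]$ explicitly.
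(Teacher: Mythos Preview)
Your approach mirrors the paper's almost exactly: part (i) is handled identically, and for part (ii) both you and the paper define indicators $X_i=\bOne[T_i=1]$, estimate $\Exp[X_i]\approx e^{-1}$, bound the variance via the pairwise covariances, and finish with Chebyshev.

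The one step that needs repair is your invocation of negative association. While the occupancy counts $(T_1,\ldots,T_m)$ are indeed negatively associated, the indicator $\bOne[T_i=1]$ is \emph{not} a monotone function of $T_i$, so NA does not give $\cov{\bOne[T_i=1],\,\bOne[T_j=1]}\le 0$. In fact the covariance is strictly \emph{positive}: writing $p=\Pr[X_i=1]$ and $q=\Pr[X_j=1\mid X_i=1]$, a short computation shows that each factor in the product expression for $q$ exceeds the corresponding factor for $p$ (concretely, for $n=8$, $k=2$, $m=4$ one gets $p=4/7$ but $q=3/5>4/7$). The paper's own proof asserts the same nonpositivity and is likewise incorrect on this point. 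Your stated fallback is the right fix and is what actually makes the argument go through: a direct estimate gives $q-p=O_k(1/n)$, hence $\var{Y}\le m/4+m^2\cdot O_k(1/n)=O_k(m)$, and Chebyshev yields $\Pr[Y<m/10]=O_k(1/m)\le 1/10$ once $m=n/k$ is sufficiently large, exactly as the lemma permits.
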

\begin{proof}
	Let $s = (n/k)$. The first case is trivial since $H$ is now a Hamiltonian cycle of length $n+s$ and both $n$ and $s$ are even. 
	
	We now analyze the second case. Let $C_1,\ldots,C_{n/k}$ be the $(n/k)$ cycles in $G$. For any $C_i$, we define $X_i \in \set{0,1}$ which is one iff we stretch \emph{exactly} one edge from $C_i$ in $H$. 
	Notice that any cycle $C_i$ with $X_i = 1$ turns into an odd cycle in $H$. For any $i \neq j$, we have, 
	\begin{align*}
		\expect{X_i} &= \frac{\binom{n-k}{s-1}}{\binom{n-1}{s-1}} \geq \frac{1}{2e}; \\
		\var{X_i} &\leq \expect{X_i^2} = \expect{X_i};\\ 
		\cov{X_i X_j} &= \frac{\binom{n-k}{s-1}}{\binom{n-1}{s-1}} \cdot \frac{\binom{n-2k}{s-2}}{\binom{n-k-1}{s-2}} - \paren{\frac{\binom{n-k}{s-1}}{\binom{n-1}{s-1}}}^2 \leq 0. 
	\end{align*}
	Define $X := \sum_{i=1}^{n/k} X_i$ which lower bounds the number of odd cycles in $H$. By the above calculation, 
	\begin{align*}
		\expect{X} &\geq \frac{n}{2e \cdot k}; \\
		\var{X} &= \sum_{i} \var{X_i} + \sum_{i \neq j} \cov{X_i,X_j} \leq \sum_{i} \expect{X_i} + \sum_{i \neq j} \cov{X_i X_j} \leq \frac{n}{2e\cdot k}. 
	\end{align*}
	As such, by Chebyshev's inequality, 
	\begin{align*}
		\Pr\paren{X \leq \frac{n}{10 \cdot k}} \leq \Pr\paren{\card{X - \expect{X}} \geq \frac{n}{5e \cdot k}} \leq \frac{\var{X}}{(\nicefrac{n}{5e \cdot k})^2} \leq \frac{50 \cdot k}{n} \leq \frac{1}{10} \tag{for sufficiently large $n/k \geq 500$}, 
	\end{align*}
	which concludes the proof. 
\end{proof}

\subsection{MAX-CUT}\label{sec:max-cut}

In the MAX-CUT problem, we are given an undirected graph $G=(V,E)$ and our goal is to estimate the \emph{value} of a maximum cut in $G$, i.e., a partition of vertices into two parts that maximizes the number of crossing edges. 

One can approximate MAX-CUT to within a factor of $2$ in $O(\log{n})$ space (by counting the number of edges and dividing by two) and to $(1+\eps)$ in 
$\Ot(n/\eps^2)$ space (by maintaining a cut sparsifier and find the MAX-CUT of the sparsifier in exponential-time); see, e.g.~\cite{KapralovKS15}. A better \emph{two-pass} 
algorithm for this problem on dense graphs is also presented in~\cite{BhaskaraDV18}. A direct reduction from $\BHH$ can prove that any $(1+\eps)$-approximation single-pass algorithm requires $n^{1-O(\eps)}$ space~\cite{KoganK15,KapralovKS15}, and a series of work building on this approach~\cite{KapralovKS15,KapralovKSV17,KapralovK19} culminated in the optimal lower bound of $\Omega(n)$ space for less-than-$2$ approximation in~\cite{KapralovK19}.  

We prove the following theorem for MAX-CUT, formalizing part $(i)$ of~\Cref{res:implications}. 

\begin{theorem}\label{thm:max-cut}
Let $p \geq 1$ be an integer and $\eps \in (0,1)$ be a parameter. Any $p$-pass streaming algorithm that outputs a $(1+\eps)$-approximation to the value of a maximum cut in undirected graphs with probability at least $2/3$ 
requires $\eps \cdot 2^{-O(p)} n^{1-g(\eps,p)}$ space where $g$ is the function in~\Cref{eq:g}. 
\end{theorem}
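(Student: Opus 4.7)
The plan is to reduce from $\OMC_{n,k}$ with $k = \Theta(1/\eps)$, using the odd-cycle gadget from the lemma in ``Odd Cycles in the Reductions'' to produce a gap instance for MAX-CUT. Concretely, starting from an instance $G$ of $\OMC_{n,k}$ (with $n/k$ chosen to be an even integer, possibly after padding by $O(k)$ dummy vertices), Alice and Bob first use public randomness to pick a set $S$ of $s := n/k$ edges uniformly at random from $E(G)$, and then each player replaces every picked edge of her/his own matching by a path of length two through a fresh private vertex. Let $H$ denote the resulting graph. In the \Yes case $H$ is a single cycle on $n+s$ vertices whose length $n+s$ is even, hence $H$ is bipartite and $\textsc{MaxCut}(H) = n+s$; in the \No case, the odd-cycle lemma ensures that with probability $\geq 9/10$ the graph $H$ contains at least $n/(10k)$ vertex-disjoint odd cycles, so $\textsc{MaxCut}(H) \leq (n+s) - n/(10k)$.

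The next step is to pick $k$ so that these two values differ by more than a $(1+\eps)$ factor. Since $n + s = n(1 + 1/k)$, the ratio between the two MAX-CUT values is at least $1 + \tfrac{n/(10k)}{n+s} \geq 1 + 1/(20k)$, so choosing $k = \lfloor 1/(40\eps)\rfloor$ (made even by rounding) yields a gap of at least $1+2\eps$. Consequently, any $(1+\eps)$-approximation algorithm for $\textsc{MaxCut}$ that succeeds with probability $\geq 2/3$ can be combined with the $9/10$-probability odd-cycle guarantee to distinguish the \Yes and \No cases with probability $\geq 1/2 + \Omega(1)$, which is certainly $\geq 1/2 + (k/n)^2$ for large enough $n$, so Theorem~\ref{thm:mainformal} applies.

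For the communication simulation, Alice and Bob run the assumed $p$-pass streaming algorithm $\alg$ on the stream that consists of Alice's (subdivided) edges followed by Bob's (subdivided) edges in each pass, transmitting the memory state at every transition between the two players' portions of the stream. Each pass contributes one Alice$\to$Bob message and (for all but the last pass) one Bob$\to$Alice message, yielding an $r$-round protocol with $r = 2p - 1$ and per-message length $C$, where $C$ is the space of $\alg$. Theorem~\ref{thm:mainformal} applied with this $r$ and with $k = \Theta(1/\eps)$ forces
\[
O(p) \cdot C \;\geq\; 2^{-O(p)} \cdot (n/k)^{1 - 10^4 \cdot k^{-1/(2p-1)}},
\]
and substituting $n/k = \Theta(\eps n)$ and $k^{-1/(2p-1)} = \Theta(\eps^{1/(2p-1)})$ gives $C \geq \eps \cdot 2^{-O(p)} \cdot n^{1 - g(\eps,p)}$, absorbing the harmless $\eps^{-O(\eps^{1/(2p-1)})} = \Theta(1)$ factor into the constant $\beta$ in the definition of $g$.

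I expect the only subtlety to be administrative: ensuring that $n$ and $n/k$ have the correct parity required by both Theorem~\ref{thm:mainformal} (which demands $n$ to be an odd multiple of $k$) and by the odd-cycle lemma (which demands $n/k$ to be even). This is handled by first choosing an appropriate $n' = n - O(k)$ satisfying the number-theoretic constraints, then padding with $O(k)$ isolated vertices so that the resulting space bound is unaffected up to constants. The rest of the argument is a direct translation of the standard streaming-to-communication reduction, with the core hardness provided entirely by Theorem~\ref{thm:mainformal}.
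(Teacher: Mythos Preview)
Your proposal is correct and follows essentially the same route as the paper's own proof: invoke \Cref{lem:odd-cycle} on an $\OMC_{n,k}$ instance with $k=\Theta(1/\eps)$, observe the $(1+\eps)$ gap in MAX-CUT values between the two cases, simulate the $p$-pass algorithm as a $(2p-1)$-round protocol, and apply \Cref{thm:mainformal}. Your choice of constants and your accounting of the communication simulation match the paper's.

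One small correction to your parity remark: the two constraints you list are in fact mutually exclusive (you cannot have $n'/k$ both odd, as \Cref{thm:mainformal} requires, and even, as \Cref{lem:odd-cycle} is stated), so no single $n'$ satisfies both. The paper's own proof glosses over this. The clean fix is not padding but rather to stretch an \emph{even} number $s$ of edges close to $n/k$ (say $s = n/k - 1$ when $n/k$ is odd) in the construction of $H$; then $n+s$ is even in the \Yes case and the proof of \Cref{lem:odd-cycle} goes through unchanged with $s$ in place of $n/k$.
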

\begin{proof}
	Suppose $\alg$ is a $p$-pass $S$-space streaming algorithm for $(1+\eps)$-approximation of MAX-CUT. Consider any instance $G$ of $\OMC_{n,k}$ for even integers $n,k$ where $k \leq \frac{1}{20\,\eps}$ such that $n/k$ is also even and is sufficiently large. 
	
	To solve $\OMC(G)$, Alice and Bob use public coins to  sample a graph $H$ as in~\Cref{lem:odd-cycle} (Alice has the original and stretched edges in $M_A$ and Bob has the original and stretched edges in $M_B$). The players then 
	run $\alg$ on the stream obtained by appending edges of Bob to Alice's edges. This can be done in $2p$ rounds and $p \cdot S$ communication.  
	
	When $\OMC(G)$ is a Yes-case, by~\Cref{lem:odd-cycle}, $H$ has no odd cycle (is bipartite) and thus there is a cut that contain all its $n+\nicefrac{n}{k}$ edges. On the other hand, in the No-case of $\OMC(G)$, again by~\Cref{lem:odd-cycle}, 
	with probability $9/10$, $H$ has $\nicefrac{n}{10 \cdot k}$ vertex-disjoint odd cycles. As any cut of this graph has to leave out at least one edge from an odd cycle, we obtain that the value of maximum cut in this case is at most
	$n+ \nicefrac{9n}{10k}$. By the choice of $k$, we obtain that the value of MAX-CUT in the Yes-case is larger  than the No-case by a factor of $(1+\eps)$. Hence, Alice and Bob can use the output of $\alg$ to distinguish between the two cases. 
	
	To conclude, we obtain a $2p - 1$-round $(2p - 1)\cdot S$-communication that computes the correct answer to $\OMC_{n,k}$ with probability at least $2/3-1/10 > 1/2+1/m^2$. By the lower bound for $\OMC_{n,k}$ in~\Cref{thm:mainformal}, 
	we obtain that
	\[
		S = 2^{-O(p)} \cdot (n/k)^{1-10^4\cdot k^{-1/(2p-1)}}\geq \eps 2^{-O(p)}\cdot n^{1-g(\eps,p)}.
	\]
	as desired. 
\end{proof}

\subsection{Maximum Matching Size}\label{sec:matching}

In the maximum matching size estimation problem, we are given an undirected graph $G=(V,E)$ and our goal is to estimate the \emph{size} of a maximum matching in $G$, i.e., the largest collection of vertex-disjoint edges of $G$. 

Maximum matching problem is one of the most studied problems in the graph streaming model. There is an active line of work on estimating  matching size, in particular in planar and low arboricity 
	graphs~\cite{AssadiKL17,EsfandiariHLMO15,McGregorV18,CormodeJMM17,KapralovKS14}. 
	For single-pass algorithms, a reduction from $\BHH$ can prove an $\Omega(\sqrt{n})$ space for better-than-$(3/2)$ approximation in planar graphs~\cite{EsfandiariHLMO15} which was improved to $(1+\eps)$-approximation in 
	$n^{1-O(\eps)}$ space for low-arboricity graphs in~\cite{BuryS15}. 
	
We prove the following theorem for matching size estimation, formalizing part $(ii)$ of~\Cref{res:implications}. 

\begin{theorem}\label{thm:max-matching}
Let $p \geq 1$ be an integer and $\eps \in (0,1)$ be a parameter. Any $p$-pass streaming algorithm that outputs a $(1+\eps)$-approximation to the size of a maximum matching in \emph{planar} graphs with probability at least $2/3$ 
requires $\eps \cdot 2^{-O(p)}\cdot n^{1-g(\eps,p)}$ space where $g$ is the function in~\Cref{eq:g}. 
\end{theorem}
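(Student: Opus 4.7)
The plan is a reduction from $\OMC_{n,k}$ to matching size estimation on planar graphs, structurally mirroring the MAX-CUT reduction in~\Cref{thm:max-cut}. I would choose $k$ to be an even integer of order $1/(20\eps)$ with $n/k$ even and sufficiently large. The first conceptual issue to address is that a direct reduction cannot work: since any $\OMC$ instance is $2$-regular with only even-length cycles, the maximum matching has size exactly $n/2$ in both the \Yes and \No cases. I would overcome this via the edge-stretching gadget of~\Cref{lem:odd-cycle}, where Alice and Bob use public randomness to select $n/k$ random edges of $G$ and each stretches her/his own matching edges accordingly; the resulting graph $H$ on $n+n/k$ vertices is a disjoint union of cycles, hence planar in both cases.

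Next I would quantify the matching-size gap between the two cases. In the \Yes case, $H$ is a Hamiltonian cycle on $n+n/k$ (an even number of) vertices, giving maximum matching size exactly $V_Y := (n+n/k)/2$. In the \No case,~\Cref{lem:odd-cycle} guarantees that with probability at least $9/10$ the graph $H$ contains at least $n/(10k)$ vertex-disjoint odd cycles, each of which forces at least one unmatched vertex; this yields $V_N \leq (n+n/k)/2 - n/(20k)$. With $k = \Theta(1/\eps)$ one verifies $V_Y/V_N \geq 1 + \Omega(\eps)$, so any $(1+\eps)$-approximation of the matching size suffices to distinguish the two cases.

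To extract the lower bound, Alice and Bob simulate the assumed $p$-pass, $S$-space algorithm on the stream formed by Alice's stretched edges followed by Bob's stretched edges, exchanging the memory state between passes; this gives a $(2p-1)$-round protocol for $\OMC_{n,k}$ with total communication at most $(2p-1) \cdot S$. The combined success probability is at least $2/3 - 1/10 > 1/2 + (k/n)^2$ for $n/k$ sufficiently large, so~\Cref{thm:mainformal} applies and yields
\[
    (2p-1) \cdot S \;\geq\; 2^{-O(p)} \cdot (n/k)^{1 - 10^4 \cdot k^{-1/(2p-1)}} .
\]
Substituting $k = \Theta(1/\eps)$ and simplifying gives the claimed $\eps \cdot 2^{-O(p)} \cdot n^{1 - g(\eps,p)}$ space lower bound.

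The principal obstacle is conceptual rather than technical: $\OMC$'s \Yes and \No instances look identical from a matching-count perspective, so the entire reduction hinges on introducing odd cycles to break this symmetry. The key observation that~\Cref{lem:odd-cycle}'s edge-stretching both creates the needed asymmetry and automatically preserves planarity (since disjoint unions of cycles are trivially planar) is what makes the reduction go through; the remaining steps---selecting $k$ so the multiplicative gap exceeds $(1+\eps)$, and plugging the $\OMC$ lower bound into the space bound---are routine arithmetic.
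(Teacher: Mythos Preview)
Your proposal is correct and follows essentially the same approach as the paper: both use the edge-stretching of~\Cref{lem:odd-cycle} (exactly as in~\Cref{thm:max-cut}) to create odd cycles, then observe that the \Yes-case gives a perfect matching of size $(n+n/k)/2$ while each odd cycle in the \No-case wastes one vertex, yielding matching size at most $(n+9n/(10k))/2$. The remaining simulation argument and invocation of~\Cref{thm:mainformal} are identical.
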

\begin{proof}
	The proof is almost identical to the proof of~\Cref{thm:max-cut}. Recall the reduction from $\OMC_{n,k}$ in that theorem. When $\OMC(G)$ is a Yes-case, $H$ has a perfect matching of size $(n+\nicefrac{n}{k})/2$ as each
	Hamiltonian cycle in a bipartite
	graph can be partitioned into two perfect matchings. On the other hand, when $\OMC(G)$ is a No-case, any odd-cycle of $G$ on $k+1$ vertices can only support a matching of size $k/2$. As such, the maximum matching size in this case is 
	at most $(n+\nicefrac{9n}{10\,k})/2$. The lower bound now holds verbatim as in~\Cref{thm:max-cut}. 
\end{proof}

\begin{remark}
	We remark that the previous best lower bound of~\cite{BuryS15}  for single-pass streaming algorithms on sparse graphs only hold for graphs of arboricity $\approx 1/\eps$ and \emph{not} planar graphs. As such, our lower bound in~\Cref{thm:max-matching}
	improves upon previous work \emph{even} for single-pass algorithms. 
\end{remark}

\subsection{Maximum Acyclic Subgraph}

In the maximum acyclic subgraph problem, we are given a \emph{directed} graph $G=(V,E)$ and our goal is to estimate the \emph{size} of the largest acyclic subgraph in $G$, where the size of the subgraph is measured by its number of 
edges. 

One can approximate this problem to within a factor of $2$ in $O(\log{n})$ space (by counting the number of edges and dividing by two). Under the Unique Games Conjecture~\cite{GuruswamiMR08}, this is the best approximation ratio possible for this problem 
for \emph{poly-time} algorithms but this does \emph{not} imply a space lower bound in the streaming setting. This problem was studied in the streaming setting by~\cite{GuruswamiVV17,GuruswamiT19} and it was shown in~\cite{GuruswamiVV17} 
that a reduction from $\BHH$ implies an $\Omega(\sqrt{n})$-space lower bound for better-than-$(7/8)$ approximation of this problem in single-pass streams. We note that some related problems in directed graphs 
have also been studied recently in~\cite{ChakrabartiG0V20}. 

We prove the following theorem for this problem, formalizing part $(iii)$ of~\Cref{res:implications}.  

\begin{theorem}\label{thm:mas}
Let $p \geq 1$ be an integer and $\eps \in (0,1)$ be a parameter. Any $p$-pass streaming algorithm that outputs a $(1+\eps)$-approximation to the number of edges in a largest acyclic subgraph of a directed graph with probability at least $2/3$ 
requires $\eps 2^{-O(p)}\cdot n^{1-g(\eps,p)}$ space where $g$ is the function in~\Cref{eq:g}. 
\end{theorem}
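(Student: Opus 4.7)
The plan is to give a communication reduction from $\OMC_{n,k}$ to $(1+\eps)$-approximation of the maximum acyclic subgraph, along the same general lines as the reductions in \Cref{thm:max-cut} and \Cref{thm:max-matching} but in fact somewhat simpler: no odd-cycle stretching (and hence no appeal to \Cref{lem:odd-cycle}) will be required. Fix $k$ to be the largest even integer with $k \leq 1/(3\eps)$, and assume (by the same parity/divisibility bookkeeping as in the previous two proofs) that $n$ is an odd multiple of $k$ and is sufficiently large. Given an instance $G = (L, R, M_A \cup M_B)$ of $\OMC_{n,k}$, Alice and Bob form a directed graph $H$ on the same vertex set by orienting every edge of $M_A$ from $L$ to $R$ and every edge of $M_B$ from $R$ to $L$. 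This orientation requires no communication, and each player locally knows the directions of her or his own edges.

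The structural gap is then immediate. Since $G$ is bipartite and $2$-regular, every cycle of $G$ alternates between $M_A$- and $M_B$-edges and between the two sides of the bipartition; under the above orientation, every such cycle becomes a \emph{directed} cycle of $H$ of the same length. Consequently, in the Yes case $H$ is a single directed Hamiltonian cycle, whose maximum acyclic subgraph has exactly $n-1$ edges, while in the No case $H$ is a vertex-disjoint union of $n/k$ directed $k$-cycles, whose maximum acyclic subgraph has exactly $n - n/k$ edges. For $k \leq 1/(3\eps)$, the ratio $(n-1)/(n - n/k)$ exceeds $1 + 2\eps$ for all sufficiently large $n$, so any $(1+\eps)$-approximation of the maximum acyclic subgraph size distinguishes the two cases with probability at least $2/3$.

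As in the MAX-CUT and matching reductions, a $p$-pass streaming algorithm $\alg$ using $S$ bits of space on $H$ can be simulated by a $(2p-1)$-round communication protocol of cost at most $(2p-1)\cdot S$: Alice streams her directed edges through $\alg$, sends the memory contents to Bob, who streams his directed edges and passes the memory back, and so on for the remaining passes. Plugging this simulation into \Cref{thm:mainformal} with $k = \Theta(1/\eps)$ and $r = 2p-1$ yields the claimed lower bound $S \geq \eps \cdot 2^{-O(p)} \cdot n^{1-g(\eps,p)}$. I expect no genuine technical obstacle beyond the routine bookkeeping that $k$ is even and $n$ is an odd multiple of $k$; both are handled exactly as in the proof of \Cref{thm:max-cut}, and the conceptual content of the reduction is entirely captured by the orientation step above.
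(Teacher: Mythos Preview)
Your proposal is correct and follows essentially the same approach as the paper: orient Alice's matching $L\to R$ and Bob's matching $R\to L$, observe that the Yes and No cases become a single directed Hamiltonian cycle versus $n/k$ disjoint directed $k$-cycles with maximum acyclic subgraphs of size $n-1$ and $n-n/k$ respectively, and then invoke \Cref{thm:mainformal} via the standard $(2p-1)$-round simulation. The only differences are cosmetic (e.g., the paper takes $k \leq 1/(4\eps)$ rather than $1/(3\eps)$).
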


\begin{proof}
	Suppose $\alg$ is a $p$-pass $S$-space streaming algorithm for $(1+\eps)$-approximation of the maximum acyclic subgraph problem. 
	Consider any instance $G$ of $\OMC_{n,k}$ for even integers $n,k$ where $k \leq \frac{1}{4\,\eps}$ such that $n/k$ is also even
	and is sufficiently large. 
	
	Unlike the proofs of~\Cref{thm:max-cut} and~\Cref{thm:max-matching}, we do not need to introduce odd cycles in the graph $G$ (using~\Cref{lem:odd-cycle}). Instead we will \emph{direct} the graph $G$ as follows. 
	Let $L$ and $R$ be the bipartition of the bipartite graph $G$ which are \emph{known} to both players (by the proof of~\Cref{thm:mainformal}). The players obtain the directed graph $H$ by Alice directing all her edges from $L$ to $R$
	and Bob directing his from $R$ to $L$. The players then 
	run $\alg$ on the stream obtained by appending edges of Bob to Alice's edges. This can be done in $2p - 1$ rounds and $(2p -1)\cdot S$ communication.  
	
	When $\OMC(G)$ is a Yes-case, we can pick a subgraph of $H$ consisting of all but one edge and still have no cycle, hence the answer to the problem is $n-1$ in this case. 
	On the other hand, in the No-case of $\OMC(G)$, we should leave out one edge from each cycle of the graph in the subgraph and thus the answer is $n-n/k$. 
	By the choice of $k$, we obtain that the answer in the Yes-case is larger  than the No-case by a factor of $(1+\eps)$. Hence, Alice and Bob can use the output of $\alg$ to distinguish between the two cases. 
	The rest follows as before. 
\end{proof}

\subsection{Minimum Spanning Tree}\label{sec:mst}

In the minimum spanning tree problem, we are given an undirected graph $G=(V,E)$ and with weight function $w: E \rightarrow \set{1,2,\ldots,W}$ and our goal is to estimate the \emph{weight} of the minimum spanning tree (MST) in $G$. 

There is a simple $O(n)$ space algorithm for computing an \emph{exact} MST in a single pass~\cite{FeigenbaumKMSZ05}. More recently,~\cite{HuangP16} gave an $\Ot(W \cdot n^{1-\nicefrac{1.01\eps}{(W-1)}})$ space algorithm for this problem, 
which achieves a polynomial saving in the space for any constant $\eps,W > 0$. A reduction from $\BHH$ shows that single-pass $(1+\eps)$-approximation algorithms for this problem
requires $\Omega(n^{1-\nicefrac{4\eps}{(W-1)}})$ space~\cite{HuangP16}.

We prove the following theorem for this problem, formalizing part $(iv)$ of~\Cref{res:implications}.  

\begin{theorem}\label{thm:mst}
Let $p \geq 1$ be an integer and $\eps \in (0,1)$ be a parameter. Any $p$-pass streaming algorithm that outputs a $(1+\eps)$-approximation to the weight of a minimum spanning tree in a graph with maximum weight $W$ with probability at least $2/3$ 
requires $\eps W^{-1} 2^{-O(p)}\cdot n^{1-g(\eps/W,p)}$ space where $g$ is the function in~\Cref{eq:g} (notice that the first argument of $g$ is $\eps/W$ and not merely $\eps$). 
\end{theorem}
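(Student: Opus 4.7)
The plan is to follow the same template as the proofs of~\Cref{thm:max-cut,thm:max-matching,thm:mas}: take a $p$-pass, $S$-space $(1{+}\eps)$-approximation streaming algorithm $\alg$ for MST and use it to build a low-communication protocol for $\OMC_{n,k}$ for a suitable choice of $k$, then invoke~\Cref{thm:mainformal}. The key design choice is to pick $k$ and a weighting scheme so that the MST weights in the \Yes\ and \No\ cases of $\OMC$ differ by a factor of at least $(1+\eps)$.

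First I would fix $k$ to be the largest even integer with $k \leq \tfrac{W-1}{2\eps}$ (adjusting by constants so that $n/k$ is a sufficiently large even integer) and take an instance $G$ of $\OMC_{n,k}$. Alice and Bob then construct a weighted graph $H$ on the known common vertex set as follows: every edge of $G$ is given weight $1$ (Alice keeps her half, Bob keeps his, as in the previous reductions), and additionally \emph{Alice prepends} to the stream a fixed spanning tree $T$ of the vertex set (for instance, a star rooted at a distinguished vertex) with each edge of $T$ assigned weight $W$. Since $T$ is determined by the vertex set and $W$, it is known to both players and requires no communication; it just becomes a deterministic prefix of the stream. The players then simulate $\alg$ on the stream $T \Vert M_A \Vert M_B$ over $p$ passes with $(2p{-}1)\cdot S$ total communication (using the memory-passing trick from the introduction), and interpret the approximate MST weight as their guess for $\OMC(G)$.

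Next I would compute the MST weights in the two cases. In the \Yes\ case, $G$ is a Hamiltonian cycle, so $H$ contains a connected subgraph of weight-$1$ edges spanning all $n$ vertices; an MST uses $n-1$ of these light edges and no heavy edge, giving weight $n-1$. In the \No\ case, $G$ splits into $n/k$ disjoint cycles; an MST must include $k-1$ light edges per cycle (for a total of $n-n/k$ light edges), plus $n/k - 1$ heavy edges from $T$ to bridge the components, for a total weight of $(n-n/k) + (n/k - 1)W = (n-1) + (n/k - 1)(W-1)$. The ratio is therefore
\[
1 + \frac{(n/k - 1)(W-1)}{n-1},
\]
which by our choice of $k$ exceeds $1 + \eps$ once $n/k$ is sufficiently large. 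A $(1+\eps)$-approximate algorithm can therefore separate the two cases with probability $\geq 2/3$.

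Finally, I would plug the protocol into~\Cref{thm:mainformal}: the resulting $(2p{-}1)$-round protocol for $\OMC_{n,k}$ with success probability $\geq 2/3 > \tfrac12 + (k/n)^2$ forces
\[
(2p-1)\, S \;\geq\; 2^{-\mathcal{O}(p)} \cdot (n/k)^{1 - 10^4 \cdot k^{-1/(2p-1)}},
\]
and substituting $k = \Theta(W/\eps)$ yields the claimed bound $S \geq \eps W^{-1} 2^{-\mathcal{O}(p)} \cdot n^{1 - g(\eps/W, p)}$. The main (minor) obstacle is purely bookkeeping: one must check that $k$ can be chosen to be an even integer with $n/k$ an even integer and with $n/k$ large enough that the No-case weight genuinely exceeds the Yes-case by a $(1+\eps)$ factor, which only affects the hidden constants absorbed into $\beta$ and the $2^{-\mathcal{O}(p)}$ term. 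No new information-theoretic argument is required; everything reduces cleanly to~\Cref{thm:mainformal}.
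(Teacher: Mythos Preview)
Your proposal is correct and follows essentially the same reduction as the paper. The only cosmetic difference is in the gadget: the paper adds a \emph{new} vertex $s$ connected to all vertices of $G$ by weight-$W$ edges except for one weight-$1$ edge to a distinguished $\hat{s}$ (giving Yes/No MST weights $n$ versus $n+(n/k-1)(W-1)$), whereas you overlay a weight-$W$ star on the \emph{existing} vertex set (giving $n-1$ versus $(n-1)+(n/k-1)(W-1)$); both produce the same $(1+\eps)$ gap for $k=\Theta(W/\eps)$ and the same invocation of~\Cref{thm:mainformal}.
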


\begin{proof}
	Suppose $\alg$ is a $p$-pass $S$-space streaming algorithm for $(1+\eps)$-approximation of the minimum weight spanning tree problem. 
	Consider any instance $G$ of $\OMC_{n,k}$ for even integers $n,k$ where $k \leq \frac{W}{4\,\eps}$ such that $n/k$ is also even
	and is sufficiently large. 
	
	Given an instance $G$ of $\OMC_{n,k}$, Alice and Bob create a graph $H$  as follows: $H$ contains $G$ as a subgraph with weight one on all edges. Moreover, $H$ has one additional vertex $s$ which is connected to all vertices of $G$ with weight $W$ 
	except for one vertex $\hat{s} \in G$ where $s$ is connected to it with weight $1$ instead (we give all latter edges to Bob). 
	The players then run $\alg$ on the stream obtained by appending edges of Bob to Alice's edges. This can be done in $(2p - 1)$ rounds and $(2p -1)\cdot S$ communication.  
	
	When $\OMC(G)$ is a Yes-case, a minimum spanning tree of $G$ has weight $n$ by picking the $n-1$ edges of the Hamiltonian cycle plus the edge $(s,\hat{s})$, all of weight one. 
	On the other hand, in the No-case of $\OMC(G)$, we need to pick at least $n/k-1$ edges of weight $W$ to connect $s$ to any cycle of $G$ that does not contain $\hat{s}$, as those cycles are not connected in $G$; the remainder of 
	the edges can then be of weight $1$ again (recall that $H$ has $n+1$ vertices and so its MST needs $n$ edges). As such, in this case, the weight of MST is at least $n+(n/k-1) \cdot (W-1)$. 
	
	By the choice of $k$, we obtain that the answer in the Yes-case is smaller than  the No-case by a factor of $(1+\eps)$. Hence, Alice and Bob can use the output of $\alg$ to solve $\OMC$.
	The rest again follows as before (just notice that $k \approx W/\eps$ and not $1/\eps$ unlike the previous examples). 
\end{proof}

\subsection{Property Testing of Connectivity, Bipartiteness, and Cycle-freeness}\label{sec:pt}

Given a parameter $\eps \in (0,1)$, an $\eps$-property tester algorithm for a property $P$ needs to decide whether $G$ has the property $P$ or is \emph{$\eps$-far} from having $P$, defined as follows: 

\begin{itemize}
	\item \textbf{Connectivity:} A graph is $\eps$-far from being connected if we need to insert $\eps \cdot n$ more edges to make it connected;
	\item \textbf{Bipartiteness:} A graph is $\eps$-far from being bipartite if we need to delete $\eps \cdot n$ of its edges to make it bipartite;
	\item \textbf{Cycle-freeness:} A graph is $\eps$-far from being cycle-free if we need to delete $\eps \cdot n$ of its edges to remove all its cycle.  
\end{itemize}

Traditionally, these problems have been studied extensively in the query complexity model and for sublinear-time algorithms. However, starting from the pioneering work of~\cite{HuangP16}, 
these problems have been receiving increasing attention in the streaming literature as well~\cite{HuangP16,MonemizadehMPS17,PengS18,CzumajFPS19}. 
In particular,~\cite{HuangP16} gave single-pass streaming algorithms for these problems with $\Ot(n^{1-\mathrm{poly}(\eps)})$ space (for bipartiteness testing, the input graph is assumed to be planar). 
Moreover, a reduction from \BHH implies that all these problems require $n^{1-O(\eps)}$ space in a single-pass~\cite{HuangP16}. 

We prove the following theorem for streaming property testing, formalizing part $(iv)$ of~\Cref{res:implications}.  

\begin{theorem}\label{thm:pt}
Let $p \geq 1$ be an integer and $\eps \in (0,1)$ be a parameter. Any $p$-pass streaming algorithm for $\eps$-property testing of connectivity, bipartiteness, or cycle-freeness in \emph{planar} graphs, with probability at least $2/3$ 
requires $\eps \cdot 2^{-O(p)}\cdot n^{1-g(\eps,p)}$ space where $g$ is the function in~\Cref{eq:g}.
\end{theorem}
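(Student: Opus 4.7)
My plan is to give three simple reductions from $\OMC_{n,k}$, one for each of the three properties, following the same template as the reductions in Sections~6.1--6.5. In each case, Alice and Bob take an input $G$ to $\OMC_{n,k}$ with $n/k$ even and sufficiently large, use (at most) public randomness and local operations to build an auxiliary graph $H$, and then run the purported $p$-pass tester $\alg$ on the stream $M_A \Vert M_B$ of edges of $H$ by exchanging the memory contents back and forth. This simulates a $(2p-1)$-round protocol with communication $(2p-1)\cdot S$; applying~\Cref{thm:mainformal} with the chosen $k \leq O(1/\eps)$ will then yield the stated lower bound via the same arithmetic as in the proof of~\Cref{thm:max-cut}.

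For \emph{connectivity}, take $H = G$ and choose $k \leq 1/(2\eps)$. In the \Yes{} case, $H$ is a Hamiltonian cycle on $n$ vertices and is thus connected. In the \No{} case, $H$ is a disjoint union of $n/k$ cycles, and making it connected requires adding at least $n/k - 1 \geq \eps n$ edges; so any $\eps$-tester for connectivity distinguishes the two cases. For \emph{bipartiteness}, I will invoke~\Cref{lem:odd-cycle} to stretch a random set of $n/k$ edges (this uses public randomness to agree on which edges to stretch, with Alice/Bob performing the stretch on their own edges). Setting $k \leq 1/(20\eps)$: in the \Yes{} case the resulting graph $H$ has no odd cycle and is thus bipartite; in the \No{} case, with probability at least $9/10$, $H$ contains $n/(10k) \geq 2\eps n$ vertex-disjoint odd cycles, so removing any one edge per cycle is necessary and $H$ is $\eps$-far from bipartite. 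Note that $H$ is planar in both cases (it is a disjoint union of cycles), meeting the planarity requirement. For \emph{cycle-freeness}, Alice and Bob each delete one canonical edge from their input (e.g., the edge incident to the lexicographically smallest vertex of each side), obtaining a graph $H$ with $n-2$ edges. In the \Yes{} case, $H$ is the Hamiltonian cycle with two edges removed, which is a forest, hence cycle-free. In the \No{} case, removing two edges from a disjoint union of $n/k$ cycles destroys at most two cycles, leaving at least $n/k - 2$ vertex-disjoint cycles intact, so $H$ is $\eps$-far from cycle-free once $k \leq 1/(2\eps)$.

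In each of the three reductions the success probability of solving $\OMC_{n,k}$ is at least $2/3 - 1/10 > 1/2 + (k/n)^2$ for sufficiently large $n/k$, so~\Cref{thm:mainformal} applies to the induced $(2p-1)$-round protocol and gives
\[
(2p-1)\cdot S \;\geq\; 2^{-O(p)}\cdot (n/k)^{1 - 10^4\cdot k^{-1/(2p-1)}} \;\geq\; \eps\cdot 2^{-O(p)}\cdot n^{1 - g(\eps,p)},
\]
using $k = \Theta(1/\eps)$ and the definition of $g(\eps,p)$ in~\eqref{eq:g}, exactly as in the proofs of~\Cref{thm:max-cut}--\Cref{thm:mst}. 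Dividing by $(2p-1)$ on the left absorbs a factor of $2^{-O(p)}$ on the right, yielding the claimed bound on $S$.

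The routine points I expect to verify carefully are: (i) for bipartiteness, confirming planarity of the stretched graph (immediate, since subdividing edges preserves planarity of a union of cycles) and converting the constant-probability correctness of~\Cref{lem:odd-cycle} into a protocol with overall success probability above $1/2 + (k/n)^2$; and (ii) for cycle-freeness, checking that deleting two canonical edges in the \No{} case cannot destroy more than two of the $n/k$ disjoint cycles. There is no real technical obstacle beyond tracking constants, since the hardness is completely inherited from~\Cref{thm:mainformal}.
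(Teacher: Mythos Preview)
Your proposal is correct and follows essentially the same approach as the paper: the connectivity and bipartiteness reductions are identical to the paper's (the paper also just cites the odd-cycle construction from the MAX-CUT proof for bipartiteness), and your cycle-freeness reduction differs only in that you delete two edges (one per player) instead of one (Alice only), which is an immaterial variation. The arithmetic passing from \Cref{thm:mainformal} to the stated space bound is handled exactly as in the earlier theorems.
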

\begin{proof}
	The proofs of these parts are all simple corollaries of the above reductions from $\OMC$. 
	
	For connectivity, we already discussed the reduction in~\Cref{sec:results}. Basically, for $k \leq \frac{1}{2\eps}$, the Yes-cases of $\OMC_{n,k}$ are Hamiltonian cycles and hence connected, while the No-cases are disjoint-union of $> \eps \cdot n$ 
	cycles and hence need at least $\eps \cdot n$ edge-insertions to become connected. Thus, property testing algorithms for connectivity can also solve $\OMC_{n,k}$. 
	
	For bipartiteness, we rely on our proof of~\Cref{thm:max-cut}. The lower bound for MAX-CUT involved either bipartite graphs, or graphs where any cut misses $\eps \cdot n$ edges of the graph -- this makes the latter graphs $\eps$-far from being 
	bipartite and the proof follows similar to~\Cref{thm:max-cut}. 
	
	For cycle-freeness, we perform another reduction from $\OMC_{n,k}$ for $k \leq \frac{1}{2\eps}$. Given an instance $G$ of $\OMC_{n,k}$, one of the players, say, Alice for concreteness, remove any one arbitrary edge from her input
	to obtain the graph $H$. Now, in the Yes-case of $\OMC$, $H$ is a cycle-free graph as we removed one edge from a Hamiltonian cycle, while in the No-case, $H$ still consists of $n/k-1 > \eps \cdot n$ vertex-disjoint cycle is thus $\eps$-far
	from being cycle-free. The lower bound follows. 
\end{proof}

\subsection{Matrix Rank and Schatten Norms}\label{sec:rank}

For any $q \geq 0$, the Schatten $q$-norm of an $n$-by-$n$ matrix $A$ is the $\ell_q$-norm of the vector of the singular values, i.e., 
$\paren{\sum_{i=1}^{n} \sigma_i(A)^q}^{\nicefrac{1}{q}}$ where $\sigma_1(A),\ldots,\sigma_n(A)$ are the singular values of $A$ (we used the notation $q$ instead of $p$ in expressing the norms
to avoid ambiguity with number of passes of the streaming algorithm). In particular, the case of $q=0$ corresponds to the rank of matrix $A$, $q=1$ is the nuclear (or trace) norm, and $q=2$ corresponds to the Frobenius norm. 

Streaming algorithms  for computing Schatten $q$-norms of matrices (particularly \emph{sparse} matrices) whose entries are updated in a stream have gain considerable attention lately~\cite{LiW16,ClarksonW09,BuryS15,AssadiKL17,BravermanCKLWY18,BravermmanKKS19}. As this topic is somewhat beyond the scope of our paper on graph streams, we refer the interested 
reader to~\cite{LiW16,BravermanCKLWY18,BravermmanKKS19} for an overview of single-pass and multi-pass streaming algorithms for this problem. But we note that reductions from $\BHH$ have been 
used to establish $n^{1-O(\eps)}$ space lower bounds for single-pass algorithms that $(1+\eps)$-approximate rank~\cite{BuryS15}, or any value of $q \in [0,+\infty)$ which is \emph{not} an even integer~\cite{LiW16,BravermanCKLWY18}. We remark
that~\cite{LiW16,BravermanCKLWY18} also have $\Omega(n^{1-4/q})$ space lower bounds for \emph{even} values of $q$ but this one requires a reduction from set disjointness and not $\BHH$ 
and is weaker than the bounds obtained when $q$ is not an even integer (similar to all the other $\BHH$ lower bounds, extending the lower bounds for values of $q$ which are not even integers
to multi-pass algorithms was left open in~\cite{LiW16,BravermanCKLWY18}). 


\begin{theorem}\label{thm:schatten}
Let $q \in [0,+\infty) \setminus 2\mathbb{Z}$, $p \geq 1$ be an integer and $\eps \in (0,1)$ be a parameter. Any $p$-pass streaming algorithm for $(1+\eps)$-approximation of Schatten $q$-norms in \emph{sparse} matrices, 
with probability at least $2/3$ requires $\eps \cdot 2^{-O(p)}\cdot n^{1-g(\eps,p)}$ space where $g$ is the function in~\Cref{eq:g}. 
\end{theorem}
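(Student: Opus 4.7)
The plan is to follow the reductions of~\cite{BuryS15,LiW16,BravermanCKLWY18}, which previously derived $n^{1-O(\eps)}$ single-pass lower bounds for Schatten $q$-norms from $\BHH$, but to start them from $\OMC_{n,k}$ and invoke our multi-round bound of~\Cref{thm:mainformal} in place of the single-round $\BHH$ lower bound. Given an instance $G$ of $\OMC_{n,k}$ on bipartite parts $L,R$ of size $n/2$ each, let $M(G)$ be its $(n/2) \times (n/2)$ bipartite adjacency matrix. Since $G = M_A \sqcup M_B$ is a union of two disjoint perfect matchings, we may write $M(G) = P_A + P_B$ as a sum of two permutation matrices, one held by each player. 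Each player can stream the $n/2$ nonzero entries of her/his summand directly, so any $p$-pass $S$-space streaming algorithm that $(1+\eps)$-approximates $\|M(G)\|_{S_q}$ induces a $(2p-1)$-round, $(2p-1)S$-bit communication protocol for $\OMC_{n,k}$.

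The key computation is the singular-value gap between the two cases. Every even bipartite cycle of length $2t$ contributes a block with singular values $\{2|\cos(\pi j/t)| : 0 \leq j < t\}$, so in the \Yes case $M(G)$ has singular values $\{2|\cos(2\pi j/n)| : 0 \leq j < n/2\}$ and in the \No case it has $n/k$ disjoint copies of $\{2|\cos(2\pi j/k)| : 0 \leq j < k/2\}$. For $q \notin 2\mathbb{Z}$, the map $\theta \mapsto (2|\cos\theta|)^q$ fails to be a trigonometric polynomial, and its Riemann-sum approximation to $\int_0^{\pi}(2|\cos\theta|)^q\, d\theta$ incurs an Euler--Maclaurin correction of order $1/N^{q+1}$ in the number of sample points $N$. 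The leading-order terms of $\|M(G)\|_{S_q}^q$ in both cases coincide (both equal $(n/2)$ times the integral), while the corrections differ by $\Theta(n/k^{q+1})$, yielding a multiplicative gap of order $k^{-(q+1)}$. Setting $k = \Theta(\eps^{-1/(q+1)})$ produces a $(1+\Omega(\eps))$-gap that a $(1+\eps)$-approximation algorithm must detect. The boundary case $q=0$ (rank) is handled separately: an even bipartite cycle of length $2t$ with $t$ even has exactly one zero singular value, so choosing $k$ and $n$ with appropriate parities makes the ranks in the two cases differ by $n/k - O(1)$, which is an $\eps$-fraction of $n$ when $k = O(1/\eps)$, recovering the gap used in~\cite{BuryS15}.

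Applying~\Cref{thm:mainformal} with $r = 2p-1$ rounds and the above choice $k = k(q,\eps)$ then gives the claimed $\eps \cdot 2^{-O(p)} \cdot n^{1-g(\eps,p)}$ space lower bound, after absorbing the $q$-dependent factor into the constant $\beta$ of~\eqref{eq:g}. The main obstacle is, in principle, the quantitative gap analysis: one must confirm that the Euler--Maclaurin-type correction has order exactly $\Theta(1/N^{q+1})$ uniformly over the allowed range of non-even-integer $q$, and that the associated constant is bounded away from zero so that the $(1+\Omega(\eps))$-gap is not degenerate as $q$ approaches an even integer. However, these spectral calculations are carried out in detail in~\cite{LiW16,BravermanCKLWY18} and are entirely orthogonal to our multi-round contribution, so they can be imported verbatim; the only new ingredient is the substitution of~\Cref{thm:mainformal} for the $\BHH$ lower bound.
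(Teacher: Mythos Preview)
Your approach differs from the paper's in two substantive ways. For $q > 0$, the paper does not reduce from $\OMC$ directly nor use the bipartite adjacency matrix $M(G)=P_A+P_B$; instead it switches to the $k$-vs-$2k$ cycle problem (via \Cref{rem:arb_mat}) and works with the \emph{Laplacian} $L_G$, precisely so that Lemma~5.2 of~\cite{BravermanCKLWY18}---which is stated for the Laplacian of $(2t{+}1)$-cycles versus $(4t{+}2)$-cycles---can be cited as a black box for the spectral gap. For $q=0$ (rank) the paper goes through the Tutte matrix and the maximum-matching lower bound of \Cref{thm:max-matching}, not through the zero singular value of $M(G)$.

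Your route may well be workable, but there is a real gap in the claim that the spectral calculations ``can be imported verbatim'' from~\cite{LiW16,BravermanCKLWY18}. Those papers analyze the Laplacian (or full adjacency matrix) of \emph{odd}-length cycles in the $k$-vs-$2k$ setting, not the bipartite adjacency matrix of \emph{even}-length cycles in the Hamiltonian-vs-$k$-cycle setting; the singular-value structures are different, so the Euler--Maclaurin $\Theta(N^{-(q+1)})$ correction you sketch and the nondegeneracy of its leading coefficient would have to be established from scratch. The paper's footnote after the lemma statement explicitly acknowledges that the direct $\OMC$/Laplacian route ``seems very plausible'' but was not pursued for exactly this reason. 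Relatedly, your choice $k=\Theta(\eps^{-1/(q+1)})$ produces a bound of the form $n^{1-\Theta(\eps^{1/((q+1)(2p-1))})}$, in which the $q$-dependence lands in the \emph{exponent} of $\eps$ and cannot be absorbed into a multiplicative constant $\beta$ in $g(\eps,p)=\beta\,\eps^{1/(2p-1)}$ as you suggest.
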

\begin{proof}
	The proof for matrix rank, i.e., the $q=0$ case, follows from~\Cref{thm:max-matching}, and the standard equivalence between estimating matching size and computing
	the rank of the Tutte matrix~\cite{Tutte47}  with randomly chosen entries established in~\cite{Lovasz79} (see~\cite{BuryS15} for this reduction in the streaming model).
	
	For other values of $q$, we rely on the following lemma established by~\cite[Lemma~5.2]{BravermanCKLWY18}. 
	
	\begin{lemma}[Lemma~5.2 in~\cite{BravermanCKLWY18}]\label{lem:above}
	Suppose that $t \geq 2$ is an integer and $q \in (0,+\infty) \setminus 2\mathbb{Z}$. Let $G$ be an undirected $2$-regular graph 
	consisting of either $(i)$ vertex-disjoint $(2t+1)$-cycles or $(ii)$ vertex-disjoint $(4t+2)$-cycles. Then, the Schatten $q$-norm of the Laplacian matrix $L_G$ associated with
	$G$ differs by a constant factor depending only on $t$ and $q$ between the two cases.
	\end{lemma}
	
	Note that two possible cases of the graph $G$ in~\Cref{lem:above} correspond to the cases of the $k$-vs-$2k$ communication (which is a  ``harder'' variant of $\OMC_{n,k}$) instead of directly mapping to $\OMC_{n,k}$. 
	As such, for this proof, we instead use our more general communication lower bound that holds for distinguishing any two profiles of $k$-cycles (even when $k$ is odd), namely, \Cref{rem:arb_mat}.\footnote{It seems very 
	plausible to show that the Laplacian of the graphs in the two cases of $\OMC_{n,k}$ also differ by a constant factor using a similar argument to~\cite{BravermanCKLWY18} and so one could still use $\OMC$ directly; 
	however, we did not pursue this direction in the paper.} As the players in the communication problem can easily form the Laplacian of their input graph and run the Schatten norm streaming algorithm, the lower bound follows as before. 
\end{proof}

\subsection{Sorting-by-Block-Interchange}\label{sec:string} 

Given a string $x = (x_1,\ldots,x_n)$ representing a permutation on $n$ elements (i.e., $x_1,\ldots,x_n$ are distinct integers in $[n]$), we define the {\em block-interchange} operation on $x$, denoted by $r(i,j,k,l)$ where $1 \leq i \leq j < k \leq l \leq n$, as transforming $x$ to the string 
\[
x' = (x_{[1,i-1]},x_{[k,l]},x_{[j+1,k-1]},x_{[i,j]},x_{[\ell+1,n]}),
\]
 where $x_{[a,b]} = x_a,x_{a+1},\ldots,x_b$ (basically, $x'$ is obtained from $x$ by replacing the position of two blocks of sub-strings of $x$ with each other). 
We define the value of the {\em sorting-by-block-interchange} function on input $x$ to be the minimum number of block-interchanges that are required to sort $x$ in an increasing order (i.e., get $(1,2,\ldots,n)$).

In the streaming model, the entries of $x$ are arriving one by one in the stream and our goal is to output the value of {sorting-by-block-interchange} for $x$. This problem is one of the several 
string processing problems studied by~\cite{VerbinY11} who proved an $n^{1-O(\eps)}$ space lower bound for it using a reduction from $\BHH$. 

We can now prove the following lower bound for this problem. 

\begin{theorem}\label{thm:SBI}
Let $p \geq 1$ be an integer and $\eps \in (0,1)$ be a parameter. Any $p$-pass streaming algorithm that outputs a $(1+\eps)$-approximation of the value the sorting-by-block-interchange for a length-$n$
string with probability at least $2/3$ requires $\eps \cdot 2^{-O(p)}\cdot n^{1-g(\eps,p)}$ space where $g$ is the function in~\Cref{eq:g}.
\end{theorem}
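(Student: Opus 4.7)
The plan is to reduce from $\OMC_{n, k}$ along the same lines as~\cite{VerbinY11}, but with $\OMC$ replacing the gap-cycle-counting problem that was derived there from $\BHH$. I fix $k = \Theta(1/\eps)$ to be even with $n/k$ a sufficiently large even integer, and let $\alg$ be a hypothetical $p$-pass, $S$-space streaming algorithm that $(1+\eps)$-approximates the sorting-by-block-interchange value with success probability $2/3$. Given an instance $G$ of $\OMC_{n,k}$ with Alice holding $M_A$ and Bob holding $M_B$, the players will use $\alg$ to solve $\OMC$.

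The first step will be to invoke Christie's classical formula, which states that the minimum number of block-interchanges needed to sort a permutation $\sigma$ on $N$ symbols equals $(N + 1 - c(\Gamma(\sigma)))/2$, where $c(\Gamma(\sigma))$ counts the alternating cycles in the (edge-colored) cycle graph of $\sigma$. I would then construct, in a streaming fashion---Alice outputs her portion from $M_A$, followed by Bob outputting his portion from $M_B$---a permutation $\sigma = \sigma(G)$ of length $N = \Theta(n)$ such that the alternating cycles of $\Gamma(\sigma)$ correspond, up to an additive $O(1)$, to the cycles of $G$. This is exactly the encoding already used in~\cite{VerbinY11}: each vertex of $G$ is replaced by a short block of tokens, and the edges of $M_A \cup M_B$ determine how these blocks link together, so that tracing the alternation of gray and black edges in $\Gamma(\sigma)$ follows a cycle of $G$.

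With this construction, in the Yes case (when $G$ is a single Hamiltonian cycle) we have $c(\Gamma(\sigma)) = O(1)$, so the block-interchange distance is $\approx N/2$; in the No case (when $G$ is a vertex-disjoint union of $n/k$ cycles of length $k$) we have $c(\Gamma(\sigma)) = \Theta(n/k)$, so the block-interchange distance is at most $\tfrac{N}{2}\bigl(1 - \Theta(1/k)\bigr)$. The multiplicative gap between the two values is $1 + \Theta(1/k) = 1 + \Theta(\eps)$, which---for $k$ a sufficiently small constant multiple of $1/\eps$---exceeds $(1+\eps)^2$, and therefore a $(1+\eps)$-approximation suffices to distinguish the two cases. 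Simulating $\alg$ pass-by-pass, with Alice and Bob each streaming their half of $\sigma(G)$ and exchanging the memory state of $\alg$ at each switch-over, yields a $(2p-1)$-round protocol for $\OMC_{n,k}$ of total communication $(2p-1) \cdot S$.

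Applying~\Cref{thm:mainformal} will then give $S \geq \eps \cdot 2^{-O(p)} \cdot n^{1 - g(\eps, p)}$, as required. The main (minor) obstacle will be verifying that the encoding from~\cite{VerbinY11} transfers unchanged: that argument only used the fact that the input is a $2$-regular graph whose cycle structure is either ``few long cycles'' or ``many short cycles,'' a feature shared in its most extreme form by $\OMC$. This is immediate because the original reduction only ever looks at the local edge structure to produce tokens and never exploits the one-round nature of $\BHH$; in particular, splitting the stream into an Alice-half followed by a Bob-half is the same here as there.
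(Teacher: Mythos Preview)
Your approach is correct and follows essentially the same idea as the paper: invoke the Verbin--Yu encoding (Christie's formula) that turns cycle-count gaps into block-interchange-distance gaps, then plug in the new multi-round cycle-counting lower bound. The one difference is that you reduce directly from $\OMC_{n,k}$ (one Hamiltonian cycle versus $n/k$ short cycles) and apply~\Cref{thm:mainformal}, whereas the paper instead cites Verbin--Yu's original reduction, which is stated for the $k$-vs-$2k$ cycle problem, and then invokes~\Cref{rem:arb_mat} to get a matching lower bound for that variant. Both routes work because the Verbin--Yu encoding only cares about the number of cycles in the $2$-regular input, not their individual lengths; your version arguably has a larger gap (one cycle versus $n/k$ cycles, rather than $n/2k$ versus $n/k$), so the arithmetic is if anything cleaner. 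The paper's route has the advantage of being a black-box citation with no need to re-verify that the encoding tolerates a single length-$n$ cycle, while yours avoids the detour through~\Cref{rem:arb_mat}.
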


\begin{proof}
	The proof of this theorem follows directly from~\cite[Theorem 2.2.]{VerbinY11} who gave a reduction from the $k$-vs-$2k$ cycle problem on graphs with $4n$ vertices and $k = O(1/\eps)$, 
	and our lower bound for the $k$-vs-$2k$ cycle problem in~\Cref{rem:arb_mat} exactly as in~\Cref{thm:schatten}. 
\end{proof}

\medskip

This concludes the list of all the lower bounds mentioned in~\Cref{res:implications}.


\clearpage

\bibliographystyle{abbrv}
\bibliography{general}

\clearpage
\appendix

\part*{Appendix}

\newcommand{\rY}{\rv{Y}}
\newcommand{\rU}{\rv{U}}
\newcommand{\unif}{\mathcal{U}}

\newcommand{\rM}{\rv{M}}
\newcommand{\rT}{\rv{T}}

\section{Tools from Information Theory}\label{app:info}

We briefly list the definitions and tools from information theory that we use in this paper. 

\begin{itemize}
\item We denote the \emph{Shannon Entropy} of a random variable $\rA$ by
$\en{\rA}$, which is defined as: 
\begin{align}
	\en{\rA} := \sum_{A \in \supp{\rA}} \mu(A) \cdot \log{\paren{1/\mu(A)}}. \label{eq:entropy}
\end{align} 

\item The \emph{conditional entropy} of $\rA$ conditioned on $\rB$ is denoted by $\en{\rA \mid \rB}$ and defined as:
\begin{align}
\en{\rA \mid \rB} := \Ex_{B \sim \rB} \bracket{\en{\rA \mid \rB = B}}, \label{eq:cond-entropy}
\end{align}
where 
$\en{\rA \mid \rB = B}$ is defined in a standard way by using the distribution of $\rA$ conditioned on the event $\rB = B$ in \Cref{eq:entropy}.

\item The \emph{mutual information} of $\rA$ and $\rB$ is denoted by
$\mi{\rA}{\rB}$ and is defined as:
\begin{align}
\mi{\rA}{\rB} := \en{\rA} - \en{\rA \mid  \rB} = \en{\rB} - \en{\rB \mid  \rA}. \label{eq:mi}
\end{align}
\noindent
\item The \emph{conditional mutual information} $\mi{\rA}{\rB \mid \rC}$ is defined similarly as: 
\begin{align}
\mi{\rA}{\rB \mid \rC} := \en{\rA \mid \rC} - \en{\rA \mid  \rB,\rC} = \en{\rB} - \en{\rB \mid  \rA,\rC}. \label{eq:cond-mi}
\end{align}
\end{itemize}


\subsection*{Standard Properties of Entropy and Mutual Information}\label{prelim-sec:prop-en-mi}

We shall use the following basic properties of entropy and mutual information throughout. Proofs of these properties mostly follow from convexity of the entropy function
and Jensen's inequality and can be found in~\cite[Chapter~2]{CoverT06}. 

\begin{fact}\label{fact:it-facts}
  Let $\rA$, $\rB$, $\rC$, and $\rD$ be four (possibly correlated) random variables.
   \begin{enumerate}
  \item \label{part:uniform} $0 \leq \en{\rA} \leq \log{\card{\supp{\rA}}}$. The right equality holds
    iff $\distribution{\rA}$ is uniform.
  \item \label{part:info-zero} $\mi{\rA}{\rB} \geq 0$. The equality holds iff $\rA$ and
    $\rB$ are \emph{independent}.
  \item \label{part:cond-reduce} \emph{Conditioning on a random variable reduces entropy}:
    $\en{\rA \mid \rB,\rC} \leq \en{\rA \mid  \rB}$.  The equality holds iff $\rA \perp \rC \mid \rB$.
    \item \label{part:sub-additivity} \emph{Subadditivity of entropy}: $\en{\rA,\rB \mid \rC}
    \leq \en{\rA \mid \rC} + \en{\rB \mid  \rC}$.
   \item \label{part:ent-chain-rule} \emph{Chain rule for entropy}: $\en{\rA,\rB \mid \rC} = \en{\rA \mid \rC} + \en{\rB \mid \rC,\rA}$.
  \item \label{part:chain-rule} \emph{Chain rule for mutual information}: $\mi{\rA,\rB}{\rC \mid \rD} = \mi{\rA}{\rC \mid \rD} + \mi{\rB}{\rC \mid  \rA,\rD}$.
  \item \label{part:data-processing} \emph{Data processing inequality}: suppose $f(\rA)$ is a deterministic function of $\rA$, then 
  \[\mi{f(\rA)}{\rB \mid \rC} \leq \mi{\rA}{\rB \mid \rC}.\] 
   \end{enumerate}
\end{fact}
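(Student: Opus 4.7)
The plan is to verify the seven parts by direct calculation from the definitions in \Cref{eq:entropy,eq:cond-entropy,eq:mi,eq:cond-mi}, relying on two basic ingredients: Jensen's inequality applied to the concave function $\log$, and the probability chain rule $\mu(A,B\mid C) = \mu(A\mid C)\cdot \mu(B\mid A,C)$.

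First, I would handle parts~\ref{part:uniform} and~\ref{part:info-zero} together, since both reduce to Jensen. For part~\ref{part:uniform}, writing $\HH(\rA) = \mathbb{E}_A[\log(1/\mu(A))]$ and applying Jensen to the concave function $\log$ gives $\HH(\rA) \leq \log \mathbb{E}_A[1/\mu(A)] = \log|\supp{\rA}|$, with equality iff $\mu$ is uniform (by strict concavity of $\log$); the lower bound is immediate since each summand in the definition is non-negative. For part~\ref{part:info-zero}, I would rewrite $\II(\rA;\rB) = \mathbb{E}_{A,B}[\log(\mu(A,B)/(\mu(A)\mu(B)))]$ and apply Jensen in the reverse direction, obtaining $\II(\rA;\rB) \geq 0$, with equality iff $\mu(A,B)=\mu(A)\mu(B)$ pointwise on the joint support, which is exactly the independence condition.

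Next, I would derive parts~\ref{part:cond-reduce}--\ref{part:chain-rule} as a short chain of definitional manipulations, all rooted in the identity $\HH(\rA\mid\rB) - \HH(\rA\mid\rB,\rC) = \II(\rA;\rC\mid\rB)$, which follows by directly expanding both sides. Part~\ref{part:cond-reduce} is then just the conditional analogue of part~\ref{part:info-zero}, applied inside the expectation over $\rB=B$. For part~\ref{part:ent-chain-rule}, I would take $-\log$ of the probability chain rule and average to obtain $\HH(\rA,\rB\mid \rC) = \HH(\rA\mid\rC) + \HH(\rB\mid\rA,\rC)$, from which subadditivity (part~\ref{part:sub-additivity}) is immediate by dropping the extra conditioning in the second term via part~\ref{part:cond-reduce}. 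Part~\ref{part:chain-rule} follows by expanding $\II(\rA,\rB;\rC\mid\rD) = \HH(\rA,\rB\mid\rD) - \HH(\rA,\rB\mid\rC,\rD)$ and applying the entropy chain rule to both subtracted terms.

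Finally, for the data processing inequality (part~\ref{part:data-processing}), I would apply the chain rule to the joint variable $(\rA, f(\rA))$:
\[
\II(\rA;\rB\mid\rC) = \II(\rA, f(\rA); \rB \mid \rC) = \II(f(\rA);\rB\mid\rC) + \II(\rA;\rB\mid\rC, f(\rA)),
\]
where the first equality holds because $f(\rA)$ is a deterministic function of $\rA$ and thus adjoining it does not change the entropies involved. Since the second term on the right is non-negative by the conditional analogue of part~\ref{part:info-zero}, we obtain $\II(f(\rA);\rB\mid\rC) \leq \II(\rA;\rB\mid\rC)$. The only obstacle, if any, is notational bookkeeping of which random variable is conditioned on what; the underlying steps are all textbook manipulations (compare \cite[Chapter~2]{CoverT06}), and no step involves anything beyond Jensen's inequality together with the probability chain rule.
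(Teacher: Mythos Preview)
Your proposal is correct and aligns with the paper's approach: the paper does not give its own proof of \Cref{fact:it-facts} but simply states that the proofs ``mostly follow from convexity of the entropy function and Jensen's inequality and can be found in~\cite[Chapter~2]{CoverT06}'', which is precisely the route you sketch.
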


\noindent
We also use the following two standard propositions. 

\begin{proposition}\label{prelim-prop:info-increase}
  For random variables $\rA, \rB, \rC, \rD$, if $\rA \perp \rD \mid \rC$, then, 
  \[\mi{\rA}{\rB \mid \rC} \leq \mi{\rA}{\rB \mid  \rC,  \rD}.\]
\end{proposition}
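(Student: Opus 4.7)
The plan is to apply the chain rule for mutual information (\Cref{fact:it-facts}, \Cref{part:chain-rule}) in two different ways to the joint mutual information $\mi{\rA}{\rB, \rD \mid \rC}$, and then exploit the conditional independence hypothesis $\rA \perp \rD \mid \rC$ to cancel one of the resulting terms.

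First, I would expand $\mi{\rA}{\rB, \rD \mid \rC}$ by peeling off $\rD$ first, obtaining
\[
\mi{\rA}{\rB, \rD \mid \rC} = \mi{\rA}{\rD \mid \rC} + \mi{\rA}{\rB \mid \rC, \rD}.
\]
By the assumption $\rA \perp \rD \mid \rC$ together with \Cref{part:info-zero} of \Cref{fact:it-facts}, the first term on the right-hand side vanishes, so
\[
\mi{\rA}{\rB, \rD \mid \rC} = \mi{\rA}{\rB \mid \rC, \rD}.
\]
Next, I would expand the same quantity $\mi{\rA}{\rB, \rD \mid \rC}$ the other way, peeling off $\rB$ first:
\[
\mi{\rA}{\rB, \rD \mid \rC} = \mi{\rA}{\rB \mid \rC} + \mi{\rA}{\rD \mid \rC, \rB}.
\]
Since $\mi{\rA}{\rD \mid \rC, \rB} \geq 0$ (again by \Cref{part:info-zero} of \Cref{fact:it-facts}), combining the two displays yields the desired inequality
\[
\mi{\rA}{\rB \mid \rC, \rD} \geq \mi{\rA}{\rB \mid \rC}.
\]

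There is no real obstacle in this proof; it is a direct algebraic manipulation using the chain rule and nonnegativity of (conditional) mutual information. The only ``subtlety'' is recognizing which order of the chain rule expansion to use so that the conditional independence hypothesis produces a zero term; once that choice is made, the two-line calculation above completes the argument.
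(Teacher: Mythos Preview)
Your proof is correct. The paper's proof takes a slightly different but equally short route: it works directly with the entropy definition of mutual information, writing $\mi{\rA}{\rB \mid \rC} = \en{\rA \mid \rC} - \en{\rA \mid \rC,\rB}$, then uses the hypothesis to replace $\en{\rA \mid \rC}$ by $\en{\rA \mid \rC,\rD}$ and applies ``conditioning reduces entropy'' to bound $\en{\rA \mid \rC,\rB} \geq \en{\rA \mid \rC,\rB,\rD}$. Your approach instead stays at the level of mutual information, expanding $\mi{\rA}{\rB,\rD \mid \rC}$ two ways via the chain rule and using nonnegativity of mutual information. Both are standard two-line arguments; yours is arguably a touch cleaner in that it never unpacks mutual information into entropies, while the paper's version makes the role of the monotonicity of entropy under conditioning more explicit.
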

 \begin{proof}
  Since $\rA$ and $\rD$ are independent conditioned on $\rC$, by
  \itfacts{cond-reduce}, $\HH(\rA \mid  \rC) = \HH(\rA \mid \rC, \rD)$ and $\HH(\rA \mid  \rC, \rB) \ge \HH(\rA \mid  \rC, \rB, \rD)$.  We have,
	 \begin{align*}
	  \mi{\rA}{\rB \mid  \rC} &= \HH(\rA \mid \rC) - \HH(\rA \mid \rC, \rB) = \HH(\rA \mid  \rC, \rD) - \HH(\rA \mid \rC, \rB) \\
	  &\leq \HH(\rA \mid \rC, \rD) - \HH(\rA \mid \rC, \rB, \rD) = \mi{\rA}{\rB \mid \rC, \rD}. \qed
	\end{align*} 
	
\end{proof}

\begin{proposition}\label{prelim-prop:info-decrease}
  For random variables $\rA, \rB, \rC,\rD$, if $ \rA \perp \rD \mid \rB,\rC$, then, 
  \[\mi{\rA}{\rB \mid \rC} \geq \mi{\rA}{\rB \mid \rC, \rD}.\]
\end{proposition}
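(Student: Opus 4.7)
The plan is to prove the inequality by applying the chain rule for mutual information in two different ways to the joint quantity $\mi{\rA}{\rB,\rD \mid \rC}$, and then using the hypothesis $\rA \perp \rD \mid \rB,\rC$ to kill one of the resulting terms. This will parallel almost exactly the proof of \Cref{prelim-prop:info-increase} that already appears just above in the text, except the roles of $\rB$ and $\rD$ in the conditioning are swapped in a way that produces the opposite direction of inequality.

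Concretely, I would first expand $\mi{\rA}{\rB,\rD \mid \rC}$ in two ways using \itfacts{chain-rule}:
\[
\mi{\rA}{\rB,\rD \mid \rC} = \mi{\rA}{\rB \mid \rC} + \mi{\rA}{\rD \mid \rC, \rB}
= \mi{\rA}{\rD \mid \rC} + \mi{\rA}{\rB \mid \rC, \rD}.
\]
Then I would invoke the hypothesis $\rA \perp \rD \mid \rB,\rC$, which by \itfacts{info-zero} gives $\mi{\rA}{\rD \mid \rB,\rC} = 0$. Substituting this into the first expansion and equating the two expressions yields
\[
\mi{\rA}{\rB \mid \rC} = \mi{\rA}{\rD \mid \rC} + \mi{\rA}{\rB \mid \rC, \rD}.
\]
Finally, applying the nonnegativity of mutual information (\itfacts{info-zero}) to $\mi{\rA}{\rD \mid \rC} \geq 0$ delivers the desired inequality $\mi{\rA}{\rB \mid \rC} \geq \mi{\rA}{\rB \mid \rC, \rD}$.

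There is no real obstacle here: every step is an immediate application of a fact stated in \Cref{fact:it-facts}. The only thing to be a little careful about is choosing the correct order of variables in the chain rule so that the conditional independence hypothesis applies to precisely the term we want to zero out (namely $\mi{\rA}{\rD \mid \rC, \rB}$ rather than $\mi{\rA}{\rD \mid \rC}$). This mirrors the structure of the preceding \Cref{prelim-prop:info-increase} proof but exploits the symmetric dual situation, so the write-up should be essentially a three-line calculation.
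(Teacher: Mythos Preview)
Your proof is correct. It differs slightly from the paper's, which works directly with entropies: the paper writes $\mi{\rA}{\rB \mid \rC} = \en{\rA \mid \rC} - \en{\rA \mid \rB,\rC}$, uses the hypothesis to rewrite $\en{\rA \mid \rB,\rC} = \en{\rA \mid \rB,\rC,\rD}$, and then applies ``conditioning reduces entropy'' to replace $\en{\rA \mid \rC}$ by $\en{\rA \mid \rC,\rD}$. Your route via the chain rule for mutual information on $\mi{\rA}{\rB,\rD \mid \rC}$ is equally short and arguably cleaner, since it makes explicit that the gap between the two sides is exactly $\mi{\rA}{\rD \mid \rC}$. (One small inaccuracy in your write-up: the paper's proof of \Cref{prelim-prop:info-increase} also works with entropies rather than the mutual-information chain rule, so your argument does not quite ``parallel it almost exactly'' in form, though it certainly does in spirit.)
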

 \begin{proof}
 Since $\rA \perp \rD \mid \rB,\rC$, by \itfacts{cond-reduce}, $\HH(\rA \mid \rB,\rC) = \HH(\rA \mid \rB,\rC,\rD)$. Moreover, since conditioning can only reduce the entropy (again by \itfacts{cond-reduce}), 
  \begin{align*}
 	\mi{\rA}{\rB \mid  \rC} &= \HH(\rA \mid \rC) - \HH(\rA \mid \rB,\rC) \geq \HH(\rA \mid \rD,\rC) - \HH(\rA \mid \rB,\rC) \\
	&= \HH(\rA \mid \rD,\rC) - \HH(\rA \mid \rB,\rC,\rD) = \mi{\rA}{\rB \mid \rC,\rD}. \qed
 \end{align*}
 
\end{proof}

\subsection*{Measures of Distance Between Distributions}\label{prelim-sec:prob-distance}

We use two main measures of distance (or divergence) between distributions, namely the \emph{Kullback-Leibler divergence} (KL-divergence) and the \emph{total variation distance}. 

\paragraph{KL-divergence.} For two distributions $\mu$ and $\nu$ over the same probability space, the \emph{Kullback-Leibler divergence} between $\mu$ and $\nu$ is denoted by $\kl{\mu}{\nu}$ and defined as: 
\begin{align}
\kl{\mu}{\nu}:= \Ex_{a \sim \mu}\Bracket{\log\frac{\mu(a)}{{\nu}(a)}}. \label{eq:kl}
\end{align}
We  have the following relation between mutual information and KL-divergence. 
\begin{fact}\label{prelim-fact:kl-info}
	For random variables $\rA,\rB,\rC \sim \mu$, 
	\[\mi{\rA}{\rB \mid \rC} = \Ex_{(b,c) \sim {(\rB,\rC)}}\Bracket{ \kl{\distributions{\rA \mid \rB=b,\rC=c}{\mu}}{\distributions{\rA \mid \rC=c}{\mu}}}.\] 
\end{fact}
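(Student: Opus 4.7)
The plan is to verify the identity by directly unfolding both sides according to the definitions given earlier in the paper, namely equations~\eqref{eq:cond-mi} for conditional mutual information, \eqref{eq:cond-entropy} for conditional entropy, and \eqref{eq:kl} for KL-divergence. This is a standard textbook identity and the proof will be essentially algebraic.

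More concretely, I would start from the left-hand side using~\eqref{eq:cond-mi} to write
\[
\mi{\rA}{\rB \mid \rC} = \en{\rA \mid \rC} - \en{\rA \mid \rB,\rC}.
\]
Then I would expand each conditional entropy using~\eqref{eq:cond-entropy} as an expectation over the conditioning variables, noting that $\en{\rA \mid \rC} = \Ex_{c \sim \rC}[\en{\rA \mid \rC=c}]$ can also be rewritten as $\Ex_{(b,c) \sim (\rB,\rC)}[\en{\rA \mid \rC=c}]$ by taking a redundant expectation over $\rB$. After this rewriting, both terms are expectations over the same distribution $(b,c) \sim (\rB,\rC)$, so I can combine them into a single expectation of $\en{\rA \mid \rC=c} - \en{\rA \mid \rB=b,\rC=c}$. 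Expanding each of these inner entropies using~\eqref{eq:entropy} as a sum over $a \in \supp{\rA}$ and simplifying the logarithms gives
\[
\sum_{a} \mu(a \mid b,c)\log\frac{\mu(a \mid b,c)}{\mu(a \mid c)},
\]
which, by~\eqref{eq:kl}, is exactly $\kl{\distributions{\rA \mid \rB=b,\rC=c}{\mu}}{\distributions{\rA \mid \rC=c}{\mu}}$.

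The key algebraic step is recognizing that $\sum_a \mu(a \mid b,c)\log\mu(a \mid c) = \Ex_{a \sim \mu(\cdot \mid b,c)}[\log\mu(a \mid c)]$, so that subtracting the two entropies yields the standard log-ratio form of KL-divergence rather than a mismatched sum. Since this is a pure definition-chasing argument with no real obstacle, I do not anticipate any technical difficulties; the only thing to be careful about is bookkeeping around which distribution each expectation is taken over (in particular, ensuring that in the expansion of $\en{\rA \mid \rC}$ the inner sum $\sum_a \mu(a \mid c)\log\mu(a \mid c)$ gets replaced by $\sum_a \mu(a \mid b,c)\log\mu(a \mid c)$ only after introducing the outer expectation over $\rB$, which is legitimate because $\sum_b \mu(b \mid c)\mu(a \mid b,c) = \mu(a \mid c)$).
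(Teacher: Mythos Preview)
Your proposal is correct; the identity is standard and your definition-chasing argument is exactly the usual proof. The paper itself states this as a Fact without proof (it is listed among standard properties in the information-theory appendix, with a reference to~\cite{CoverT06}), so there is no paper proof to compare against.
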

\noindent
One can write the entropy of a random variable as its KL-divergence from the uniform distribution. 
\begin{fact}\label{fact:kl-en}
	Let $\rA$ be any random variable and $\unif$ be the uniform distribution on $\supp{\rA}$. Then, 
	\[
	\en{\rA} = \log{\card{\supp{\rA}}} - \kl{\rA}{\unif}.
	\] 
\end{fact}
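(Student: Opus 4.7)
The plan is to prove the identity by directly unfolding the definition of KL-divergence from \eqref{eq:kl} and matching it against the definition of Shannon entropy from \eqref{eq:entropy}. Let $\mu = \distribution{\rA}$ and write $N := \card{\supp{\rA}}$, so that $\unif(a) = 1/N$ for every $a \in \supp{\rA}$ and $\unif$ is absolutely continuous with respect to $\mu$ on $\supp{\rA}$ (ensuring the divergence is well-defined).

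First I would write
\[
\kl{\rA}{\unif} \;=\; \Ex_{a \sim \mu}\!\left[\log\frac{\mu(a)}{1/N}\right] \;=\; \log N \;+\; \Ex_{a \sim \mu}[\log \mu(a)],
\]
splitting the logarithm and pulling out the constant $\log N$ using $\sum_a \mu(a)=1$. By \eqref{eq:entropy}, the second term equals $-\en{\rA}$, since $\en{\rA} = \Ex_{a\sim\mu}[\log(1/\mu(a))] = -\Ex_{a\sim\mu}[\log\mu(a)]$. Substituting and rearranging gives $\en{\rA} = \log N - \kl{\rA}{\unif}$, which is exactly the claim. There is no real obstacle: the statement is a one-line consequence of the two definitions, and the only sanity check is that the support of $\unif$ coincides with $\supp{\rA}$, so the ratio inside the logarithm is always strictly positive on the domain of the expectation.
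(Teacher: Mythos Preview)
Your proof is correct. The paper states this as a standard fact without giving a proof; your direct expansion of the definitions in \eqref{eq:kl} and \eqref{eq:entropy} is exactly the natural one-line derivation.
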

\noindent
Finally, we also have the following property. 
\begin{fact}\label{fact:kl-event}
	Let $\rA$ be any random variable and $E$ be any event in the same probability space. Then, 
	\[
	\kl{\rA \mid E}{\rA} \leq \log{\frac{1}{\Pr\paren{E}}}. 
	\] 
\end{fact}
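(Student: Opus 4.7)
The plan is to prove the inequality directly from the definition of KL-divergence in~\eqref{eq:kl}, using only the elementary observation that conditioning on an event $E$ cannot inflate the probability of any outcome by more than a factor of $1/\Pr(E)$. Concretely, let $\mu$ denote the distribution of $\rA$ and let $\mu_E$ denote the distribution of $\rA \mid E$, both on the common support $\supp{\rA}$. For every $a \in \supp{\rA}$, we have
\[
\mu_E(a) \;=\; \Pr(\rA = a \mid E) \;=\; \frac{\Pr(\rA = a,\, E)}{\Pr(E)} \;\leq\; \frac{\Pr(\rA = a)}{\Pr(E)} \;=\; \frac{\mu(a)}{\Pr(E)},
\]
since $\Pr(\rA = a, E) \leq \Pr(\rA = a)$. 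Equivalently, $\mu_E(a)/\mu(a) \leq 1/\Pr(E)$ (for $a$ with $\mu_E(a) > 0$, which forces $\mu(a) > 0$).

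Plugging this pointwise bound into the definition of KL-divergence, I would write
\[
\kl{\mu_E}{\mu} \;=\; \Ex_{a \sim \mu_E}\!\Bracket{\log \frac{\mu_E(a)}{\mu(a)}} \;\leq\; \Ex_{a \sim \mu_E}\!\Bracket{\log \frac{1}{\Pr(E)}} \;=\; \log\frac{1}{\Pr(E)},
\]
where the last equality uses that $\log(1/\Pr(E))$ is a constant and $\mu_E$ is a probability distribution. This yields exactly the stated inequality.

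There is no real obstacle here: the only step that requires a moment of care is handling outcomes $a$ with $\mu_E(a) = 0$, which contribute nothing to the KL-sum under the standard convention $0 \log(0/x) = 0$, and verifying absolute continuity ($\mu(a) > 0$ whenever $\mu_E(a) > 0$), which is immediate from the definition of conditional probability. Thus the proof is essentially a one-line consequence of the pointwise density bound $\mu_E \leq \mu / \Pr(E)$.
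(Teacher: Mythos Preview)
Your proof is correct. The paper states this as a standard fact without supplying a proof, and your argument---bounding the density ratio pointwise by $\mu_E(a)/\mu(a) \leq 1/\Pr(E)$ and then averaging---is the canonical one-line derivation.
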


\paragraph{Total variation distance.} We denote the total variation distance between two distributions $\mu$ and $\nu$ on the same 
support $\Omega$ by $\tvd{\mu}{\nu}$, defined as: 
\begin{align}
\tvd{\mu}{\nu}:= \max_{\Omega' \subseteq \Omega} \paren{\mu(\Omega')-\nu(\Omega')} = \frac{1}{2} \cdot \sum_{x \in \Omega} \card{\mu(x) - \nu(x)}.  \label{prelim-eq:tvd}
\end{align}

We use the following basic properties of total variation distance. 
\begin{fact}\label{prelim-fact:tvd-small}
	Suppose $\mu$ and $\nu$ are two distributions for $\event$, then, 
	$
	{\mu}(\event) \leq {\nu}(\event) + \tvd{\mu}{\nu}.
$
\end{fact}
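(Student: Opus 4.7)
The plan is to observe that this statement is essentially just a restatement of the definition of total variation distance given in~\eqref{prelim-eq:tvd}, so the proof should amount to a one-line calculation rather than any substantive argument. The main thing to verify is that one picks the correct formulation of the definition among the two equivalent ones listed in~\eqref{prelim-eq:tvd}.

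Concretely, I would start from the identity
\[
\tvd{\mu}{\nu} = \max_{\Omega' \subseteq \Omega} \paren{\mu(\Omega') - \nu(\Omega')},
\]
where $\Omega$ denotes the common support of $\mu$ and $\nu$. Treating the event $\event$ as a subset of $\Omega$ (and if it is not, intersecting with $\Omega$ without changing the probabilities), this subset is one particular candidate in the maximization, so
\[
\mu(\event) - \nu(\event) \;\leq\; \max_{\Omega' \subseteq \Omega}\paren{\mu(\Omega') - \nu(\Omega')} \;=\; \tvd{\mu}{\nu}.
\]
Rearranging yields the stated inequality $\mu(\event) \leq \nu(\event) + \tvd{\mu}{\nu}$.

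I do not expect any obstacles here: the equivalence between the ``max over events'' form and the ``half $L_1$'' form in~\eqref{prelim-eq:tvd} is the only nontrivial piece of background, and it is already built into the definition the paper uses. The only minor bookkeeping point is to handle the degenerate case where $\event$ is not contained in $\Omega$; this is resolved by replacing $\event$ with $\event \cap \Omega$, which leaves both $\mu(\event)$ and $\nu(\event)$ unchanged since $\mu$ and $\nu$ assign zero mass outside $\Omega$. By symmetry, the same argument also gives $\nu(\event) \leq \mu(\event) + \tvd{\mu}{\nu}$, so one in fact obtains $\card{\mu(\event) - \nu(\event)} \leq \tvd{\mu}{\nu}$, though only the one-sided form is needed for the fact as stated.
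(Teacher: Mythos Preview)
Your proof is correct and is exactly the natural one-line argument from the definition in~\eqref{prelim-eq:tvd}. The paper states this as a basic fact without proof, so there is nothing further to compare.
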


We also have the following bound on the total variation distance of joint variables. 
\begin{fact}\label{fact_tvd_chain_rule}
  For any two distributions $p$ and $q$ of an $n$-tuple $(X_1,X_2,\ldots,X_n)$, we must have
  \begin{align*}
    &\, \tvd{\distributions{X_1,\ldots,X_n}{p}}{\distributions{X_1,\ldots,X_n}{q}} \\
    &\leq \sum_{i=1}^n \E_{(X_1,\ldots,X_{i-1})\sim p} \tvd{\distributions{X_i\mid X_1,\ldots,X_{i-1}}{p}}{\distributions{X_i\mid X_1,\ldots,X_{i-1}}{q}}.
  \end{align*}
\end{fact}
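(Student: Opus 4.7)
\textbf{Proof plan for \Cref{fact_tvd_chain_rule}.} The plan is to use a standard hybrid argument, interpolating between $p$ and $q$ one coordinate at a time by replacing the conditional marginal of the $i$-th variable under $q$ with the one under $p$, while keeping the earlier coordinates distributed as in $p$ and the later ones as in $q$.

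Concretely, I would define a sequence of hybrid distributions $h_0, h_1, \ldots, h_n$ on $(X_1, \ldots, X_n)$ by
\[
h_i(x_1, \ldots, x_n) := p(x_1, \ldots, x_i) \cdot q(x_{i+1}, \ldots, x_n \mid x_1, \ldots, x_i),
\]
so that $h_0 = q$ and $h_n = p$. The triangle inequality for total variation distance then gives
\[
\tvd{p}{q} = \tvd{h_n}{h_0} \leq \sum_{i=1}^n \tvd{h_i}{h_{i-1}}.
\]
It therefore suffices to bound each consecutive hybrid gap by the corresponding term in the claimed sum.

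The key observation is that $h_i$ and $h_{i-1}$ agree on both the marginal over $(X_1, \ldots, X_{i-1})$ (which is $p$ in both) and on the conditional distribution of $(X_{i+1}, \ldots, X_n)$ given $(X_1, \ldots, X_i)$ (which is $q$ in both). They differ only in the conditional distribution of $X_i$ given $(X_1, \ldots, X_{i-1})$. Using the definition of total variation distance in \eqref{prelim-eq:tvd} and summing out the common parts, a direct calculation yields
\[
\tvd{h_i}{h_{i-1}} = \E_{(X_1, \ldots, X_{i-1}) \sim p} \tvd{p(X_i \mid X_1, \ldots, X_{i-1})}{q(X_i \mid X_1, \ldots, X_{i-1})}.
\]
Plugging this back into the hybrid bound gives exactly the inequality in the fact statement.

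There is no serious obstacle here: the only step that requires a small calculation is verifying the per-hybrid identity above, which is a routine factorization of the $\tfrac{1}{2}\sum |h_i - h_{i-1}|$ sum, pulling the shared marginal over the first $i-1$ coordinates out as an expectation and using that the tail conditional $q(x_{>i} \mid x_{\leq i})$ sums to one. The proof is essentially a one-line hybrid argument plus this bookkeeping.
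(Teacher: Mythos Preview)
Your hybrid argument is correct and is the standard proof of this inequality. The paper itself states this as a \emph{Fact} without proof (it appears in the appendix alongside other basic information-theoretic facts cited from~\cite{CoverT06}), so there is no paper proof to compare against; your approach is exactly what one would write to justify it.
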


Finally, the following Pinskers' inequality bounds the total variation distance between two distributions based on their KL-divergence, 

\begin{fact}[Pinsker's inequality]\label{prelim-fact:pinskers}
	For any distributions $\mu$ and $\nu$, 
	$
	\tvd{\mu}{\nu} \leq \sqrt{\frac{1}{2} \cdot \kl{\mu}{\nu}}.
	$ 
\end{fact}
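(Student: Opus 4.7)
The plan is to follow the standard two-step reduction: first establish the inequality for Bernoulli (binary) distributions by direct calculus, and then use the data-processing inequality for $\kl{\cdot}{\cdot}$ to lift this to arbitrary $\mu$ and $\nu$. This approach is clean because the total variation distance is, by its definition in \eqref{prelim-eq:tvd}, attained by a single event, which naturally reduces everything to a two-point problem.

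First I would prove the binary case: for any $p, q \in [0,1]$,
\[
p \log\!\tfrac{p}{q} + (1-p)\log\!\tfrac{1-p}{1-q} \;\geq\; 2(p-q)^2.
\]
Fixing $q$, define $f(p)$ to be the difference of the two sides. Then $f(q) = 0$ and $f'(q) = 0$ by direct differentiation. For the second derivative one computes $f''(p) = \tfrac{1}{p(1-p)} - 4$, and since $p(1-p) \leq \tfrac14$ on $[0,1]$, this is nonnegative. Hence $f$ is convex with a minimum at $q$, and $f(p) \geq 0$ everywhere. (I am writing the logarithms with natural base here; the constant $2$ becomes $\tfrac{1}{2}$ in the claim if we switch to base $2$, which accounts for the $\tfrac12$ in the statement.)

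For the general case, let $A = \{x \in \Omega : \mu(x) > \nu(x)\}$; by \eqref{prelim-eq:tvd} this is the maximizing set, so $\tvd{\mu}{\nu} = \mu(A) - \nu(A)$. Consider the Bernoulli random variables $\rB_\mu, \rB_\nu$ obtained by pushing $\mu, \nu$ forward through the indicator map $x \mapsto \mathbf{1}[x \in A]$. The data-processing inequality for KL-divergence gives
\[
\kl{\mu}{\nu} \;\geq\; \kl{\rB_\mu}{\rB_\nu} \;=\; \mu(A)\log\!\tfrac{\mu(A)}{\nu(A)} + (1-\mu(A))\log\!\tfrac{1-\mu(A)}{1-\nu(A)}.
\]
Applying the binary inequality from the previous step with $p = \mu(A)$ and $q = \nu(A)$ yields $\kl{\mu}{\nu} \geq 2\,(\mu(A)-\nu(A))^2 = 2\,\tvd{\mu}{\nu}^2$, and rearranging gives the claim.

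The main obstacle (such as it is) is the binary case, because everything else is a formal reduction. The subtlety to be careful about there is boundary behavior when $q \in \{0,1\}$ or $p \in \{0,1\}$: the divergence is $+\infty$ when $\nu(A) = 0$ but $\mu(A) > 0$ (so the inequality is trivial), and the $f''$ computation only needs $p \in (0,1)$, with the endpoints handled by continuity. Aside from that, the convexity argument for $f$ is a one-line calculation, and the data-processing step is standard and already implicit in \itfacts{data-processing}.
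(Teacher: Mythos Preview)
The paper does not prove this statement; it is stated as a classical fact (Pinsker's inequality) without proof, as is standard for such background results. Your argument is the usual textbook proof and is correct, so there is nothing to compare against here.
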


\subsection{Auxiliary Information Theory Lemmas}\label{sec:aux-lem}

This section includes some additional information theory lemmas that we prove in this paper. 

\paragraph{Total variation distance of convex combinations of distributions.} The following lemma is a consequence of the triangle inequality for the total variation distance.

\begin{lemma}
\label{lemma:tvdtriangle}
Let $\Omega$ and $I$ be non-empty sets and for all $i \in I$, distributions $\mu_i$ and $\nu_i$ supported on $\Omega$ be given. For every distribution $\mathcal{I}$ supported on $I$, we have
\[
\tvd{  \E_{i \sim \mathcal{I}}\Bracket{  \mu_i  }  }{  \E_{i \sim \mathcal{I}}\Bracket{  \nu_i  }  } \leq \E_{i \sim \mathcal{I}}\Bracket{  \tvd{  \mu_i  }{  \nu_i  }  }. 
\]
\end{lemma}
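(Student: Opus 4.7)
The plan is to apply the $\ell_1$ characterization of total variation distance given in~\eqref{prelim-eq:tvd} and then interchange the sum with the expectation using the triangle inequality. Concretely, I would start from
\[
\tvd{\E_{i \sim \mathcal{I}}[\mu_i]}{\E_{i \sim \mathcal{I}}[\nu_i]} = \frac{1}{2} \sum_{x \in \Omega} \left\lvert \E_{i \sim \mathcal{I}}[\mu_i(x)] - \E_{i \sim \mathcal{I}}[\nu_i(x)] \right\rvert = \frac{1}{2} \sum_{x \in \Omega} \left\lvert \E_{i \sim \mathcal{I}}[\mu_i(x) - \nu_i(x)] \right\rvert,
\]
where the first equality uses linearity of expectation inside the definition of the mixed distributions.

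Next, I would apply the triangle inequality (equivalently, Jensen's inequality for the convex function $\lvert \cdot \rvert$) to pull the absolute value inside the expectation over $i$, obtaining
\[
\frac{1}{2} \sum_{x \in \Omega} \left\lvert \E_{i \sim \mathcal{I}}[\mu_i(x) - \nu_i(x)] \right\rvert \leq \frac{1}{2} \sum_{x \in \Omega} \E_{i \sim \mathcal{I}}\left[ \lvert \mu_i(x) - \nu_i(x) \rvert \right].
\]
Finally, I would swap the (countable) sum over $x$ with the expectation over $i$ (justified by nonnegativity of the summands via Tonelli's theorem, or directly if $\Omega$ is finite) and recognize the inner sum as $2 \cdot \tvd{\mu_i}{\nu_i}$, yielding the desired bound $\E_{i \sim \mathcal{I}}[\tvd{\mu_i}{\nu_i}]$.

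There is essentially no obstacle here: the argument is a direct two-line calculation and the only minor subtlety is justifying the interchange of sum and expectation, which is immediate if $\Omega$ is countable since all terms are nonnegative. No macros beyond $\tvd{\cdot}{\cdot}$ and $\E$ (both defined in the excerpt) are needed.
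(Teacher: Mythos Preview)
Your proposal is correct and follows essentially the same approach as the paper: expand total variation distance via the $\ell_1$ formula in~\eqref{prelim-eq:tvd}, apply the triangle inequality to move the absolute value inside the expectation, swap the sum over $\Omega$ with the expectation over $i$, and identify the result. The paper's proof is line-for-line the same, differing only in that it labels the interchange step as ``linearity of expectation'' rather than invoking Tonelli.
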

\begin{proof}
We have:
\begin{align*}
\tvd{  \E_{i \sim \mathcal{I}}\Bracket{  \mu_i  }  }{  \E_{i \sim \mathcal{I}}\Bracket{  \nu_i  }  } &= \frac{1}{2} \cdot \sum_{x \in \Omega} \card{ \E_{i \sim \mathcal{I}}\Bracket{  \mu_i(x)  } - \E_{i \sim \mathcal{I}}\Bracket{  \nu_i(x)  } } \tag{\Cref{prelim-eq:tvd}} \\
&\leq  \frac{1}{2} \cdot \sum_{x \in \Omega} \E_{i \sim \mathcal{I}}\Bracket{ \card{   \mu_i(x)  -  \nu_i(x)  } } \tag{Triangle inequality} \\
&\leq   \E_{i \sim \mathcal{I}}\Bracket{\frac{1}{2} \cdot \sum_{x \in \Omega} \card{   \mu_i(x)  -  \nu_i(x)  } } \tag{Linearity of expectation} \\
&\leq \E_{i \sim \mathcal{I}}\Bracket{  \tvd{  \mu_i  }{  \nu_i  }  } \tag{\Cref{prelim-eq:tvd}} .
\end{align*}
\end{proof}

\paragraph{Entropy and $\ell_2$-norm.} We relate the entropy and $\ell_2$-norm  using the following two lemmas. 
\begin{lemma}\label{lem_l2_entropy-1}
   For any random variable $\rA$ with support size $m > 16$,
    \[
      \norm{{\rA}}_2^2+\en{\rA} \leq \frac{1}{m}+\log m,
    \]
  \end{lemma}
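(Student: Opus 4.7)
The plan is to recast the inequality in per-coordinate form. Writing $p_1,\dots,p_m$ for the probabilities of $\rA$ and defining $g(p) := p^2 - p\log p$ (with $g(0) := 0$), the claim becomes $\sum_{i=1}^m g(p_i) \le \tfrac{1}{m} + \log m$. A direct computation gives $g''(p) = 2 - 1/(p\ln 2)$, so $g$ is concave on $[0,\alpha]$ and convex on $[\alpha, 1]$, where $\alpha := 1/(2\ln 2) \approx 0.72$. The crucial structural observation is that $\alpha > 1/2$, so at most one of the $p_i$'s can lie in the convex region $(\alpha, 1]$.

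Based on this dichotomy I would split the proof into two cases. In the first case, every $p_i \le \alpha$, and Jensen's inequality applied to $g$ across all $m$ probabilities (each in the concavity region) immediately yields
\[
\sum_i g(p_i) \;\le\; m\cdot g\!\left(\tfrac{1}{m}\right) \;=\; \tfrac{1}{m} + \log m,
\]
matching the bound with equality at the uniform distribution. In the second case, a unique index $j$ has $p_j > \alpha$; then the remaining $m-1$ probabilities all lie in $[0,\alpha]$ and have mean $(1-p_j)/(m-1) \in [0,\alpha]$, so Jensen applied to those alone gives $\sum_{i\neq j} g(p_i) \le (m-1)\,g((1-p_j)/(m-1))$. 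It therefore suffices to prove the one-variable bound
\[
H(t) \;:=\; g(t) + (m-1)\,g\!\left(\tfrac{1-t}{m-1}\right) \;\le\; \tfrac{1}{m} + \log m \qquad \text{for every } t\in[\alpha,1].
\]

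This last bound is the only step that requires non-trivial calculation, and is where the hypothesis $m > 16$ enters. Differentiating gives $H'(t) = g'(t) - g'((1-t)/(m-1))$, and after simplification the logarithmic term $\log\!\bigl((1-t)/((m-1)\,t)\bigr)$ dominates: for $t \ge \alpha$ and $m > 16$ it is at most $\log(0.4/15) < -5$, while the remaining summand $2t - 2(1-t)/(m-1)$ is at most $2$, so $H'(t) < 0$ throughout $[\alpha,1]$. Hence $H(t) \le H(\alpha)$, and a direct expansion gives $H(\alpha) = g(\alpha) + (1-\alpha)^2/(m-1) + (1-\alpha)\log\!\bigl((m-1)/(1-\alpha)\bigr)$, whose leading behaviour in $m$ is $(1-\alpha)\log m \approx 0.28\log m$. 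Consequently $\log m + \tfrac{1}{m} - H(\alpha) \ge \alpha\log m - O(1)$, which is positive precisely when $m > 16$, closing Case 2 and hence the lemma. The main obstacle throughout is this Case 2 analysis --- specifically the monotonicity-plus-endpoint estimate for $H$ on $[\alpha,1]$; Case 1 is a direct Jensen application, and the threshold ``$m > 16$'' is exactly the point at which $\alpha\log m$ overtakes the constants $g(\alpha) + (1-\alpha)\log\bigl(1/(1-\alpha)\bigr)$ that appear at $t = \alpha$.
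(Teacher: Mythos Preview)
Your proof is correct. The convexity analysis of $g(p) = p^2 - p\log p$ together with Jensen's inequality in the concave region is a clean way to handle the problem, and the Case~2 reduction to the one-variable function $H(t)$ followed by the monotonicity-plus-endpoint estimate is sound (your numerical bounds on $H'$ and $H(\alpha)$ check out for $m \ge 17$). One small overstatement: the final inequality $\log m + \tfrac{1}{m} - H(\alpha) > 0$ is \emph{not} positive ``precisely'' when $m > 16$; it actually holds for considerably smaller $m$ as well. The hypothesis $m > 16$ is more than sufficient, so this does not affect validity.

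The paper takes a genuinely different route. It also splits into two cases according to whether some probability exceeds a threshold (the paper uses $1/2$ rather than your $\alpha = 1/(2\ln 2)$). In the ``light'' case, instead of Jensen it writes $\log m - \en{\rA}$ as the KL divergence $\kl{\rA}{\mathrm{Unif}}$, applies Pinsker's inequality to get $\ge \tfrac{1}{2}\|\rA - \mathrm{Unif}\|_1^2$, and then compares this to $\|\rA - \mathrm{Unif}\|_2^2 = \|\rA\|_2^2 - \tfrac{1}{m}$ using the assumption that all deviations $|p_i - 1/m|$ are at most $1/2$. In the ``heavy'' case it uses only the crude bounds $\|\rA\|_2^2 \le 1$ and $\en{\rA} \le 1 + \tfrac{1}{2}\log m$. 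Your approach is more elementary in that it avoids Pinsker entirely and works directly with the per-coordinate function $g$; the price is the explicit calculus in Case~2. The paper's approach is shorter but leans on an information-theoretic inequality and an $\ell_1$--$\ell_2$ comparison. Both exploit the same structural fact that at most one probability can be ``large'', and both achieve equality at the uniform distribution.
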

\begin{proof}
	Let $\supp{\rA} = (a_1,\ldots,a_m)$ and for brevity let $\mu := \distribution{\rA}$ and $\unif$ as the uniform distribution on $\supp{\rA}$. 
	Suppose first that there is an element $a_i$ with $\mu(a_i) > 1/2$. Let $\rv{\Theta}$ be the indicator variable for the event $\rA = a_i$. In this case, 
	\begin{align*}
		\norm{{\rA}}_2^2+\en{\rA} &\leq 1+\en{\rA,\rv{\Theta}} \tag{as $\norm{\rA}_2^2 \leq 1$ since $\rA$ is a probability distribution} \\
		&\leq 1 + \en{\rv{\Theta}} + \frac{1}{2} \cdot \en{\rA \mid \rv{\Theta} = 0} \tag{by chain rule of entropy and since conditioned on $\rv{\Theta}=1$, $\rA$ has no entropy} \\
		&\leq 2 + \frac{1}{2} \cdot \log{m} < \log{m}.  \tag{as $\en{\rv{\Theta}} \leq 1$,~\itfacts{uniform}, and since $m > 16$}
	\end{align*}
	As such, in this case, the first inequality holds trivially. 
	
	Let us now consider the more interesting case when $\mu(a_i) \leq 1/2$ for all $a_i \in \supp{\rA}$. We have, 
	\begin{align*}
		\log{m}-\en{\rA} &= \kl{\rA}{\unif} \tag{by~\Cref{fact:kl-en}} \\
		&\geq \frac{1}{2} \cdot \norm{\rA-\unif}^2_1 \tag{by Pinsker's inequality (\Cref{prelim-fact:pinskers}) and since $\tvd{\rA}{\unif} = \norm{\rA-\unif}_1/2$} \\
		&\geq \norm{\rA-\unif}^2_2 \tag{as $\card{\mu(a_i)-\unif(a_i)} \leq 1/2$ for all $i$ in this case and $\norm{x}^2_2 \leq \frac{1}{2} \norm{x}^2_1$ whenever all $x_i \leq 1/2$} \\
		&= \norm{\rA}^2_2 - \frac{1}{m} \tag{as $\norm{x-y}_2^2 = \norm{x}^2 - 2\norm{xy}_2 + \norm{y}_2^2$ for all $x,y$}. 
	\end{align*}
	Rearranging the terms proves the lemma. 
\end{proof}
\begin{lemma}\label{lem_l2_entropy-2}
    For any random variable $\rA$ with support size $m > 16$,
    \[
      \en{\rA} \geq \log{m}-(\log{e})\cdot m \cdot \norm{\rA}_2^2 +\log e.
    \]
\end{lemma}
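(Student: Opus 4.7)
The plan is to reformulate the claim as a KL-divergence inequality and then reduce to the elementary bound $\ln x \le x-1$. Let $p_i = \Pr[\rA = a_i]$ for $i \in [m]$ and let $\unif$ denote the uniform distribution on $\supp{\rA}$. By \Cref{fact:kl-en}, the inequality we want is equivalent to
\[
\kl{\rA}{\unif} \;\le\; (\log e)\cdot \bigl(m\cdot \norm{\rA}_2^2 - 1\bigr).
\]

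Next I would expand both sides using the definition of $p_i$. The left-hand side is $\sum_i p_i \log(m p_i)$, and since $\sum_i p_i = 1$, the right-hand side can be rewritten as $(\log e)\sum_i p_i(m p_i - 1)$. So the target becomes the pointwise-looking inequality
\[
\sum_i p_i \log(m p_i) \;\le\; (\log e)\cdot \sum_i p_i\,(m p_i - 1).
\]

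The main (and essentially only) step is to invoke the standard inequality $\ln x \le x - 1$, valid for all $x > 0$, with $x = m p_i$; converting to base $2$ gives $\log(m p_i) \le (\log e)(m p_i - 1)$. Multiplying by $p_i \ge 0$ and summing over $i$ yields the desired bound. Terms with $p_i = 0$ are handled by the usual convention $0 \log 0 = 0$, so there is no real obstacle here; the whole argument boils down to a one-line application of $\ln x \le x - 1$ after the reformulation, and the hypothesis $m > 16$ is not actually needed (it is inherited from the grouping with \Cref{lem_l2_entropy-1}).
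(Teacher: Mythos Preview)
Your proposal is correct and is essentially identical to the paper's own proof: both rewrite $\en{\rA}=\log m-\kl{\rA}{\unif}$, expand the KL-divergence as $\sum_i p_i\log(mp_i)$, and then apply $\ln x\le x-1$ with $x=mp_i$ termwise. Your observation that the hypothesis $m>16$ is not actually used here is also accurate.
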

\begin{proof}
	 We use the same notation as in the previous proof.  We have, 
  \begin{align*}
    \en{\rA}&= \log m- \kl{\rA}{\unif} \tag{by~\Cref{fact:kl-en}} \\
    &= \log{m} -\sum_i \mu(a_i) \cdot \log{(m \cdot \mu(a_i))} \tag{by the definition of the KL-divergence in~\Cref{eq:kl}} \\
    &\geq \log m-\sum_i \mu(a_i) \cdot \log(\exp(m \cdot \mu(a_i)-1)) \tag{as $x+1 \leq e^x$ for all $x$}\\
    &= \log m-\sum_i \mu(a_i) \cdot (m \cdot \mu(a_i)-1) \cdot \log{e} \\ 
    &= \log m-(\log e)\cdot m\sum_i \mu(a_i)^2+\log e \\
    &= \log m-(\log e)\cdot m \cdot \norm{\rA}_2^2+\log e,
  \end{align*}
  concluding the proof. 
\end{proof}

\paragraph{High entropy random permutations.} The following lemma shows that if we have a ``high entropy'' random permutation of $[m]$ with entropy just $\approx  \eps \cdot m$ below the full entropy, 
then the entropy of an \emph{average} coordinate is very close to $\log{m}-\sqrt{\eps}$ (as opposed to $(\log{m!})/m-\eps \leq \log{m}-1$ which follows trivially from the sub-additivity of entropy). 

\begin{lemma}\label{lem_info_perm}
	Let $\rM:[m]\to[m]$ be a random permutation on $[m]$.
	If $\HH(\rv M)\geq \log m!-m/8$, then 
	\[
		m\log m-\sum_{j=1}^m \HH(\rv M(j))\leq 4\sqrt{(\log m!-\HH(\rv M))m}+3.
	\]
\end{lemma}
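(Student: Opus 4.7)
Write $\delta := \log m! - \HH(\rM)$. My strategy is to first prove a bound on the partial sum $\sum_{j \in S}(\log m - \HH(\rM(j)))$ over an arbitrary subset $S \subseteq [m]$ of size $k$, and then combine these subset bounds via a ``best choice of $k$'' argument. The key observation is that the permutation structure bounds conditional entropy very tightly, which propagates to give a good control on the marginals in aggregate.

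\emph{Step 1 (subset bound).} For any $S\subseteq[m]$ with $|S|=k$, the fact that $\rM$ is a permutation gives
\[
\HH(\rM\mid \{\rM(j)\}_{j\in S}) \leq \log(m-k)!,
\]
since conditioning on the values of $\rM$ on $S$ leaves a permutation of the $m-k$ remaining elements. Combined with $\HH(\rM) = \log m! - \delta$ and sub-additivity of entropy,
\[
\sum_{j\in S}\bigl(\log m - \HH(\rM(j))\bigr) \leq k\log m - \log\tfrac{m!}{(m-k)!} + \delta = \sum_{i=0}^{k-1}\log\tfrac{m}{m-i} + \delta \leq \tfrac{k^2 \log e}{m} + \delta,
\]
where the last step uses $\log(1+x)\leq x\log e$ and requires $k\leq m/2$.

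\emph{Step 2 (pointwise bound via AM--GM).} Let $s_1\geq s_2\geq\cdots\geq s_m$ denote the quantities $\{\log m - \HH(\rM(j))\}_{j=1}^m$ sorted in decreasing order. Applying Step 1 with $S$ equal to the top-$k$ indices yields $k\cdot s_k \leq k^2\log e/m + \delta$, i.e., $s_k \leq k\log e/m + \delta/k$. By AM--GM the right side is minimized at $k^* := \lceil\sqrt{m\delta/\log e}\,\rceil$, giving $s_{k^*} \leq 2\sqrt{\delta\log e/m}$. The hypothesis $\delta\leq m/8$ ensures $k^*\leq m/2$ so Step 1 applies at $k=k^*$.

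\emph{Step 3 (conclude via splitting).} Split the total sum at $k^*$ and bound each part using the tools above:
\[
\sum_{j=1}^m s_j \;\leq\; \underbrace{\Bigl(\tfrac{(k^*)^2 \log e}{m} + \delta\Bigr)}_{\leq\,2\delta \text{ by choice of }k^*} \;+\; \underbrace{(m-k^*)\cdot s_{k^*}}_{\leq\,2\sqrt{m\delta\log e}} \;\leq\; 2\delta + 2\sqrt{m\delta\log e}.
\]
Using $\delta\leq m/8$ gives $2\delta \leq \sqrt{m\delta/2} \leq \sqrt{m\delta}$, and $2\sqrt{m\delta\log e}\leq 2\sqrt{1.45}\sqrt{m\delta} < 2.42\sqrt{m\delta}$, so the total is at most $3.5\sqrt{m\delta}$.

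\emph{Main obstacle.} The main subtlety will be to match the stated bound $4\sqrt{\delta m}+3$ exactly: the $\log e\approx 1.44$ factor arising from the Taylor bound $\log(1+x)\leq x\log e$ must be absorbed into the leading constant $4$, which works only because of the precondition $\delta\leq m/8$; and the edge case when $\delta$ is so small that $k^*\leq 1$ (where Steps 2--3 are degenerate) must be handled by the trivial per-coordinate bound $s_j\leq \log e$, which together with the $+3$ slack in the lemma closes the argument.
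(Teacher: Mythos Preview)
Your proof is correct and reaches the target bound, but it takes a more circuitous route than the paper. Both arguments rest on the same core inequality (your Step~1): for any $S$ of size $k$,
\[
\sum_{j\in S}\bigl(\log m - \HH(\rM(j))\bigr) \;\le\; k\log m - \log\tfrac{m!}{(m-k)!} + \delta.
\]
The difference is in how this is turned into a bound on the full sum. The paper simply \emph{averages} this inequality over a uniformly random $S$ of size $\ell$: the left side becomes $\tfrac{\ell}{m}\sum_{j=1}^m(\log m - \HH(\rM(j)))$, and rearranging gives the full-sum bound $\sum_j s_j \le \tfrac{m\ell\log e}{m-\ell} + \tfrac{m\delta}{\ell}$ directly, after which one optimizes $\ell$. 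Your approach instead applies the inequality to the worst (top-$k$) subset, extracts a pointwise bound $s_k \le k\log e/m + \delta/k$, and then splits the sorted sum at $k^*$. This works, but Steps~2--3 are doing by hand what a single averaging step accomplishes; in fact, averaging your Step~1 over all $|S|=k$ is literally the paper's calculation.

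Two minor inaccuracies worth flagging. First, the underbrace ``$\le 2\delta$'' in Step~3 is not quite right because $k^*=\lceil\sqrt{m\delta/\log e}\,\rceil$ makes $(k^*)^2\log e/m \ge \delta$, not $\le\delta$; the ceiling contributes an extra $O(\sqrt{\delta/m})+O(1/m)$, but this is absorbed by the $+3$ slack and the gap between your $3.5$ and the target $4$. Second, the ``trivial per-coordinate bound $s_j\le\log e$'' you invoke for the edge case is false in general (take $\rM(1)$ deterministic and the rest a uniform permutation of $\{2,\dots,m\}$: then $\delta=\log m\le m/8$ but $s_1=\log m$). Fortunately you don't need it: when $k^*\le 1$ you have $m\delta\le\log e$, so Step~1 with $k=1$ gives $s_j\le \log e/m+\delta\le 2\log e/m$ for every $j$, whence $\sum_j s_j\le 2\log e<3$.
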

\begin{proof}
\newcommand{\oT}{\overline{T}}
	Let $1\leq \ell \leq m/2$ be an integer parameter to be fixed later.
	Let $\rv T\subset [m]$ be a \emph{uniformly random} subset of size $\ell$, chosen independent of $\rM$.
	Then, 
	\begin{align*}
		\en{\rM}&= \en{\rM \mid \rT} \tag{as $\rT \perp \rM$ and by~\itfacts{cond-reduce}}\\
		&=\HH(\rv M_{\rv T}\mid \rv T)+\HH(\rv M_{\rv\oT}\mid \rv M_{\rv T}, \rv T) \tag{$\rv M_{\rv T}$ denotes $\rv M(i)$ for all $i\in \rv T$, and $\rv \oT$ is the complement of $\rv T$}\\
		&=\E_T [\HH(\rv M_{\rv T}\mid \rv T=T)]+ \HH(\rv M_{\rv\oT}\mid \rv M_{\rv T}, \rv T) \\
		&\leq \E_T \left[\sum_{i\in T} \HH(\rv M(i)\mid \rv T=T)\right]+ \HH(\rv M_{\rv\oT}\mid \rv M_{\rv T}, \rv T)\tag{by sub-additivity of entropy~\itfacts{sub-additivity}} \\
		&\leq \E_T \left[\sum_{i\in T} \HH(\rv M(i)\mid \rv T=T)\right]+\log (m-\ell)! \tag{as conditioned on $\rT$, $\rM_{\rv\oT}$ is a permutation on $m-\ell$ elements and by~\itfacts{uniform}} \\
		&=\frac{\ell}{m}\sum_{i=1}^m \HH(\rv M(i))+\log (m-\ell)!\,. \tag{by the choice of $\rT$} 
	\end{align*}
	Therefore, 
	\begin{align*}
		\text{LHS in~\Cref{lem_info_perm}} &= m\log m-\sum_{i=1}^m \HH(\rv M(i)) \\
		&\leq m\log m+ \frac{m}{\ell}\cdot (\log (m-\ell)!-\HH(\rv M)) \tag{using the above inequality and by re-arranging the terms} \\
		&=m\log m+\frac{m}{\ell}\cdot (\log m!-\HH(\rv M))-\frac{m}{\ell}\cdot (\log m(m-1)\cdots (m-\ell+1)) \\
		&\leq m\log m+\frac{m}{\ell}\cdot (\log m!-\HH(\rv M))-\frac{m\log (m-\ell)^\ell}{\ell}\\
		&=m\log \left(1+\frac{\ell}{m-\ell}\right)+\frac{m}{\ell}\cdot (\log m!-\HH(\rv M)) \\
		&\leq \frac{(\log e) \cdot m \cdot \ell}{m-\ell}+\frac{m}{\ell}\cdot (\log m!-\HH(\rv M)).
	\end{align*}
	By setting $\ell:=\left\lceil\sqrt{(\log m!-\HH(\rv M))m}\right\rceil\leq m/2$, we have, 
	\begin{align*}
		\text{LHS in~\Cref{lem_info_perm}} &\leq \frac{(\log e) \cdot m \cdot \ell}{m-\ell}+\frac{m}{\ell}\cdot (\log m!-\HH(\rv M))\tag{by the above inequality} \\
		&\leq \frac{(\log e) \cdot m \cdot (\sqrt{(\log m!-\HH(\rv M))m}+1)}{m-m/2}+\sqrt{(\log m!-\HH(\rv M))m} \\
		&\leq 4\sqrt{(\log m!-\HH(\rv M))m}+3. \qed
	\end{align*}

\end{proof}

\paragraph{Weighted sum of entropy terms.} We also have the following technical lemma that bounds the weighted sum of entropy terms (some intuition on the statement is given right after the lemma). 

\begin{lemma}\label{lem_weighted_sum_entropy}
		Suppose $\rv M$ is a random variable with uniform distribution over $[t]$, and $\rA$ is a random variable with support size at most $\ell$ and distribution $\alpha(\rA)$.
		Then for any arbitrary function (not necessarily a distribution) $\beta: \supp{\rA} \rightarrow [\eps,1]$, we have:
		\[
			\frac{\sum_{A \in \supp{\rA}} \beta(A)^{-1} \cdot\alpha(A)^2\cdot \en{\rv M\mid \rv \rA=A}}{\sum_{A\in\supp{\rA}} \beta(A)^{-1} \cdot \alpha(A)^2} \geq \log{t}-\log{(\ell/\epsilon)}.
		\]
\end{lemma}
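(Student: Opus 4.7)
The plan is to let $q(A) := \beta(A)^{-1}\alpha(A)^2/Z$ with $Z := \sum_{A} \beta(A)^{-1}\alpha(A)^2$, so that the target inequality becomes
\[
\sum_{A \in \supp{\rA}} q(A)\cdot \kl{\distribution{\rv M \mid \rv\rA = A}}{\unif_{[t]}} \leq \log(\ell/\eps),
\]
using $\en{\rv M \mid \rv\rA = A} = \log t - \kl{\distribution{\rv M\mid \rv\rA=A}}{\unif_{[t]}}$. I would then bound this weighted KL-sum by a chain of three elementary estimates.

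First, writing $p_A := \distribution{\rv M \mid \rv\rA=A}$, the uniformity of $\rv M$ gives $\sum_A \alpha(A)\, p_A(m) = 1/t$ for every $m$, hence $p_A(m) \leq 1/(t\alpha(A))$ pointwise; this immediately yields $\kl{p_A}{\unif_{[t]}} \leq \log(1/\alpha(A))$. Next, by concavity of $\log$ (Jensen's inequality),
\[
\sum_A q(A)\log\!\frac{1}{\alpha(A)} \;\leq\; \log\!\Bracket{\sum_A \frac{q(A)}{\alpha(A)}} \;=\; \log\!\Bracket{\frac{1}{Z}\sum_A \beta(A)^{-1}\alpha(A)} \;\leq\; \log\!\Bracket{\frac{1}{Z\eps}},
\]
where the final step uses $\beta(A) \geq \eps$ and $\sum_A \alpha(A) = 1$. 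Finally, I would lower bound $Z$ via Cauchy--Schwarz:
\[
1 = \Paren{\sum_A \alpha(A)}^2 = \Paren{\sum_A \beta(A)^{1/2}\cdot \beta(A)^{-1/2}\alpha(A)}^2 \leq \Paren{\sum_A \beta(A)}\cdot Z \leq \ell\cdot Z,
\]
using $\beta(A) \leq 1$ and $\card{\supp{\rA}} \leq \ell$. Combining the three bounds gives $\sum_A q(A)\kl{p_A}{\unif_{[t]}} \leq \log(\ell/\eps)$, which is exactly what we need.

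There is no real obstacle: the proof is essentially a three-line manipulation, and the only delicate point is picking the right Jensen application (concavity of $\log$ applied to $\log(1/\alpha(A))$) so that the numerator telescopes into $\sum_A \beta(A)^{-1}\alpha(A)$ rather than the more troublesome $\sum_A \beta(A)^{-1}\alpha(A)^2 \log(1/\alpha(A))$. Once one spots that trick, everything else follows from standard pointwise bounds and Cauchy--Schwarz; no information-theoretic identity beyond $\en{} = \log t - \kl{}{\unif}$ is required.
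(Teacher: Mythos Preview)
Your proof is correct, and it takes a genuinely different route from the paper after the shared first step. Both arguments begin by bounding $\kl{\distribution{\rv M\mid\rA=A}}{\unif_{[t]}}\le\log(1/\alpha(A))$ (the paper cites this as \Cref{fact:kl-event}, while you derive it directly from the pointwise bound $p_A(m)\le 1/(t\alpha(A))$). From there, the paper reduces to showing $\sum_A \beta(A)^{-1}\alpha(A)^2\log((\ell/\eps)\alpha(A))\ge 0$ and proves this via a perturbation argument: it computes the derivative of each term with respect to $\alpha(A)$, shows that shifting mass from any $A'$ with $\alpha(A')>1/\ell$ to any $A$ with $\alpha(A)<\eps/\ell$ can only decrease the sum, and observes that once every $\alpha(A)\ge\eps/\ell$ the sum is termwise nonnegative. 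Your argument instead applies Jensen to $\sum_A q(A)\log(1/\alpha(A))$, collapsing it into $\log\bigl(\tfrac{1}{Z}\sum_A\beta(A)^{-1}\alpha(A)\bigr)$, then bounds the numerator by $1/\eps$ and the denominator $Z$ by $1/\ell$ via Cauchy--Schwarz. Your route is shorter and more transparent---it avoids calculus entirely and makes clear exactly where each of the three hypotheses ($\beta\ge\eps$, $\beta\le 1$, $\card{\supp{\rA}}\le\ell$) enters---whereas the paper's perturbation argument is more hands-on but somewhat opaque about why the constant $\ell/\eps$ is the right one.
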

For a high level intuition of the lemma, consider the case when $\alpha = \beta$; in this case, the lemma 
simply says that conditioning on the random variable $\rA$ cannot reduce the entropy of $\rM$ from $\log{m}$ more than the entropy of $\rA$ which follows immediately from the chain rule (\itfacts{ent-chain-rule}). 
\begin{proof}[Proof of~\Cref{lem_weighted_sum_entropy}]
	As $\rM$ is the uniform distribution, for any choice of $a \in \supp{\rA}$, 
	\begin{align*}
		\en{\rM \mid \rv \rA=A} &= \log{t} - \kl{\rM\mid \rA=A}{\rM} \tag{by~\Cref{fact:kl-en}} \\
		&\geq \log{t} - \log{\frac{1}{\Pr\paren{\rA=A}}} \tag{by~\Cref{fact:kl-event}} \\
		&= \log{t} + \log{\alpha(A)}. 
	\end{align*}
	Therefore, 
	\begin{align*}
		\text{LHS of~\Cref{lem_weighted_sum_entropy}} &\geq \frac{\sum_{A \in \supp{\rA}} \beta(A)^{-1} \cdot\alpha(A)^2\cdot (\log{t}+\log{\alpha(A))}}{\sum_{A \in \supp{\rA}} \beta(A)^{-1} \cdot \alpha(A)^2} \\
		&= \log{t} + \frac{\sum_{A \in \supp{\rA}} \beta(A)^{-1} \cdot\alpha(A)^2\cdot \log{\alpha(A)}}{\sum_{A \in \supp{\rA}} \beta(A)^{-1} \cdot \alpha(A)^2},
	\end{align*}
	and thus we only need to show that 
	\begin{equation}\label{eqn_weighted}
		\sum_A \frac{\alpha(A)^2}{\beta(A)}\cdot\log ((\ell/\epsilon)\cdot\alpha(A))\geq 0.
	\end{equation}
	For each term, the derivative (w.r.t. $\alpha(A)$) is equal to: 
	\[
		\frac{\alpha(A)}{\beta(A)}\cdot(2\log((\ell/\epsilon) \cdot \alpha(A))+1).
	\]
	Note that
	\begin{itemize}
		\item when $\alpha(A)<\epsilon/\ell$, the derivative is at most $\alpha(A)/\beta(A)<1/\ell$;
		\item when $\alpha(A)>1/\ell$, the derivative is at least $\alpha(A)\cdot (2\log(1/\epsilon)+1)>1/\ell$.
	\end{itemize}
	
	If there is any $A \in \supp{\rA}$, where $\alpha(A)<\epsilon/\ell$, there must also be a $A'$ such that $\alpha(A')>1/\ell$ (since the average value of $\alpha(A)$ is at least $1/\ell$).
	Let $\delta=\min\set{\alpha(A')-1/\ell,\epsilon/\ell-\alpha(A)}$.
	
	By increasing $\alpha(A)$ by $\delta$ and decreasing $\alpha(A')$ by $\delta$, the left-hand-side of \eqref{eqn_weighted} could only decrease by the bounds on the derivative; moreover, $\alpha(\cdot)$ still remains a valid distribution.
	
	By doing such adjustments for at most $2\ell$ times, we obtain a new distribution $\hat{\alpha}$ such that $\hat{\alpha}(A)\geq \epsilon/\ell$ for all $A \in \supp{\rA}$, and throughout this process, we only decrease the LHS of \eqref{eqn_weighted}.
	On the other hand, for the distribution $\hat{\alpha}$, since $\hat{\alpha}(A)\geq \epsilon/\ell$, every term in the LHS of \eqref{eqn_weighted} is nonnegative and thus \eqref{eqn_weighted} holds.
		This implies that it must also hold for the $\alpha(\cdot)$ as well, proving the lemma.
	\end{proof}
	

\section{Further Background on Boolean Hidden Hypermatching}\label{app:background-bhh}

\subsection*{Origin of the Problem}

The Boolean Hidden Hypermatching $\BHH$ problem of Verbin and Yu~\cite{VerbinY11} is inspired by the Boolean Hidden Matching problem of~Gavinsky~\etal\cite{GavinskyKKRW07} (that corresponds to $\BHH$ with $t=2$), which itself is a 
Boolean version of the Hidden Matching problem of Bar-Yossef~\etal~\cite{Bar-YossefJK04}. The original motivation behind the latter two works was proving separations between quantum versus one-way randomized communication complexity. 
However, it was observed in the work of Verbin and Yu that these problems are also highly applicable for proving streaming lower bounds. Thus a main conceptual contribution of~\cite{VerbinY11} was the introduction of 
$\BHH$  for proving streaming lower bounds.

 It is worth mentioning that stronger separations between quantum versus \emph{two-way} randomized communication complexity 
has also since been proven, for instance for the Shifted Approximate Equality (Shape) problem of Gavinsky~\cite{Gavinsky16}. While these problems do not suffer from the weakness of $\BHH$ which is only tailored 
to one-way protocols, to our knowledge, they have also not found applications for proving streaming lower bounds. This is primarily due to the fact that such problems are ``much harder'' than most streaming problems of interest and hence not suitable for the purpose of a reduction. Indeed, most
graph streaming problems such as gap cycle counting, MAX-CUT, and property testing of connectivity or bipartiteness, tend to become ``easy'' with large number of passes over the input (only as a function of $\eps$); such tradeoffs provably cannot be 
captured by communication problems that are hard even in the two-way model (see, e.g.~\cite{AssadiCK19b}).

\subsection*{Other Variants} 

There are multiple other variants of $\BHH$ studied in the literature depending on the application. 

Kapralov, Khanna, and Sudan~\cite{KapralovKS15} introduced the Boolean Hidden Partition problem in which the input to Bob, instead of a hypermatching, 
is a random sparse graph with no short cycles. This problem was then used by~\cite{KapralovKS15} to prove an $\Omega(\sqrt{n})$ space lower bound for better-than-$2$ approximation of MAX-CUT. Inspired by this problem, 
Kapralov, Khanna, Sudan, and Velingker~\cite{KapralovKSV17} 
introduced the Implicit Hidden Partition problem, which is a multi-party communication problem and its definition is rather different from the setup of $\BHH$. Communication complexity of this problem was first analyzed in~\cite{KapralovKSV17} for three players 
which gave an $\Omega(n)$ space lower bound for $(1+\Omega(1))$-approximation of MAX-CUT. Subsequently, Kapralov and Krachun~\cite{KapralovK19} proved a communication lower bound for this problem for arbitrary number of players which culminated in 
the {optimal} lower bound of $\Omega(n)$ space for better-than-$2$ approximation of MAX-CUT. 

Yet another variant of $\BHH$ was introduced by Guruswami and Tao~\cite{GuruswamiT19} as the $p$-ary Hidden Matching problem in which Alice, instead of a Boolean vector $x \in \set{0,1}^{n}$, is given a vector over a larger field $\mathbb{F}_p^n$ for some 
arbitrary prime $p$ and we use computation over $\mathbb{F}_p$ (instead of $\mathbb{F}_2$, i.e., XOR) to interpret the matching $Mx$.  This problem was then used in~\cite{GuruswamiT19} to prove lower bounds for streaming unique games and
other CSP-like problems. 

Finally, $\BHH$ and related problems have been studied in the multi-party simultaneous communication model in which the input is partitioned across multiple players who, simultaneously with each other, send a single message each to 
a central coordinator that outputs the final answer~\cite{AssadiKL17,KallaugherKP18}. 

\subsection*{Streaming Lower Bounds} 

We now list several key implications of $\BHH$ for proving streaming lower bounds but note that this is not a comprehensive list of such results. 

\begin{itemize}[leftmargin=10pt]
	\item \textbf{Approximating MAX-CUT:} One can approximate MAX-CUT to within a factor of $2$ in $O(\log{n})$ space (by counting the number of edges) and to $(1+\eps)$ in 
	$\Ot(n/\eps^2)$ space (by maintaining a cut sparsifier); see, e.g.~\cite{KapralovKS15}. A direct reduction from $\BHH$ can prove that any $(1+\eps)$-approximation single-pass algorithm requires $n^{1-O(\eps)}$ space~\cite{KoganK15,KapralovKS15}, 
	and a series of work building on this approach~\cite{KapralovKS15,KapralovKSV17,KapralovK19} culminated in the optimal lower bound of $\Omega(n)$ space for less-than-$2$ approximation in~\cite{KapralovK19}.  
	
	\item \textbf{Streaming unique games and CSPs:} Motivated in parts by lower bounds for MAX-CUT, other constrained satisfaction problems (CSPs) have been studied in~\cite{GuruswamiVV17,GuruswamiT19}. 
	For instance,~\cite{GuruswamiVV17} proves a lower bound of $\Omega(\sqrt{n})$ space for better-than-$(7/8)$ approximation of maximum acyclic digraph using a reduction from $\BHH$ and~\cite{GuruswamiT19} 
	proves an $\Omega(\sqrt{n})$ space lower bound for streaming unique games using a reduction from $p$-ary Hidden Matching problem discussed above.

	\item \textbf{Estimating matching size:} There is an active line of work on estimating  matching size in graphs (in particular planar and low arboricity 
	graphs)~\cite{AssadiKL17,EsfandiariHLMO15,McGregorV18,CormodeJMM17,KapralovKS14}. 
	A reduction from $\BHH$ can prove an $\Omega(\sqrt{n})$ space for better-than-$(3/2)$ approximation in planar graphs~\cite{EsfandiariHLMO15} which was improved to $(1+\eps)$-approximation in 
	$n^{1-O(\eps)}$ space for low-arboricity graphs in~\cite{BuryS15}. These results were further extended to super-linear-in-$n$ lower bounds for dense graphs by~\cite{AssadiKL17}.

	\item \textbf{Estimating matrix rank and Schatten norms:} Using an elegant connection between matching size estimation and estimating rank of $n$-by-$n$ matrices (dating back to a work of Lov{\'a}sz~\cite{Lovasz79}), 
	the aforementioned lower bounds for matching size estimation were also extended to algorithms that can estimate rank of $n$-by-$n$ matrix in the streaming setting~\cite{BuryS15,AssadiKL17}.
	These results were further generalized to all Schatten $p$-norms of $n$-by-$n$ matrices for all values of $p \in [0,+\infty)$ which are \emph{not} even integers in~\cite{LiW16} using another reduction from $\BHH$ (see also~\cite{BravermanCKLWY18}
	for another variant of these results).

	\item \textbf{Streaming Property Testing:} Solving property testing problems in the streaming setting (as opposed to the more familiar query algorithms) has been receiving increasing attention 
	lately~\cite{HuangP16,MonemizadehMPS17,PengS18,CzumajFPS19} starting from the pioneering work of~\cite{HuangP16}. On this front, 
	$\BHH$ has been used to prove $n^{1-O(\eps)}$-space lower bounds for $\eps$-property testing of connectivity, cycle-freeness, and bipartiteness~\cite{HuangP16}. 
\end{itemize}

In addition, the $\BHH$ problem has also been used to prove lower bounds in other settings such as distribution testing~\cite{AndoniMN19}, 
distributed computing~\cite{FischerGO17}, property testing~\cite{BalcanLW019}, and sketching~\cite{KallaugherKP18}. We leave this as an interesting open direction to explore the implications of our multi-round lower bounds for $\OMC$ in these other models.

\end{document}